
\documentclass[12pt]{article}
%

\topmargin -20mm
\textwidth 160mm
\textheight 220mm
\evensidemargin 0mm
\oddsidemargin 0mm
\parskip=\medskipamount

\arraycolsep 2pt

\DeclareFontEncoding{LS1}{}{}
\DeclareFontSubstitution{LS1}{stix}{m}{n}
\DeclareSymbolFont{stixsymbols}{LS1}{stixscr}{m}{n}
\SetSymbolFont{stixsymbols}{bold}{LS1}{stixscr}{b}{n}
\DeclareMathSymbol{\kay}{\mathalpha}{stixsymbols}{"6B}
\DeclareMathSymbol{\hay}{\mathalpha}{stixsymbols}{"68}



\usepackage{verbatim}
\usepackage{tensor}
\usepackage{amsmath,amssymb,calc, amsthm,bbm, epsfig,psfrag, mathtools}
\usepackage{latexsym,bm,amsfonts}
\usepackage{float, caption}
\usepackage{booktabs}
\usepackage{siunitx}

\newcommand{\bk}[1]{\langle #1 \rangle}

\newcommand{\bR}{\mathbb{R}}

\newtheorem{thm}{Theorem}[section]
\newtheorem{prop}{Proposition}[section]
\newtheorem{conjecture}{Conjecture}

\usepackage{graphicx, enumerate}
\usepackage{enumitem}

\newcommand{\ul}[1]{{\underline{#1}}}
 \usepackage[all,cmtip]{xy}
\usepackage[numbers,sort&compress]{natbib}
\usepackage[dvipsnames]{xcolor}
\usepackage{mathrsfs} 
\usepackage{graphicx,tikz,tikz-cd}

\usepackage{braket}
\usepackage{physics}

\allowdisplaybreaks

\newcommand{\Comment}[1]{{}}
\definecolor{darkblue}{rgb}{0.15,0.35,0.55}
\definecolor{reddish}{rgb}{0.65, 0.2, 0.2}
\usepackage[linktocpage=true]{hyperref}
\hypersetup{
colorlinks=true,
citecolor=darkblue,
linkcolor=reddish,
urlcolor=darkblue,
pdfauthor={},
pdftitle={},
pdfsubject={}
}

\makeatletter
\renewcommand\section{\@startsection {section}{1}{\z@}%
                                   {-3.5ex \@plus -1ex \@minus -.2ex}
                                   {2.3ex \@plus.2ex}%
                                   {\normalfont\large\bfseries}}
\renewcommand\subsection{\@startsection{subsection}{2}{\z@}%
                                     {-3.25ex\@plus -1ex \@minus -.2ex}%
                                     {1.5ex \@plus .2ex}%
                                     {\normalfont\bfseries}}
\makeatother




\newfont{\goth}{ygoth.tfm scaled 1200}                   

 \numberwithin{equation}{section}


\newcommand{\overbar}[1]{\mkern 1.5mu\overline{\mkern-1.5mu#1\mkern-1.5mu}\mkern 1.5mu}


\begin{document}
\begin{titlepage}
\begin{flushright}
\today
\end{flushright}
\vspace{2cm}

\begin{center}
{\Large \bf 
Quantum Mechanics and Neural Networks
}
\end{center}

\begin{center}

{\bf
Christian Ferko${}^{a, b}$ and James Halverson${}^{a, b}$
} \\
\vspace{5mm}

\footnotesize{
${}^{a}$
{\it 
Department of Physics, Northeastern University, Boston, MA 02115, USA
}
 \\~\\
${}^{b}$
{\it 
The NSF Institute for Artificial Intelligence
and Fundamental Interactions
}
}
\vspace{3mm}
~\\
\texttt{c.ferko@northeastern.edu,
j.halverson@northeastern.edu}\\
\vspace{3mm}

\end{center}

\begin{abstract}
\baselineskip=14pt

\noindent 

We demonstrate that any Euclidean-time quantum mechanical theory may be represented as a neural network, ensured by the Kosambi-Karhunen-Lo\`eve theorem, mean-square path continuity, and finite two-point functions. The additional constraint of reflection positivity, which is related to unitarity, may be achieved by a number of mechanisms, such as imposing neural network parameter space splitting or the Markov property. Non-differentiability of the networks is related to the appearance of non-trivial commutators. Neural networks acting on Markov processes are no longer Markov, but still reflection positive, which facilitates the definition of deep neural network quantum systems. We illustrate these principles in several examples using numerical implementations, recovering classic quantum mechanical results such as Heisenberg uncertainty, non-trivial commutators, and the spectrum.

\end{abstract}
\vspace{5mm}

\vfill
\end{titlepage}

\newpage
\renewcommand{\thefootnote}{\arabic{footnote}}
\setcounter{footnote}{0}

\tableofcontents{}
\vspace{1cm}
\bigskip\hrule


\allowdisplaybreaks

\clearpage 
\section{Introduction}

An emerging connection between field theory and neural networks provides a physical approach to understanding machine learning (ML) and a new method for defining field theories. The essential idea is that a neural network (NN) is a family of functions $\phi_\theta$ accompanied by a density $P(\theta)$ on parameters,\footnote{$P(\theta)$ could be the density at initialization, or after flowing for some amount of learning time $t$.} which together define correlation functions
\begin{equation}
G^{(n)}(x_1,\dots,x_n) = \bk{\phi(x_1)\dots \phi(x_n)} = \int d\theta \, P(\theta)\,  \phi_\theta(x_1)\dots \phi_\theta(x_n) \, ,
\end{equation}
where the last equality provides a method of studying correlators using a NN description, independent of knowledge of an action. The functional form of $\phi_\theta$ is known as the \emph{architecture}, and henceforth we call the NN description of a field theory a NN-FT. Together, the data $(\phi_\theta, P(\theta))$ furnishes an ensemble of fields and associated correlation functions, the minimal defining data of a field theory. Beginning from this data, one might wish to engineer additional interesting features, such as Gaussianity \cite{Halverson:2020trp,Halverson:2021aot}, interactions \cite{Demirtas:2023fir}, symmetries \cite{Maiti:2021fpy}, e.g. conformal symmetry \cite{Halverson:2024axc}, the axioms of constructive quantum field theory (QFT), or other features relevant for studying physical systems.\footnote{Some of these ideas have parallels in the ML community, e.g. \cite{neal, williams, Matthews2018GaussianPB, GarrigaAlonso2019DeepCN, yangTPorig, yangTP1, yangTP2, Novak2018BayesianCN} for understanding neural networks at infinite width, \cite{Yaida2019NonGaussianPA, Halverson_2021, Roberts_2022, Naveh_2021, antognini2019finite, Demirtas:2023fir} for finite-width corrections, \cite{Dyer2020AsymptoticsOW, maiti2021symmetryviaduality, erdmenger2021quantifying, grosvenor2022edge, Erbin_2022, erbin2022renormalization, maloney2022solvablemodelneuralscaling,banta2023structures,10.1088/2632-2153/adc872} for more general field theory techniques applied to neural networks, and \cite{halverson2021building} as well as references above advocating applications in physics. For a lengthy summary of some of the references, see the Introduction of \cite{Demirtas:2023fir}; for lectures in the ``for-ML" direction, see \cite{halverson2024tasilecturesphysicsmachine}; and for a quick summary of essential NN-FT results, see J.H.'s talk at Strings 2025.}

When are these theories quantum? Since a NN is usually defined on Euclidean space, and therefore the NN-FT defines Euclidean correlators, the question amounts to asking when Euclidean correlators admit a nice-enough quantum Lorentzian continuation.\footnote{The realization of NN-FTs satisfying the OS axioms was proposed in \cite{Halverson:2021aot}, with Gaussian examples.} This question is central in constructive field theory, where the Osterwalder-Schrader (OS) axioms \cite{Osterwalder:1973dx,Osterwalder:1974tc} are necessary and sufficient conditions on Euclidean correlators to ensure that the Lorentzian continuation satisfies the Wightman axioms \cite{osti_4606723}.\footnote{It is also natural to ask how these relate to other constructions. Theories defined by certain local Lagrangians \cite{glimm2012quantum} or satisfying Euclidean CFT axioms \cite{Kravchuk:2021kwe} are known to satisfy the OS axioms.} An essential role is played by reflection positivity (RP), a Euclidean constraint that ensures unitarity in the quantum theory. A potential drawback of the OS axioms is that Lorentz invariance is baked into the axioms, restricting to a specific type of quantum field theory, rather than the general case. Nevertheless, RP may be studied outside of the OS context, providing a more general notion of when a Euclidean system is, in fact, quantum.

In this paper we restrict our attention to quantum \emph{mechanics}, considering fields $\phi_\theta$ on $\bR^d$ with $d=1$ and renaming them $x ( t ) = \phi_\theta ( t )$. Instead of asking when a Euclidean field theory is the continuation of a QFT, we ask when a stochastic process (SP) ($d=1$ Euclidean field theory) is the continuation of a quantum mechanics theory. Since we will aim for clarity in the main text, rather than brevity, we state our main results here:
\begin{enumerate}
\item \textbf{Universality of NN Approach to QM.} Mean-square path continuity and the K\"all\'en-Lehmann spectral representation of QM ensure that QM systems satisfy the assumptions of the Kosambi-Karhunen-Lo\`eve theorem of stochastic processes. This gives a statistical decomposition of the paths $x(t)$ that may be interpreted as a NN.

\item \textbf{Mechanisms for Unitarity and Reflection Positivity.} 

We develop a ``parameter splitting'' mechanism sufficient for RP, and provide a method by which any seed architecture may be turned into an architecture satisfying RP. However, the method breaks translation invariance, which is required by the OS axioms in QM, unless the splitting (which depends on $t$) is engineered for all $t$.

Alternatively, Markov processes are RP. We 
demonstrate that a neural network acting on any Markov process may not be Markov, but is still RP.

Both of these mechanisms have a common feature: the non-differentiability of paths, which is essential for obtaining non-trivial commutators and Heisenberg uncertainty.

\item \textbf{Deep NN-QM: A Prescription for Defining Quantum Systems.} Since any neural network acting on a Markov or RP process yields an RP process, deep NNs provide a means of defining a vast array of Euclidean quantum systems. We instantiate this idea numerically using the Ornstein-Uhlenbeck process (the Euclidean analog of quantum harmonic oscillator) and a variety of NNs acting on it, studying commutators, Heisenberg uncertainty, and the spectrum in each example.
\end{enumerate}

Our methodology shares some ingredients with other techniques for understanding quantum theories using neural networks or stochastic processes. For instance, see \cite{Hashimoto:2024aga} for another approach to representing quantum systems via neural networks, focusing on models that can be defined by a path integral; in this work, we do not assume a path integral representation for the models under consideration. Another strategy, called stochastic quantization \cite{PhysRev.150.1079,Fényes1952,DELAPENAAUERBACH1967603,Parisi:1980ys}, realizes a $d$-dimensional quantum theory via a $(d+1)$-dimensional model with an extra ``fictitious time'' coordinate. For instance, the stochastic quantization of a $1d$ model describing a particle trajectory $x(t)$ introduces a second time coordinate $s$ and postulates a stochastic differential equation $\frac{\partial x(s, t)}{\partial s} = - \frac{\delta S_E [ x ]}{\delta x ( s, t )} + \eta ( x, s )$, where $\eta ( x, s )$ is a noise term and $S_E [ x ]$ is a Euclidean action. In contrast, we work with a single time coordinate $t$ and do not assume the existence of a Euclidean action.\footnote{Another difference between these two approaches is that stochastic quantization appears to be in tension with reflection positivity, at least in some examples \cite{Jaffe2015}. In contrast, we will be able to engineer reflection positivity in our framework by any of several mechanisms.}

This paper is organized as follows. In Section \ref{sec:quantum_sp_nn} we develop the connection between NNs, SPs, and QM, and present the universality result. Section \ref{sec:rp} introduces RP and develops the associated mechanisms and results. Section \ref{sec:deep} presents a novel construction of a large class of RP quantum systems using deep NNs and numerically implement examples. In Section \ref{sec:conclusion}, we summarize our results and present directions for future research. In Appendix \ref{app:commutators}, we present an ancillary discussion, reviewing the observation that commutation relations in QM arise due to the contribution of nowhere-differentiable paths.

\section{Quantum Systems, Stochastic Processes, \& Neural Networks}\label{sec:quantum_sp_nn}

We begin by specifying what we mean by a quantum mechanical theory, at least in Euclidean signature (or imaginary time). There are various conditions that one might wish to impose on a model in order to reasonably describe it as ``quantum,'' such as the existence of uncertainty relations, entanglement, or other cherished features of conventional quantum mechanics. However, at least initially, we will adopt a ``minimal'' requirement of a quantum system which is motivated by restricting the usual defining data that characterizes the local operators\footnote{This data does not \emph{completely} define a quantum field theory, since there may be additional non-local operators, such as Wilson lines or symmetry defect operators, in the spectrum of the theory.} of a $d$-dimensional Euclidean quantum field theory to $d = 1$ dimension, along with two mild assumptions that are motivated by basic physical principles. As we will see, it is natural to speak about such a minimal quantum theory using the language of stochastic processes and the related machinery of neural networks.

\subsection{Requirements of Quantum Models}

When discussing a Euclidean field theory in $d$ spacetime dimensions, 
the local observables of fundamental interest are the correlation functions
\begin{align}
    G^{(n)} ( y_1, \ldots, y_n ) = \langle \phi ( y_1 ) \ldots \phi ( y_n ) \rangle \, ,
\end{align}
where $\phi$ is a local operator in the spectrum of the theory (which may be a scalar or carry additional indices that are suppressed) and the quantities $y_i \in \mathbb{R}^d$, for $i = 1, \ldots, n$, are spacetime points at which the field operators have been inserted. The correlators $G^{(n)}$ are also referred to as Schwinger functions. These are the primary objects of study in constructive approaches to quantum field theory, where one investigates whether these functions satisfy certain conditions such as the Osterwalder-Schrader axioms, which guarantee that a Euclidean theory may be analytically continued to a corresponding Lorentzian theory. When $d \geq 2$, it is natural to refer to the theory associated with such correlation functions as a \emph{field} theory, since the degree of freedom $\phi$ varies in both Euclidean time and in space. 

Restricting to the case $d=1$ that will be relevant for quantum mechanics, we instead consider ``fields'' --- such as the position $x(t)$ of a quantum particle subject to a harmonic potential --- which depend on a time variable $t \in \mathbb{R}$, but not on any spatial coordinates. Therefore, in a quantum mechanical theory, the natural observables are instead correlators
\begin{align}\label{qm_correlator}
    G^{(n)} ( t_1 , \ldots , t_n ) = \langle x ( t_1 ) \ldots x ( t_n ) \rangle \, ,
\end{align}
which depend on $n$ time coordinates $t_i \in \mathbb{R}$. We view the correlation function (\ref{qm_correlator}) as the expectation value, in some probability distribution, of a product of random variables $x ( t_i )$ for $i = 1 , \ldots , n$. A theory of quantum mechanics should yield a prescription for computing any such expectation value, for any choice of times $t_i$ and for any finite $n$.

However, not all collections of correlation functions $G^{(n)}$ are suitable to be interpreted as arising from physically realistic quantum theories. For instance, we envision the random variables $x ( t_i )$ as representing points on the trajectories of quantum particles. On physical grounds, any such trajectory should be a continuous function of Euclidean time, since a particle should not be able to ``jump'' instantaneously from one position to another. We will therefore impose a technical condition called ``mean-square continuity'' on the distribution of random variables $x(t_i)$ in our quantum model. Formally, we require that
\begin{align}\label{mean_square_continuity}
    \lim_{t \to s} \left\langle \, \left| x ( t ) - x ( s ) \right|^2 \, \right\rangle = 0
\end{align}
for all $s$. In particular, this condition implies that the sample paths $x(t)$ are continuous and that the two-point function $G^{(2)} ( t, s )$ is a continuous function of $t$ and $s$.\footnote{We impose the stronger condition of mean-square continuity, rather than merely assuming continuity of sample paths, because the latter is not strong enough to imply continuity of the two-point function.}

A second basic requirement comes from the K\"all\'en-Lehmann spectral representation \cite{Kallen:1952zz,Lehmann1954berEV} of the two-point function, which is a non-perturbative result that holds for quite general quantum field theories. In one Euclidean spacetime dimension, this representation for the two-point function $G^{(2)} ( t, s )$ in any theory of quantum mechanics takes the form
\begin{align}\label{KL_QM}
    G^{(2)} ( t, s ) = \int_0^{\infty} d m \, \rho ( m ) e^{-m | t - s |} \, ,
\end{align}
where $\rho ( m )$ is a finite Borel measure, which means that $\rho$ is positive definite and that its integral is finite (though not necessarily equal to $1$, in which case it is called a probability measure). To see why this assumption is reasonable, it is convenient to consider a conventional quantum system with a discrete spectrum that is bounded below. After shifting the Hamiltonian by a constant so that the ground state energy is zero, one may insert a complete set of energy eigenstates $1 = \sum_n \ket{n} \bra{n}$ to find
\begin{align}\label{spectral_two_point}
    \langle x ( t ) x ( s ) \rangle = \sum_{n=1}^{\infty} e^{- E_n | t - s | } \left| \langle 0 \mid x \mid n \rangle \right|^2 = \int_0^{\infty} d m \, \rho ( m ) e^{-m | t - s |} \, ,
\end{align}
where
\begin{align}
    \rho ( m ) = \sum_{n=1}^{\infty} \left| \langle 0 \mid x \mid n \rangle \right|^2 \delta \left( m - E_n \right) \, .
\end{align}
Although this result is most often applied in higher-dimensional field theory, the version of this expansion for $1d$ Euclidean systems was already used in some early works such as \cite{nmj/1118796540}. Since $\rho ( m )$ must be normalizable, equation (\ref{KL_QM}) implies that the two-point function $G^{(2)} ( t, s )$ must be finite for any values of $t$ and $s$. Indeed, in the coincident-point limit one has $G^{(2)} ( t, t ) = \int_0^\infty d m \, \rho ( m ) < \infty$, and for any other $t \neq s$ the integrand is positive-definite and upper-bounded by $\rho ( m )$, so $G^{(2)} ( t, s )$ converges to a finite value for any $t, s$.

These two basic physical inputs 
are encoded in the following two properties.
\begin{enumerate}[label = (P\arabic*)]
    \item\label{assumption_one} The expectation values in our quantum model obey the mean-square continuity condition (\ref{mean_square_continuity}). This implies that the two-point\footnote{For technical reasons, mean-square continuity does not imply continuity of \emph{all} correlators $G^{(n)}$, but only of the two-point function $G^{(2)}$. However, the latter is all we will need in what follows.} function $G^{(2)} ( t, s )$ is a continuous function of its arguments, and that the sample paths $x(t)$ are continuous in $t$.

    \item\label{assumption_two} The two-point function $G^{(2)} ( t, s )$ is finite for all values of the input times $t$, $s$.
\end{enumerate}
We therefore adopt the definition that a ``minimal quantum mechanical model'' or MQM is an assignment, for any $n \in \mathbb{N}$ and to every collection of points $t_1 , \ldots, t_n$, of a joint probability distribution for a collection of random variables $x ( t_1 )$, $\ldots$, $x ( t_n )$, such that the expectation values in this distribution obeys \ref{assumption_one} and \ref{assumption_two}.

Here we have focused on Schwinger functions $G^{(n)}$ rather than other common quantities of interest in quantum theories, such as the eigenvalues of the Hamiltonian. However, given the collection of all such correlation functions (\ref{qm_correlator}) for arbitrary times, it is possible to reconstruct other desired observables. For instance, the energy spectrum can be extracted by performing exponential fits of the fall-offs of correlation functions at large time separations. In fact, for quantum systems that satisfy the Osterwalder-Schrader axioms (which we will discuss in Section \ref{sec:rp}) and can thus can be continued to real time, the Osterwalder-Schrader reconstruction theorem guarantees that the correlation functions $G^{(n)}$ are sufficient to reconstruct the Hilbert space and all local observables of the real-time theory \cite{Osterwalder:1973dx}. Therefore, at least for models obeying the OS axioms, we suffer essentially no loss of generality by restricting attention to correlation functions.

Furthermore, although here we primarily discuss quantum systems with a single degree of freedom, one can straightforwardly generalize this definition to models with a collection of bosonic coordinates $x^i ( t )$, to models with anticommuting fields $\psi^i ( t )$ (such as supersymmetric quantum mechanics), and to other settings. Let us also remark that seemingly ``discrete'' quantum models, such as a two-level system, can be accommodated in this framework. For instance, to describe a qubit, we choose three random variables $x^i ( t ) = \left( \phi ( t ) , \theta ( t ) , \psi ( t ) \right)$ using the Euler angle representation, as in the standard construction of the path integral for a spin-$\frac{1}{2}$ particle (see e.g. section 3.3 of \cite{Altland:2006si} for a review).

We refer to the models defined above as ``minimal'' because, depending on which properties one insists upon as necessary for a theory to qualify as genuinely \emph{quantum}, these conditions may not be sufficient, although we believe it is fair to say that they are necessary. For example, one might reasonably adopt the view that a Euclidean (imaginary time) model should only be called ``quantum'' if it can be analytically continued to a Lorentzian (real time) model with unitary time evolution. We will eventually demand that our systems satisfy additional properties, such as time translation symmetry and reflection positivity, which ensure the existence of such a Lorentzian continuation.

\subsection{Quantum Models as Stochastic Processes}

Our definition of a minimal quantum system, obeying \ref{assumption_one} and \ref{assumption_two}, is a special case of the mathematical notion of a stochastic process. In modern treatments, the general definition of a stochastic process is phrased in terms of a $\sigma$-algebra $\mathcal{F}$ of events in a set $\Omega$ called the sample space, along with a measure $\mathbb{P}$ which assigns probability to elements of $\mathcal{F}$. However, for our purposes it will suffice to work with a more elementary definition: by a \emph{stochastic process}, we mean a collection of real-valued random variables
\begin{align}
    \left\{ x ( t ) \mid t \in T \right\} \, ,
\end{align}
where $T$ is the \emph{index set} for the process, which we typically choose to be either $\mathbb{R}$ or a finite interval $[a, b]$. We will often abbreviate a stochastic process $x ( t )$ with a subscript, as in $x_t$. Since for each fixed $t_i \in \mathbb{R}$ the quantity $x ( t_i )$ is a random variable, one can speak of the joint probability distribution\footnote{To avoid confusion, we will always use blackboard bold letters like $\mathbb{P}$ for joint probability distributions over the \emph{outputs} of a stochastic process, to avoid confusion with joint distributions over neural network \emph{parameters} $\theta$ which we will introduce shortly. The latter are written with undecorated letters like $P ( \theta )$.}
\begin{align}\label{stochastic_joint_pde}
    \mathbb{P} \left( x ( t_1 ) , \ldots , x ( t_n ) \right)
\end{align}
for any finite collection of these random variables. The set of all such joint probability distributions, for any collection of $n$ variables $t_i$ in the index set $T$ and for any $n$, is the defining data of the stochastic process $x_t$. 

We therefore see that the definition of a minimal quantum theory given above is an example of a stochastic process, since the defining data for both of these objects is an assignment of a joint probability distribution (\ref{stochastic_joint_pde}) to every finite collection of times $t_i$. The only distinction between these notions is that a minimal quantum theory must satisfy conditions \ref{assumption_one} and \ref{assumption_two}, which means that such minimal quantum models are a proper subset of stochastic processes. More precisely, a MQM is a stochastic process $x_t$ which is mean-square continuous, i.e. it obeys (\ref{mean_square_continuity}), and \emph{square-integrable}, which implies that the correlation function $G^{(2)} ( t, s )$ is finite for every pair of inputs $t$ and $s$, ensuring \ref{assumption_two}.

The preceding definition referred only to the values $x ( t_i )$ of the process $x_t$ at specific points $t_i \in \mathbb{R}$. However, a complementary viewpoint of a stochastic process is as a probability distribution over the space of functions $x : \mathbb{R} \to \mathbb{R}$. From this perspective, one performs a random ``draw'' of a function $x_t$, and then the values of this sample function at points $t_i$ furnish us with realizations of the random variables $x ( t_i )$. 
This interpretation of a stochastic process is aligned with our usual intuition in the path integral formulation of quantum mechanics, where we view expressions like
\begin{align}
    \langle x ( t_1 ) x ( t_2 ) \rangle = \int \mathcal{D} x \, x ( t_1 ) x ( t_2 ) e^{- S [ x ]}
\end{align}
as formally computing an expectation value with respect to a probability measure on the space of paths which is determined by the action $S[x]$ and path integral measure $\mathcal{D} x$. However, we note that the stochastic processes (or minimal quantum systems) that we consider here are \emph{not} assumed to be defined by an action: instead we work directly with an abstract probability distribution (\ref{stochastic_joint_pde}) which plays the role of the combination $\mathcal{D} x \, e^{- S [ x ] }$ and thus captures the combined information (both classical and quantum, whereas the action $S[x]$ alone would define only the classical dynamics) of the model. 

In some mathematical treatments of stochastic processes, one avoids this picture of a ``random function'' (which requires imposing additional regularity assumptions on $x_t$ for technical reasons), preferring to speak only of the random variables $x ( t_i )$ associated with specific points in time. We will not concern ourselves with such technical subtleties in this work, and instead use the two perspectives interchangeably.

If one wishes to take this interpretation of a stochastic process $x_t$ as a random function seriously, it is natural to ask whether one can parameterize the family of functions that are randomly generated by the process $x_t$ in terms of some ``coordinates'' which are real-valued quantities that label a particular function in the family. This leads us to the other main definition of interest in this work, which is that of a neural network.

\subsection{Universality of Neural Networks}

For the purposes of this work,\footnote{Of course, one can consider more general neural networks $\phi_\theta : \mathbb{R}^n \to \mathbb{R}^m$, but since we focus on single-particle quantum mechanics we will take $n = m = 1$.} a \emph{neural network} is a parameterized family of functions $\phi_\theta : \mathbb{R} \to \mathbb{R}$ determined by a set of parameters $\theta$, which are random variables with a joint probability distribution $P ( \theta )$. There can, in general, be many such parameters $\theta^i$ for $i = 1 , \ldots, N$ (or infinitely many), but we will suppress indices and write $\theta$ to indicate the collection of all parameters. Performing a random draw of the parameters $\theta$ then gives a specific instance of a random function $\phi_\theta$. A simple example of a neural network is
\begin{align}\label{cos_example}
    \phi_\theta ( t ) = \cos ( t + \theta ) \, , \qquad \theta \sim U [ 0 , 2 \pi ] \, ,
\end{align}
where the latter notation indicates that $\theta$ is drawn from a uniform distribution on the interval $[0 , 2 \pi]$. We see that the function $\phi_\theta$ therefore inherits randomness from the random variable $\theta$; a particular draw of the parameter, say $\theta = \pi$, then determines an instance of the neural network, here $\phi_\pi ( t ) = \cos ( t + \pi )$. We refer to the fixed functional form that describes the dependence of $\phi_\theta$ on its parameters, such as the cosine function of (\ref{cos_example}), as the \emph{architecture} of the neural network, and we refer to the probability distribution $P ( \theta )$ over the parameters $\theta$ as the \emph{parameter density}.

Clearly every neural network satisfies the definition of a stochastic process, since taken together, the architecture and joint probability distribution $P ( \theta )$ over the parameters determine a joint probability distribution
\begin{align}
    \mathbb{P} \left( \phi_\theta ( t_1 ) , \ldots, \phi_\theta ( t_n ) \right)
\end{align}
for the collection of random variables $\phi_\theta ( t_i )$ given any finite collection of times $t_i$. We say that a stochastic process $x_t$ which admits a representation using a neural network, so that $x_t = \phi_\theta ( t )$ with some parameter density $P ( \theta )$, is a \emph{neural network stochastic process} or NN-SP; similarly, any minimal quantum model $x_t = \phi_\theta ( t )$ which can be represented as a neural network is said to be a \emph{neural network quantum mechanics} or NN-QM. Said differently, a NN-QM is a NN-SP that enjoys properties \ref{assumption_one} and \ref{assumption_two}. For any NN-SP, the correlation functions (\ref{qm_correlator}) can be computed using parameter space integrals as
\begin{align}\label{parameter_space_correlator}
    G^{(n)} ( t_1 , \ldots, t_n ) = \int d \theta \, P ( \theta ) \, \phi_\theta ( t_1 ) \ldots \phi_\theta ( t_n ) \, ,
\end{align}
where the integral runs over all of the parameters $\theta$ but again we suppress indices.

Given that every neural network $\left( \phi_\theta , P ( \theta ) \right)$ determines a stochastic process, one might ask about the opposite direction: does every stochastic process $x_t$ admit a neural network representation in terms of an architecture $\phi_\theta$ and a parameter density $P ( \theta )$? One could answer this question in a rather trivial way by choosing an uncountably infinite set of parameters $\theta^t$, indexed by $t \in \mathbb{R}$, and then declaring that $\phi_\theta ( t ) = \theta^t$. This is simply choosing a separate random variable for the output of the neural network at each point. Since the data of the stochastic process $x_t$ involves a joint probability distribution (\ref{stochastic_joint_pde}) for each collection of points $t_i$, one could then define the parameter density $P ( \theta )$ to agree with the joint distribution of $x_t$ on any finite set of inputs. However, this construction is unwieldy and not very interesting.

One might instead ask whether every stochastic process can be represented via a neural network in a \emph{useful} way, with only countably many parameters and in terms of an architecture which is amenable to theoretical analysis. The answer to this question is provided by the Kosambi-Karhunen-Lo\`{e}ve theorem \cite{kosambi1943statistics,karhunen1947,Loeve1948}, which we now recall.

\begin{thm}[Kosambi-Karhunen-Lo\`{e}ve]\label{KKL_theorem}
    Let $x_t$ be a square-integrable stochastic process defined on an interval $[a, b]$ such that $\langle x ( t ) \rangle = 0$ for all $t \in [ a, b ]$, and suppose that the two-point function $\langle x ( t_1 ) x ( t_2 ) \rangle$ is continuous. Then $x_t$ admits a decomposition
    \begin{align}\label{KKL_decomposition}
        x_t = \sum_{k=1}^{\infty} \theta^k e_k ( t ) \, ,
    \end{align}
    where $e_k$ is a set of continuous, orthogonal real-valued functions on $[a, b]$ and $\theta^k$ are a collection of pairwise uncorrelated (but not necessarily\footnote{We remind the reader that two random variables $x$ and $y$ are uncorrelated if $\langle x y \rangle = \langle x \rangle \langle y \rangle$, and they are independent if their joint probability distribution $P ( x, y ) = P ( x ) P ( y )$ factorizes into a product of marginals. Independent variables are uncorrelated, but uncorrelated variables need not be independent.} independent) random variables.
\end{thm}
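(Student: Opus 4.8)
The plan is to prove the theorem through the spectral theory of the covariance operator, in the classical Mercer-based fashion. The natural starting object is the covariance kernel $K(t,s) = \bk{x(t) x(s)}$, which by the zero-mean hypothesis coincides with the autocovariance and is, by assumption, continuous on the compact square $[a,b] \times [a,b]$. Two structural facts are immediate and will do all the work. First, $K$ is symmetric, $K(t,s) = K(s,t)$. Second, it is positive semi-definite, since for any $f \in L^2[a,b]$ one has $\int\int K(t,s) f(t) f(s)\, dt\, ds = \bk{\left( \int f(t) x(t)\, dt \right)^2} \geq 0$. Continuity on a compact set makes $K$ bounded and square-integrable, so the integral operator $(T_K f)(t) = \int_a^b K(t,s) f(s)\, ds$ is Hilbert-Schmidt, hence compact, and is moreover self-adjoint and positive.

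With these properties in hand, I would invoke the spectral theorem for compact self-adjoint operators to produce a countable orthonormal family of real eigenfunctions $e_k \in L^2[a,b]$ (real because $K$ is real and symmetric) with non-negative eigenvalues $\lambda_k$ satisfying $T_K e_k = \lambda_k e_k$. The eigenfunctions with nonzero eigenvalue inherit continuity from the identity $e_k(t) = \lambda_k^{-1} \int_a^b K(t,s) e_k(s)\, ds$ together with the continuity of $K$, which establishes the ``continuous, orthogonal'' clause of the statement. The essential strengthening is Mercer's theorem, which uses precisely the positive-definiteness and continuity of $K$ to upgrade the bare $L^2$ eigenexpansion of the kernel to an absolutely and uniformly convergent one, $K(t,s) = \sum_k \lambda_k e_k(t) e_k(s)$.

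I would then define the candidate coefficients by projecting the process onto the eigenbasis, $\theta^k = \int_a^b x(t) e_k(t)\, dt$, the integral understood in the mean-square sense (well-defined since $x_t$ is square-integrable and each $e_k$ lies in $L^2$). A direct computation verifies the statistical requirements: $\bk{\theta^k} = 0$ by the zero-mean assumption, and $\bk{\theta^j \theta^k} = \int\int K(t,s) e_j(t) e_k(s)\, dt\, ds = \lambda_k \int e_j e_k = \lambda_k \delta_{jk}$, so the $\theta^k$ are pairwise uncorrelated with variance $\lambda_k$. To close the argument I would bound the mean-square truncation error of the partial sums $x_t^{(N)} = \sum_{k=1}^N \theta^k e_k(t)$, obtaining $\bk{ \lvert x_t - x_t^{(N)} \rvert^2 } = K(t,t) - \sum_{k=1}^N \lambda_k e_k(t)^2$, which tends to zero (uniformly in $t$) by the diagonal case of the Mercer expansion, yielding the decomposition (\ref{KKL_decomposition}) in mean square.

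The main obstacle is the step from the spectral decomposition of $T_K$, which on its own gives only $L^2$-convergence of the kernel series, to the pointwise, uniformly convergent Mercer expansion that actually forces the mean-square convergence of the process decomposition. It is exactly the continuity hypothesis on $G^{(2)}$ (property \ref{assumption_one}) and the positive-definiteness of the covariance that license this passage, which is why those are the precise assumptions appearing in the statement; the square-integrability hypothesis plays the complementary role of ensuring that each projection $\theta^k$ is a well-defined, finite-variance random variable. Once Mercer's theorem is secured, the verification of orthogonality, uncorrelatedness, and convergence is routine.
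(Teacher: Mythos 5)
Your proof is correct: it is the classical Mercer-based argument for the Kosambi--Karhunen--Lo\`eve theorem, and every step (positivity and compactness of the covariance operator, continuity of the eigenfunctions, the uncorrelatedness computation $\langle \theta^j \theta^k \rangle = \lambda_k \delta_{jk}$, and the truncation-error identity $\langle | x_t - x_t^{(N)} |^2 \rangle = K(t,t) - \sum_{k=1}^N \lambda_k e_k(t)^2$ combined with uniform convergence of the diagonal Mercer series) checks out. Note that the paper does not prove this theorem itself --- it states it and cites the original references --- so there is nothing to compare against; your argument is the standard one found in those sources and correctly identifies the role of each hypothesis (continuity of $G^{(2)}$ for Mercer, square-integrability for the finite-variance projections).
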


First let us discuss the assumptions of the KKL theorem. Two of the conditions of the theorem -- continuity of the two-point function and square-integrability -- are implied by \ref{assumption_one} and \ref{assumption_two}, which are satisfied by any MQM by definition. The last requirement of Theorem \ref{KKL_theorem} is that the stochastic process $x_t$ has zero expectation value for all $t$, which is not a condition which we impose in general. However, this is easy to remedy. Consider some $x_t$ which is square-integrable and has a continuous two-point function but which does not satisfy the zero-mean condition. Any square-integrable stochastic process has finite first moment $\langle x ( t ) \rangle$ for all $t$, by the Cauchy-Schwarz inequality. We define
\begin{align}
    \hat{x}_t = x_t - \langle x ( t ) \rangle \, .
\end{align}
Then $\hat{x}_t$ satisfies all of the assumptions of the Kosambi-Karhunen-Lo\`{e}ve theorem. It follows that the original stochastic process can be written as the sum of the KKL decomposition for $\hat{x}_t$ and the deterministic function $\langle x ( t ) \rangle$. Viewing the expansion coefficients $\theta^k$ appearing in (\ref{KKL_decomposition}) as neural network parameters, and choosing the architecture
\begin{align}\label{general_QM_NN}
    \phi_\theta ( t ) = \langle x ( t ) \rangle + \sum_{k=1}^{\infty} \theta^k e_k ( t )  \, ,
\end{align}
we see that every square-integrable stochastic process $x_t$ with continuous two-point function admits a neural network representation, albeit generically one with a countably infinite set of parameters. As we mentioned, every minimal quantum model obeys \ref{assumption_one} and \ref{assumption_two} and thus satisfies the conditions of square-integrability and continuity of $G^{(2)}$, so we conclude that every MQM is also a NN-QM. In fact, it is now redundant to use the term ``NN-QM'' since we have proven that the space of NN-QMs (i.e. stochastic processes obeying \ref{assumption_one} and \ref{assumption_two}) precisely coincides with the space of MQMs. A Venn diagram of various relevant subsets of the set of stochastic processes appears in Figure \ref{venn}.

\begin{figure}[htbp]
\centering
\includegraphics[width=0.6\linewidth]{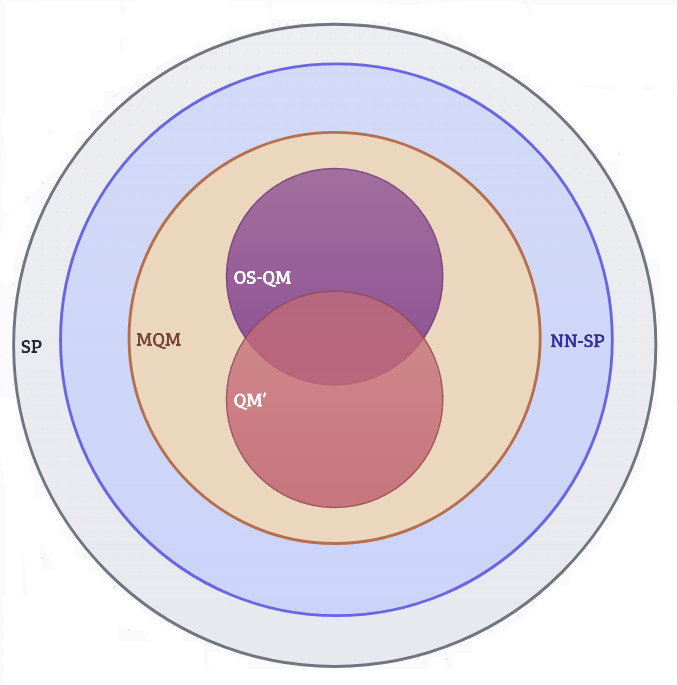}
\caption{A Venn diagram illustrating the various definitions used in this section. Within the space of all stochastic processes (SP), a subset (NN-SP) admit a representation as a neural network. We have proven that every MQM, which is a stochastic process obeying \ref{assumption_one} and \ref{assumption_two}, has a neural network description, and thus $\mathrm{MQM} \subset \mathrm{NN}\text{-}\mathrm{SP}$. One might impose additional restrictions upon minimal quantum models, such as the Osterwalder-Schrader axioms (OS-QM) or another set of conditions defining a notion of quantum mechanics of one's choosing (QM'), which carve out different subsets of MQM.}
\label{venn}
\end{figure}

This observation furnishes us with a sort of ``quantum universal approximation theorem,'' in the sense that any MQM can be approximated to arbitrary accuracy by a truncated architecture of the form (\ref{general_QM_NN}) which includes finitely many of the terms in the sum.\footnote{See \cite{Zhu_2023} for a related but distinct ``stochastic universal approximation theorem'' which establishes that recurrent neural networks can approximate processes obtained as the solution to a stochastic differential equation (SDE) to arbitrary accuracy. Solutions to SDEs are a proper subset of all stochastic processes.} As the number of included terms grows larger, this architecture becomes a better and better approximation to the true statistics of the stochastic process. The specific expansion generated by the Kosambi-Karhunen-Lo\`{e}ve theorem, where the $e_k ( t )$ are chosen to be eigenfunctions of a certain linear operator associated with the two-point function of the process, is ``optimal'' in the sense that a truncated KKL expansion gives the smallest mean-square error among all truncated orthonormal expansions of the stochastic process.

Let us emphasize that this result holds for any quantum model obeying \ref{assumption_one} and \ref{assumption_two}, and is not restricted to cases which satisfy additional conditions that we typically assume when modeling most physical quantum systems, such as time translation invariance or reflection positivity (again, the latter leads to unitary time evolution). This is important since each of these standard assumptions about quantum systems is relaxed in some use cases. For instance, in the study of Floquet systems, one does not impose continuous time translation symmetry, instead assuming the weaker condition that the Hamiltonian is periodic in time, i.e. $H ( t + T ) = H ( t )$ for some period $T$. Likewise, when considering open quantum systems, one does not generally have unitary time evolution; rather the dynamics are described by the Lindblad master equation, which models a system that interacts with some environment or ``bath'' and thus evolves in a non-unitary way. Both of these applications can be accommodated within the framework of NN-QM, since we have made only the mild assumptions \ref{assumption_one}, \ref{assumption_two} about the underlying process $x_t$.

Having said this, we will now set aside these ``unconventional'' applications, and for the remainder of this work we will focus on standard quantum systems which enjoy time translation invariance and unitary time evolution. The additional assumptions on $x_t$, or $\phi_\theta$, which guarantee these properties are the subject of the next section.

\section{Continuation to Real Time via Reflection Positivity}\label{sec:rp}

In quantum field theory -- including the case of $(0+1)$-dimensional field theories which we consider here -- the conditions which guarantee that one can analytically continue a Euclidean theory to Lorentz signature are encoded by the Osterwalder-Schrader (OS) axioms \cite{Osterwalder:1973dx}. The OS axioms are phrased as assumptions about the correlation functions (\ref{qm_correlator}) which, in our case, are expectation values taken with respect to the joint probability distribution (\ref{stochastic_joint_pde}) associated with a quantum system $x_t$ or $\phi_\theta ( t )$. Let us briefly recall each of the OS axioms in general, and then specialize to the case of quantum mechanics.

\begin{enumerate}[label = (E\arabic*), start = 0]
    \item\label{temperedness} The Schwinger functions should be tempered distributions away from coincident points. This is a technical assumption that is mostly relevant for theories in $d \geq 2$.

    \item\label{euclidean_invariance} The functions $G^{(n)} ( t_1, \ldots , t_n )$ must be Euclidean covariant. In one spacetime dimension, this means that the stochastic process $x_t$ must enjoy symmetry under time translations $t \to t + \tau$ and reflections $t \to - t$.

    \item\label{symmetric} The Schwinger functions must be symmetric under permutation of the insertion points. This condition is automatic when $G^{(n)} ( t_1, \ldots , t_n )$ is an expectation value, since the product of the random variables appearing under an integral commutes.\footnote{Note that permutation symmetry holds for expectation values of arbitrary operators at \emph{distinct} points, but that coincident-point limits of correlators involving both local operators and their derivatives may depend on ordering of limits; the latter encode, for instance, non-trivial commutation relations.} 
    
    \item\label{cluster_decomposition} Correlation functions should satisfy cluster decomposition, whereby $G^{(p+q)}$ reduces to a product $G^{(p)} G^{(q)}$ when collections of insertion points are taken far apart.

    \item\label{reflection_positivity} The Schwinger functions must exhibit \emph{reflection positivity}, which means that
    \begin{align}\label{rp_condition}
        \langle F ( x ( t_1 ) , \ldots, x ( t_n ) ) \left( F ( x ( - t_1 ) , \ldots, x ( - t_n ) ) \right)^\ast \rangle \geq 0
    \end{align}
    for any collection of positive times $t_i > 0$ and any bounded measurable function $F$ of $n$ variables, which may be complex-valued. Here $\ast$ denotes complex conjugation.
\end{enumerate}

If all of the OS axioms \ref{temperedness} - \ref{reflection_positivity} are satisfied, then we are guaranteed that the Euclidean-time quantum system described by $x_t$ can be analytically continued to a real-time model which inherits certain desirable properties from the imaginary-time system. For instance, the assumption \ref{euclidean_invariance} for the Euclidean model implies the corresponding time-translation and time-reflection symmetries for the real-time model. As we mentioned, the axiom \ref{reflection_positivity} of reflection positivity is translated to the property of unitary time evolution.\footnote{See \cite{cmp/1104162595} for a study of Euclidean field theories where the assumption of reflection positivity is relaxed.}

It will be useful to introduce some terminology in order to characterize whether a given Euclidean quantum system satisfies the various OS axioms. Since we have seen in Section \ref{sec:quantum_sp_nn} that quantum systems, stochastic processes, and neural networks are closely related, we will freely pass between these descriptions in what follows.

Two stochastic processes $x_t$ and $y_t$ are said to be \emph{equivalent} if, for any collection of times $t_1 , \ldots , t_n$, the joint probability distributions $\mathbb{P} \left( x ( t_1 ) , \ldots , x ( t_n ) \right)$ and $\mathbb{P} \left( y ( t_1 ) , \ldots , y ( t_n ) \right)$ are equal. In particular, equivalent stochastic processes have identical correlators. A stochastic process $x_t$ is called \emph{stationary} if for any $\tau \in \mathbb{R}$ the processes $x_t$ and $x_{t + \tau}$ are equivalent, i.e. if for any collection of times $t_1 , \ldots , t_n$ and any fixed $\tau \in \mathbb{R}$ one has
\begin{align}
    \mathbb{P} \left( x ( t_1 ) , \ldots , x ( t_n ) \right) = \mathbb{P} \left( x ( t_1 + \tau ) , \ldots , x ( t_n + \tau ) \right) \, .
\end{align}
Furthermore, we say that a stochastic process $x_t$ is \emph{symmetric} if the processes $x_t$ and $x_{-t}$ are equivalent, so that for any collection of times $t_1 , \ldots , t_n$,
\begin{align}
    \mathbb{P} \left( x ( t_1 ) , \ldots , x ( t_n ) \right) = \mathbb{P} \left( x ( - t_1 ) , \ldots , x ( - t_n ) \right) \, .
\end{align}
Translating to physical language, we see that a stationary stochastic process corresponds to a time-translation invariant quantum system, and a symmetric stochastic process represents a time-reflection invariant quantum system. It is then easy to characterize the quantum systems which satisfy the Euclidean covariance condition: a theory described by a stochastic process $x_t$ obeys \ref{euclidean_invariance} if and only if $x_t$ is stationary and symmetric.

This completes the discussion of \ref{euclidean_invariance}, and the axioms \ref{temperedness} and \ref{symmetric} are essentially automatic, which leaves only two of the OS axioms. For stationary processes, the assumption \ref{cluster_decomposition} of cluster decomposition is equivalent to a property called ``mixing'' in the literature on ergodic theory. A stochastic process is said to be \emph{mixing} if its values at widely-separated times are asymptotically independent -- for instance, mixing implies that the statistical dependence between $x ( t_1 )$ and $x ( t_2 )$ goes to zero as $| t_1 - t_2 |$ tends to infinity. Mixing can also be thought of as an assumption about the decay of correlations,\footnote{This can be made precise: a stationary stochastic process $x_t$ is mixing if and only if the covariance between any pair of bounded observables $f ( x_0 )$ and $g ( x_t )$ tends to zero as $t \to \infty$.} and is closely related to ergodicity (in fact, mixing implies ergodicity in general). From the perspective of physics, the condition \ref{cluster_decomposition} is a weak form of locality in time, and is automatically satisfied by conventional quantum systems which have local interactions.

Although it is interesting to investigate the various necessary and/or sufficient conditions for a stochastic process to enjoy cluster decomposition (i.e. to be mixing), we will now set this axiom aside and focus most of our subsequent discussion on \ref{reflection_positivity}, since reflection positivity turns out to be more subtle. We stress that RP is distinct from time reflection symmetry, i.e. the statement that $x_t$ and $x_{-t}$ are equivalent. A symmetric stochastic process is one whose statistical properties (joint probability distributions) are invariant under a simultaneous replacement of \emph{all} input times $t_i$ by their time-reversed values $- t_i$. In contrast, the reflection positivity condition (\ref{rp_condition}) is a statement about expectation values for which \emph{half} of the insertion times take time-reversed values $- t_i$. 

Reflection positivity is a rather delicate condition, so we devote the next two subsections to an exploration of various scenarios in which RP holds.

\subsection{A Parameter Splitting Mechanism}\label{sec:param_split}

In this section, we will pass to the neural network description of a quantum system and outline one mechanism by which reflection positivity \ref{reflection_positivity} can be achieved. The advantage of the neural network formulation is that we may write expectation values as explicit parameter space integrals, which can make certain manipulations easier.

Consider a neural network quantum system $x_t$ which is presented via an architecture $x ( t ) = \phi_\theta ( t )$ with associated parameter density $P ( \theta )$. First we will express the condition for reflection positivity in this language. Fix a collection of positive times $t_1 , \ldots, t_n > 0$ and a (possibly complex) function $F$ of $n$ variables. The condition (\ref{rp_condition}) then reads
\begin{align}\label{nn_rp_condition}
    \int d \theta \, P ( \theta ) \, F \left( \phi_\theta ( t_1 ) , \ldots, \phi_\theta ( t_n ) \right) \left( F \left( \phi_\theta ( - t_1 ) , \ldots, \phi_\theta ( - t_n ) \right) \right)^\ast \geq 0 \, ,
\end{align}
where again $\ast$ denotes complex conjugation. Reflection positivity requires that the inequality (\ref{nn_rp_condition}) holds for any suitable function $F$ and any finite collection of positive times.

Let us now ask how one might engineer a NN-QM such that reflection positivity is automatically satisfied. One obvious way to ensure that a quantity is positive is to write it as a perfect square. We can attempt to force the integrand of (\ref{nn_rp_condition}) to be a perfect square by ``splitting'' the parameter space integral and writing it in a factorized form as follows. Assume that the set $\left\{ \theta \right\}$ of parameters associated with the NN-QM $( \phi_\theta , P ( \theta ) )$ decomposes into three mutually disjoint subsets:
\begin{align}
    \left\{ \theta \right\} = \left\{ \theta^0 \right\} \cup \left\{ \theta^+ \right\} \cup \left\{ \theta^- \right\} \, .
\end{align}
As usual, there may be many parameters in each subset, but we suppress indices and write $\theta^0$, $\theta^+$, and $\theta^-$ to denote the collection of all parameters in each subset.

The critical assumption we now make is the following. Suppose that, for positive times $t > 0$, the architecture $\phi_\theta ( t )$ depends on the parameters $\theta^+$ and $\theta^0$ but not on $\theta^-$. Likewise, at negative times $t < 0$, assume that $\phi_\theta ( t )$ depends on $\theta^-$ and $\theta^0$ but not $\theta^+$. 
This condition allows us to rewrite (\ref{nn_rp_condition}) as
\begin{align}\label{RP_intermediate}
    \langle F \left( T F \right)^\ast \rangle &= \int d \theta^0 \, P ( \theta^0, \theta^+, \theta^- ) \, \left( \int d \theta^+ F \left( \phi_\theta ( t_1 ) , \ldots, \phi_\theta ( t_n ) \right) \right) \nonumber \\
    &\qquad \qquad \cdot \left( \int d \theta^- \left( F \left( \phi_\theta ( - t_1 ) , \ldots, \phi_\theta ( - t_n ) \right) \right) \right)^\ast \, ,
\end{align}
where we have introduced the shorthand $\langle F \left( T F \right)^\ast \rangle$ for the integral on the left side of (\ref{nn_rp_condition}). Here $T$ is the time-reversal operator which acts on any function of a collection of insertions $\phi_\theta ( t_i )$ by reversing each time as $T : \phi_\theta ( t_i ) \mapsto \phi_\theta ( - t_i )$.

We have succeeded in factorizing the $F$ and $\left( T F \right)^\ast$ insertions, but not the parameter density. To achieve the latter, let us pause to consider which conditions on $P ( \theta^0, \theta^+, \theta^- )$ are natural to impose. Although we are presently focused on reflection positivity, we are ultimately interested in engineering a NN-QM that obeys \emph{all} of the OS axioms. In particular, if our neural network satisfies the OS axiom \ref{euclidean_invariance}, then the associated stochastic process $x_t = \phi_\theta ( t )$  is symmetric and hence $x_t$ and $x_{-t}$ are equivalent processes. This equivalence means that various joint probability distributions must be equal, such as
\begin{align}\label{rp_joint_pdf_one}
    \mathbb{P} \left( \phi_\theta ( t_1 ) , \ldots , \phi_\theta ( t_n ) \right) = \mathbb{P} \left( \phi_\theta ( - t_1 ) , \ldots , \phi_\theta ( - t_n ) \right) \, .
\end{align}
In the neural network description, joint probability distributions on outputs such as (\ref{rp_joint_pdf_one}) are controlled by the parameter density. A sufficient set of conditions that one may impose on the density $P ( \theta )$ which ensures that $\phi_\theta ( t )$ generates a symmetric stochastic process is
\begin{align}\label{factorized_density}
    P ( \theta^0, \theta^+, \theta^- ) = P_0 ( \theta^0 ) \cdot P_+ ( \theta^0, \theta^+ ) \cdot P_- ( \theta^0, \theta^- ) \, ,
\end{align}
along with an assumption that time-reversal of the input $t$ can be ``absorbed'' into an action on the parameters, in the sense that for $t > 0$ the architecture obeys
\begin{align}\label{param_absorb_one}
    \phi_{\theta^0 , \theta^+} ( t ) = \phi_{\theta^0 , \theta^{+ \prime}} ( - t ) \, , \qquad \phi_{\theta^0 , \theta^-} ( - t ) = \phi_{\theta^0 , \theta^{- \prime}} ( t ) \, ,
\end{align}
where $\theta^{\pm \prime}$ are re-defined sets of parameters, such that
\begin{align}\label{param_absorb_two}
    P_+ \left( \theta^0 , \theta^{- \prime} \right) = P_- ( \theta^0 , \theta^- ) \, , \qquad P_- ( \theta^0 , \theta^{+ \prime} ) = P_+ ( \theta^0 , \theta^+ ) \, .
\end{align}
Equation (\ref{param_absorb_two}) looks somewhat strange due to expressions like $P_\pm \left( \theta^0 , \theta^{\mp \prime} \right)$ which involve evaluation of $P_\pm$ on inputs with the ``wrong-sign'' subscripts $\mp$. However, let us emphasize that all of the parameters in these equations are simply dummy variables. At the risk of being pedantic, suppose there are $N_0$ parameters $\theta^0$ and $N_{\pm}$ parameters $\theta^{\pm}$. We assume that $N_+ = N_- = N$, so that both $P_+$ and $P_-$ are functions of $N_0 + N$ variables. Then the first equation of (\ref{param_absorb_two}) states that the output of the function $P_-$, when evaluated on the $N_0 + N$ inputs $\theta^0$ and $\theta^-$, matches the output of the (different) function $P_+$, when $P_+$ is evaluated on the (different) set of $N_0 + N$ inputs $\theta^0$ and $\theta^{- \prime}$. Here $\theta^{- \prime}$ is a new set of $N$ variables that are obtained by applying some $\mathbb{Z}_2$ transformation to the $N$ variables $\theta^-$.

This ``parameter absorption mechanism'' was first introduced in \cite{Maiti:2021fpy} to engineer neural network field theories with particular symmetries. The specific mechanism of equations (\ref{param_absorb_one}) - (\ref{param_absorb_two}) has been chosen so that the NN-QM defined by $x_t = \phi_\theta ( t )$ enjoys time-reflection invariance, i.e. so that $x_t$ and $x_{-t}$ are equivalent stochastic processes.

Finally, let us return to reflection positivity. With the additional conditions on the parameter density described above, (\ref{RP_intermediate}) becomes
\begin{align}\label{param_splitting_final}
    \langle F \left( T F \right)^\ast \rangle &= \int d \theta^0 \, P ( \theta^0 ) \, \left( \int d \theta^+ \, P_+ ( \theta^0 , \theta^+ ) \, F \left( \phi_\theta ( t_1 ) , \ldots, \phi_\theta ( t_n ) \right) \right) \nonumber \\
    &\qquad \qquad \qquad \qquad \qquad \cdot \left( \int d \theta^- \, P_- ( \theta^0 , \theta^- ) \, \left( F \left( \phi_\theta ( - t_1 ) , \ldots, \phi_\theta ( - t_n ) \right) \right) \right)^\ast \, \nonumber \\
    &= \int d \theta^0 \, P ( \theta^0 ) \, \left( \int d \theta^+ \, P_+ ( \theta^0 , \theta^+ ) \, F \left( \phi_\theta ( t_1 ) , \ldots, \phi_\theta ( t_n ) \right) \right) \nonumber \\
    &\qquad \qquad \qquad \qquad \qquad \cdot \left( \int d \theta^- \, P_- ( \theta^0 , \theta^- ) \, \left( F \left( \phi_{\theta'} ( t_1 ) , \ldots, \phi_{\theta'} ( t_n ) \right) \right) \right)^\ast \nonumber \\
    &= \int d \theta^0 \, P ( \theta^0 ) \, \left( \int d \theta^+ \, P_+ ( \theta^0 , \theta^+ ) \, F \left( \phi_\theta ( t_1 ) , \ldots, \phi_\theta ( t_n ) \right) \right) \nonumber \\
    &\qquad \qquad \qquad \qquad \qquad \cdot \left( \int d \theta^{- \prime} \, P_+ ( \theta^0 , \theta^{- \prime} ) \, \left( F \left( \phi_{\theta'} ( t_1 ) , \ldots, \phi_{\theta'} ( t_n ) \right) \right) \right)^\ast \nonumber \\
    &= \int d \theta^0 \, P ( \theta^0 ) \, \left| \int d \theta^+ \, P_+ ( \theta^0 , \theta^+ ) \, F \left( \phi_\theta ( t_1 ) , \ldots, \phi_\theta ( t_n ) \right) \right|^2 \, .
\end{align}
In the first step, we have used (\ref{param_absorb_one}) to absorb all minus signs in the evaluations of $\phi_\theta$ at negative times into redefinitions of the parameters. We then applied (\ref{param_absorb_two}) to replace $P_-$ in the second integral with $P_+$ evaluated on the transformed parameters, and we have changed integration variables from $\theta^-$ to $\theta^{- \prime}$, which we assume gives a trivial Jacobian so that no additional factors are generated from the change of measure.\footnote{Recall that the transformation from $\theta^-$ to $\theta^{- \prime}$ absorbs the time-reversal operation $t \to -t$, which is a $\mathbb{Z}_2$ involution. Therefore, the map from $\theta^-$ to $\theta^{- \prime}$ squares to the identity, so it is reasonable to assume that it has trivial Jacobian. For instance, if the transformation on the $\theta^-$ is implemented by a linear operator $M$, i.e. $\theta^{\prime - i} = \tensor{M}{^i_j} \theta^{- j}$, this is guaranteed since $M^2 = I$ so $| \det ( M ) | = 1$.} Finally, in the last step we recognize that both $\theta^+$ and $\theta^{- \prime}$ are dummy variables that are integrated over, which means that the integrals inside the two sets of parentheses are in fact identical. This allowed us to write $\langle F \left( T F \right)^\ast \rangle$ as an integral of a non-negative density $P ( \theta^0 )$ multiplied by a perfect square, which is thus also non-negative, as desired. We conclude that
\begin{align}
    \langle F \left( T F \right)^\ast \rangle \geq 0
\end{align}
for a NN-QM satisfying the assumptions described above. This establishes RP for any quantum system represented by a neural network that enjoys parameter splitting.

Let us now summarize the (rather involved) argument presented above, and make some further comments. We have determined a set of sufficient conditions for a NN-QM to enjoy reflection positivity. The key ingredients in our mechanism are:
\begin{enumerate}[label = (\Roman*)]
    \item\label{decouple} The parameter-space degrees of freedom which are relevant for positive times and negative times decouple from one another. By this, we mean that there is one set of parameters $\theta^+$ that describe the NN-QM at positive times, but which have no effect on the model at negative times; likewise, there is another set of parameters $\theta^-$ that affect the NN at negative times but not positive times.

    \item The parameter density enjoys a factorized form (\ref{factorized_density}), and one can employ a ``parameter absorption trick'' to relate the densities at positive and negative times. This assumption also implies time-reflection symmetry (not to be confused with RP).
\end{enumerate}
The first ingredient \ref{decouple} is reminiscent of locality in time; the behavior of the neural network at large positive times is independent of its behavior in the far past, at negative times, which suggests some form of local interactions. 

Alternatively, one can view the parameter splitting as a sort of ``parameter space Markov property'' in the following sense. Suppose we view the parameters $\theta^+$ as describing $\phi_\theta ( t )$ for $t \geq 0$, $\theta^-$ as describing the NN for $t \leq 0$, and $\theta^0$ as determining the behavior of $\phi_\theta ( 0 )$. Say that we are given the output of a fixed realization of the NN on the half-line $t \leq 0$ and wish to predict its outputs in the future, at positive times. Parameter splitting roughly states that, to make this prediction, it is not necessary to condition on the neural network outputs at negative times (which depend on $\theta^-$), but only on the most recent time $t = 0$ associated with $\theta^0$. This can be interpreted as a ``memorylessness'' assumption, at least around the point $t = 0$. The connection between Markov-like behavior and reflection positivity is not an accident; in Section \ref{sec:markov} we will return to this topic, although there we consider the more conventional ``output-space Markov property'' rather than this ``parameter-space Markov property'' represented by the splitting mechanism.

We also note that, while our parameter splitting mechanism is sufficient for RP, it depends on the precise parameterization of the neural network. A generic quantum system can be presented by neural networks in multiple equivalent ways, and even if one representation enjoys parameter splitting, another presentation may fail to split. Therefore it cannot be necessary that a NN with a given parameterization must enjoy a parameter splitting in order to be RP. A simple way to see this is to consider the KKL decomposition of a minimal quantum mechanical model on an interval $[-T, T]$, which we repeat:
\begin{align}
    \phi_\theta ( t ) = \langle x ( t ) \rangle + \sum_{k=1}^{\infty} \theta^k e_k ( t )  \, .
\end{align}
Each of the continuous, pairwise orthogonal functions $e_k ( t )$ on $[-T, T]$ generically has support both on the positive half-interval $[0, T]$ and on the negative half-interval $[-T, 0]$. For instance, in the KKL decomposition of a Brownian path or an Ornstein-Uhlenbeck process, the $e_k ( t )$ can be chosen to be trigonometric functions. In such a parameterization, there is no splitting of the parameters $\theta^k$ into appropriate subsets $\theta^+$ and $\theta^-$; every one of the $\theta^k$ affects the outputs of the neural network $\phi_\theta$ at both positive and negative times. But on the other hand, \emph{every} MQM admits a KKL representation, including the reflection-positive processes. Therefore an RP process, presented via a neural network with some choice of parameterization, need not exhibit parameter splitting with respect to that particular parameterization. Said differently, the existence or non-existence of a parameter splitting is a question which depends on the given neural network representation $\phi_\theta ( t )$ for a quantum system, rather than an intrinsic question about the system itself.\footnote{As an extreme example, we commented below (\ref{parameter_space_correlator}) that any stochastic process can be represented by a neural network $\phi_\theta ( t ) = \theta^t$ with an uncountable infinity of parameters $\theta^t$, $t \in \mathbb{R}$. In such a parameterization, the parameters associated with $t > 0$ trivially split from those that describe $t < 0$. This does not imply that any process is RP, since $P ( \theta )$ need not have the form required by our argument above.}

\subsubsection*{\ul{\it Engineering parameter splitting about $t = 0$}}

The preceding subsection explained that a set of parameter splitting assumptions on a NN-QM implies reflection positivity. One might then ask whether this set of assumptions is actually satisfied in any explicit examples. As it turns out, given \emph{any} neural network $\left( \phi_\theta , P ( \theta ) \right)$, it is possibly to transform this NN into a modified model $\left( \widetilde{\phi}_\theta , \widetilde{P} ( \theta ) \right)$ which obeys the parameter splitting requirements and therefore exhibits reflection positivity. In principle, this allows us to generate a large class of examples of NN-QM systems which enjoy reflection positivity, although they generally do not satisfy the other OS axioms.

Let us now describe the construction. Begin with any NN-QM defined by an architecture $\phi_\theta : \mathbb{R} \to \mathbb{R}$ and a probability distribution $P ( \theta )$ over parameters. In addition to the parameters $\theta$ which are already present in the ``seed'' model, we introduce two additional parameters $\theta^+$ and $\theta^-$. Next, we promote $\phi_\theta$ to an upgraded architecture
\begin{align}\label{promote_phis}
    \widetilde{\phi}_\theta ( t ) = \begin{cases} \phi_\theta ( \theta^+ t) &\text{if } t \geq 0 \\  \phi_\theta ( \theta^- t ) &\text{if } t < 0 \end{cases} \, .
\end{align}
The choice of which parameter $\theta^{\pm}$ describes the behavior at $t = 0$ is immaterial; for concreteness, we have chosen $\theta^+$. By construction, we see that the architecture is independent of $\theta^-$ at positive times and independent of $\theta^+$ at negative times, so these parameters satisfy the first requirement of our parameter splitting mechanism.

We must also specify the form of the joint density $\widetilde{P} ( \theta^+ , \theta^- , \theta )$ over all of the parameters. Here there is a great deal of freedom. If there are $N$ parameters $\theta$ in the seed model, choose \emph{any} density $p ( \theta^+ , \theta )$ on $N+1$ parameters. Then we define
\begin{align}\label{general_RP_mechanism_density}
    \widetilde{P} ( \theta^+ , \theta^- , \theta ) = p \left( \theta^+ , \theta \right) \cdot p \left( - \theta^- , \theta \right) \, .
\end{align}
For instance, if one wishes to choose the new density so that its dependence on the seed parameters $\theta$ is unchanged, one could take a factorized form
\begin{align}
    p \left( \theta^+ , \theta \right) = \sqrt{ P ( \theta ) } \cdot f ( \theta^+ ) \, ,
\end{align}
for some function $f$ of one variable, where $P ( \theta )$ is the parameter density of the original model. In this case, we see that the density for the new NN-QM takes the simple form $\widetilde{P} ( \theta^+ , \theta^- , \theta ) = P ( \theta ) f ( \theta^+ ) f ( - \theta^- )$. However, this special choice is not required, so we will continue to work with the general density (\ref{general_RP_mechanism_density}).

We have engineered the new architecture (\ref{promote_phis}) and density (\ref{general_RP_mechanism_density}) so that they admit a parameter absorption trick of the form described around equations (\ref{param_absorb_one}) - (\ref{param_absorb_two}). To see this, consider a positive time $t > 0$ and compare $\widetilde{\phi}_\theta ( t ) = \phi_\theta ( \theta^+ t )$ to its time-reversal $\widetilde{\phi}_\theta ( - t ) = \phi_\theta ( - \theta^- t )$. Clearly the two expressions agree if one identifies $\theta^+ = - \theta^-$. Thus we see that the time-reversal of the input $t$ can be absorbed into a redefinition of the parameters. To make this very explicit, let us define $\theta^{\prime -} = - \theta^-$ and carefully show the parameter absorption mechanism, writing $\widetilde{\phi}_{\theta^+, \theta^-, \theta}$ so that the dependence on all of the parameters is apparent. Then one has
\begin{align}\label{explicit_absorption}
    \widetilde{\phi}_{\theta^+ , \theta^- , \theta } ( - t ) = \phi_\theta ( - \theta^- t ) = \phi_\theta \left( \theta^{\prime -} t \right) = \widetilde{\phi}_{\theta^{\prime -} , \theta^- , \theta } ( t ) \, .
\end{align}
The point is that, in the final expression of (\ref{explicit_absorption}), the parameter $\theta^+$ which usually sits in the first slot of $\widetilde{\phi}_{\theta^+, \theta^-, \theta}$ has been replaced with $\theta^{\prime -} = - \theta^-$. This procedure allows us to perform the parameter absorption required by condition (\ref{param_absorb_one}) in our splitting mechanism. The second condition (\ref{param_absorb_two}) is ensured by our choice (\ref{general_RP_mechanism_density}) for the density. Identifying
\begin{align}
    P_+ ( \theta^+ , \theta ) = p ( \theta^+ , \theta ) \, , \qquad P_- ( \theta^- , \theta ) = p ( - \theta^- , \theta ) \, ,
\end{align}
to facilitate comparison with the notation of (\ref{param_absorb_two}), clearly one has
\begin{align}
    P_+ ( - \theta^+ , \theta ) = P_- ( \theta^+ , \theta ) \, , \qquad P_- ( - \theta^- , \theta ) = P_+ ( \theta^- , \theta ) \, ,
\end{align}
which is the other condition required in the parameter splitting mechanism.

By the general argument of the preceding subsection, we therefore conclude that the architecture (\ref{promote_phis}) and density (\ref{general_RP_mechanism_density}) define a reflection-positive NN-QM given \emph{any} input network $\phi_\theta$. Although this conclusion follows from the results above, it may be instructive to explicitly present the proof of reflection positivity for this model to see how the pieces fit together. Given positive times $t_1 , \ldots, t_n > 0$ and a function $F$ of $n$ variables, we have
\begin{align}
    \langle F \left( T F \right)^\ast \rangle &= \int d \theta \, \left( \int d \theta^+ \, p ( \theta^+ , \theta ) \, F ( \widetilde{\phi}_\theta ( t_1 ) , \ldots , \widetilde{\phi}_\theta ( t_n ) ) \right) \nonumber \\
    &\qquad \qquad \cdot \left( \int d \theta^- \, p ( - \theta^- , \theta ) \, F ( \widetilde{\phi}_\theta ( - t_1 ) , \ldots , \widetilde{\phi}_\theta ( - t_n ) ) \right)^\ast \nonumber \\
    &= \int d \theta \, \left( \int d \theta^+ \, p ( \theta^+ , \theta ) \, F ( \widetilde{\phi}_\theta ( t_1 ) , \ldots , \widetilde{\phi}_\theta ( t_n ) ) \right) \nonumber \\
    &\qquad \qquad \cdot \left( \int d \theta^{\prime -} \, p ( \theta^{\prime -} , \theta ) \, F ( \widetilde{\phi}_{\theta'} ( t_1 ) , \ldots , \widetilde{\phi}_{\theta'} ( t_n ) ) \right)^\ast \nonumber \\
    &= \int d \theta \, \left| \int d \theta^+ \, p ( \theta^+ , \theta ) \, F ( \widetilde{\phi}_\theta ( t_1 ) , \ldots , \widetilde{\phi}_\theta ( t_n ) ) \right|^2 \nonumber \\
    &\geq 0 \, .
\end{align}
The key step is the redefinition of parameters to $\theta^{\prime -} = - \theta^-$ and the change of variables\footnote{Note that no signs or factors are generated when we change integration variables from $\theta^-$ to $\theta^{\prime -}$, since $\int_{-\infty}^{\infty} d \theta^- \, f ( - \theta^- ) = \int_{\infty}^{-\infty} \left( - d \theta^{\prime -} \right) \, f ( \theta^{\prime -} ) = + \int_{-\infty}^{\infty} d \theta^{\prime -} \, f ( \theta^{\prime -} )$. The sign from the change of measure from $d \theta^-$ to $d \theta^{\prime -}$ cancels against a compensating sign from the reversal of the bounds of integration.
} in the integral, which upon recognizing that $\theta^+$ and $\theta^{\prime -}$ are dummy variables which are integrated over, allows us to conclude that the two integrals in parentheses are equal and thus the expectation value $\langle F \left( T F \right)^\ast \rangle$ is the integral of a perfect square. This completes the proof of reflection positivity for this class of NN-QM models.

\subsubsection*{\ul{\it Translation invariance and the inevitability of non-analyticity}}

The construction of $\tilde{\phi}_\theta$ in equation (\ref{promote_phis}), with an appropriate parameter density (\ref{general_RP_mechanism_density}), produces a reflection positive NN-QM given any input neural network $\phi_\theta$. However, the resulting model is \emph{not} time-translation invariant, even if the original network $\phi_\theta$ enjoys translational symmetry. This is manifest from the form of the architecture (\ref{promote_phis}), which explicitly singles out the time $t = 0$ as special. For example, a correlation function like $\langle \widetilde{\phi}_\theta ( - 1 ) \widetilde{\phi}_\theta ( 1 ) \rangle $ is sensitive to the statistics of both $\theta^+$ and $\theta^-$, while shifting $t \to t + 2$ gives the correlator $\langle \widetilde{\phi}_\theta ( 1 ) \widetilde{\phi}_\theta ( 3 ) \rangle$, which now only involves $\theta^+$. A priori there is no reason for the shifted and un-shifted correlators to agree, and in general they will not match.

This brings us to an important point about NN-QM systems which satisfy all of the OS axioms. In the preceding discussion we have been focused on reflection positivity in the form (\ref{rp_condition}), which involves reflections about the point $t = 0$. However, for a quantum system which enjoys both time-translation invariance and reflection positivity, the model must exhibit positivity with respect to reflections about \emph{any} time $t = a$. This is a much stronger condition, and not one which can be easily engineered using parameter splitting.

To better understand the tension between the parameter splitting mechanism for RP and translation invariance, let us think about how one might try to engineer reflection positivity about two points $t_1$, $t_2$ using the above construction. Na\"ively, one would now need to introduce four extra parameters $\theta^{1 \pm}$ and $\theta^{2 \pm}$ with the properties that $\phi_\theta ( t )$ is independent of $\theta^{1 +}$ for $t < t_1$, independent of $\theta^{1 -}$ for $t > t_1$, independent of $\theta^{2 +}$ for $t < t_2$, and independent of $\theta^{2 -}$ for $t > t_2$. If one proceeds in this way, one would be naturally led to introduce an uncountable infinity of parameters $\theta^{a \pm}$ for $a \in \mathbb{R}$, which seems excessive.

Another way to think of this tension is to return to our intuitive picture of the parameter splitting mechanism as a ``decoupling'' between the parameter-space degrees of freedom $\theta^+$ which control the positive-time behavior of the network and the parameters $\theta^-$ which determine the behavior for negative times. In order to engineer reflection positivity about \emph{any} point $t = a$ via parameter splitting, it would therefore seem that we would need the behavior of the neural network at any time $t$ to admit a decoupling from its behavior at other times $t' \neq t$. In particular, it appears that the neural network outputs at time $t$ should never be determined from the outputs in a neighborhood of $t$, since this would preclude any such decoupling. This can be made precise in the following result.

\begin{prop}
    Let $\phi_\theta$ be a NN-QM which is reflection-positive with respect to all times. Assume that the mechanism by which $\phi_\theta$ achieves RP about each time $t$ is by parameter splitting. Then $\phi_\theta$ cannot be an analytic function of $t$ at any point in its domain.
\end{prop}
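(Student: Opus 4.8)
The plan is to argue by contradiction, localized to a single point. Suppose $\phi_\theta$ were analytic at some $t_0$ in its domain; then it is real-analytic on an open interval $I = (t_0 - \delta, t_0 + \delta)$. Since $\phi_\theta$ is reflection positive about every time by parameter splitting, it is in particular RP about $t_0$ via a decomposition $\{\theta\} = \{\theta^0\}\cup\{\theta^+\}\cup\{\theta^-\}$ of the type in Section \ref{sec:param_split}: for $t > t_0$ the architecture depends only on $(\theta^0,\theta^+)$, for $t < t_0$ only on $(\theta^0,\theta^-)$, and the density factorizes as $P(\theta^0,\theta^+,\theta^-) = P_0(\theta^0)\,P_+(\theta^0,\theta^+)\,P_-(\theta^0,\theta^-)$. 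The first step I would take is to record the probabilistic content of this splitting. Writing $L$ and $R$ for the restrictions of the sample path to $(t_0-\delta, t_0)$ and $(t_0, t_0+\delta)$, the assumptions make $L$ measurable with respect to $(\theta^0,\theta^-)$ and $R$ measurable with respect to $(\theta^0,\theta^+)$, while the factorized density renders $\theta^+$ and $\theta^-$ conditionally independent given $\theta^0$. Hence $R$ and $L$ are conditionally independent given $\theta^0$.

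The second ingredient is analyticity. The precise form of the slogan that ``the output at $t$ is determined by the outputs in a neighborhood'' is the identity theorem: a real-analytic function on the connected interval $I$ is uniquely determined by its restriction to any open subinterval. Applied pathwise, this says $R$ is a \emph{deterministic} functional of $L$ (analytic continuation across $t_0$), so there is a fixed map $g$ with $R = g(L)$ for almost every realization. I would then combine the two ingredients: $R$ is simultaneously $\sigma(L)$-measurable and conditionally independent of $L$ given $\theta^0$. A random object that is a function of $L$ yet conditionally independent of $L$ given $\theta^0$ must be conditionally degenerate, i.e. $R \perp R \mid \theta^0$, forcing $R$ to be $\sigma(\theta^0)$-measurable. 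In words, analyticity forces the forward parameters $\theta^+$ to have no effect on the path over $(t_0, t_0+\delta)$: the future becomes a deterministic function of $\theta^0$ alone.

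This is the desired contradiction. It says the parameter splitting about $t_0$ is illusory, since the degrees of freedom $\theta^+$ that were supposed to control the positive-time behavior drop out entirely, leaving the process conditionally deterministic near $t_0$ given $\theta^0$. For any genuine NN-QM with nontrivial fluctuations, e.g. a nonconstant connected two-point function on $I$ of the kind forced by Heisenberg uncertainty and non-trivial commutators, this is impossible, establishing non-analyticity at $t_0$. Since $t_0$ was arbitrary and RP (by splitting) holds about every time, $\phi_\theta$ can be analytic at no point in its domain.

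I expect the main obstacle to be the bridge between the pointwise analytic-continuation statement and a clean measurable map on path space. One must verify that the Taylor data at $t_0$ is well defined, with matching one-sided derivatives at the junction, which is precisely what analyticity \emph{at} $t_0$ (as opposed to separate one-sided analyticity on each half) supplies, and that $g$ can be chosen measurable so that the conditional-independence argument applies. A secondary subtlety is the degenerate loophole: a process that is constant in $t$ given $\theta^0$ is analytic and trivially ``RP by splitting'' with empty $\theta^\pm$, so the statement should be read as applying to genuine splittings with relevant forward parameters, equivalently to nondegenerate quantum systems, and I would make this nondegeneracy hypothesis explicit in the write-up.
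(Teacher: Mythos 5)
Your proof is correct and rests on the same core idea as the paper's: analyticity at $t_0$ means the Taylor data (equivalently, the germ of the path) is recoverable from either half-interval alone, so the behavior for $t>t_0$ and $t<t_0$ cannot decouple, contradicting the splitting. The paper executes this more elementarily and parameter-by-parameter: for each fixed parameter $\overbar{\theta}$, either no Taylor coefficient $c_n(\theta)$ depends on it (so it affects neither side and is useless for splitting) or some $c_n$ does, in which case the one-sided-derivative formula $c_n = \frac{1}{n!}\frac{d^n}{dt^n}\phi_\theta\big|_{t=a}$ shows $\phi_\theta$ depends on $\overbar{\theta}$ on \emph{both} sides; hence no parameter can serve as $\theta^{\pm}$, and the contradiction follows from the architecture-dependence half of the splitting assumption alone. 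Your route instead passes to path space: you invoke the factorized density to get conditional independence of $R$ and $L$ given $\theta^0$, use the identity theorem to make $R$ a measurable function of $L$, and conclude $R$ is $\sigma(\theta^0)$-measurable. This buys a cleaner probabilistic statement (the forward parameters are statistically irrelevant), but it costs you an extra nondegeneracy hypothesis to close the contradiction — a point you correctly flag, and one the paper also brushes against in its case (a) without naming it. Both arguments are sound; yours makes the measure-theoretic content and the degenerate loophole explicit, while the paper's avoids conditional independence entirely and is closer to a direct verification that the hypothesis "RP about $t=a$ is achieved by splitting" is vacuous for analytic architectures.
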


\begin{proof}

Suppose by way of contradiction that $\phi_\theta$ is an analytic function of time near a point $t = a$. This means that, within some finite radius of convergence $| t - a| < R$, the outputs of the neural network are described by a convergent Taylor series expansion
\begin{align}\label{NN_analytic}
    \phi_\theta ( t ) = \sum_{n=0}^{\infty} c_n ( \theta ) t^n \, .
\end{align}
Note that $\phi_\theta$ is still a random function, so different draws of the neural network will give different analytic functions near $t = a$, with different Taylor coefficients that are determined by the values of the parameters. We indicate this dependence by writing the series coefficients as $c_n ( \theta )$. Equation (\ref{NN_analytic}) defines what we mean by analyticity for NNs.

By assumption, $\phi_\theta$ achieves reflection positivity about $t = a$ by a parameter splitting mechanism. We therefore wish to identify two sets of parameters $\{ \theta^+ \}$, $\{ \theta^- \}$ such that $\phi_\theta ( t )$ is independent of $\theta^+$ for $t < a$ and $x_\theta ( t )$ is independent of $\theta^-$ for $t > a$.

However, for each fixed parameter $\overbar{\theta}$, there are only two possibilities: either none of the Taylor coefficients depend on $\overbar{\theta}$, or at least one of the $c_n$ depends on this $\overbar{\theta}$. 
\begin{enumerate}[label = (\alph*)]\label{prop_nonan}
    \item In the first case, the behavior of $\phi_\theta$ is independent of this parameter $\overbar{\theta}$ for both $t < a$ and $t > a$, in which case $\overbar{\theta}$ cannot be used for parameter splitting. This is because, if $\phi_\theta$ is independent of $\overbar{\theta}$ on both sides of $t = a$, then (\ref{param_absorb_one}) can never hold.

    \item In the second case, where at least one of the $c_n$ depends on $\overbar{\theta}$, then the behavior of $\phi_\theta$ depends on $\overbar{\theta}$ for both $t < a$ and $t > a$ because the same convergent Taylor series expansion involving the same $c_n$ is used in both regions. We can see this directly by computing a quantity that depends on $\overbar{\theta}$ using only $\phi_\theta ( t )$ for $t > a$, and again using only data for $t < a$. For instance, one could compute any of the $c_n$ via the formula
    \begin{align}\label{cn_from_derivatives}
        c_n ( \theta ) = \frac{1}{n!} \frac{d^n}{d t^n} \phi_\theta ( t ) \Big\vert_{t = a} \, ,
    \end{align}
    and the derivatives (\ref{cn_from_derivatives}) could be computed using one-sided limits which only require knowing the behavior of $\phi_\theta ( t )$ for $t > a$ (or, equivalently, only knowing the function for $t < a$). This shows explicitly that, if $c_n$ depends on some $\overbar{\theta}$, then $\phi_\theta ( t )$ depends on $\overbar{\theta}$ on both sides of the $t = a$ point. Thus any parameter $\overbar{\theta}$ on which at least one of the $c_n$ depends also cannot be used for splitting mechanism.
\end{enumerate}
This argument applies to \emph{any} parameter, so we conclude that there do not exist any parameters $\theta^+$ and $\theta^-$ that can be used for the parameter splitting mechanism. This contradicts that $\phi_\theta$ achieves RP about $t = a$ by parameter splitting.
\end{proof}

Proposition \ref{prop_nonan} may seem disturbing if one is accustomed to thinking of neural networks with analytic architectures like the cosine in (\ref{cos_example}), or of conventional NNs such as a feedforward network $\phi_\theta$ with ReLU activation functions, since this $\phi_\theta$ is an analytic function of its arguments away from a set of measure zero corresponding to the kink discontinuities of the functions $\mathrm{ReLU} ( z )$ at $z = 0$. However, in the rigorous formulation of the Euclidean quantum mechanics path integral, one may also recall that the path integral measure is supported on paths which are differentiable nowhere. This construction proceeds by combining the na\"ive path integral measure $\mathcal{D} x$ with the exponentiated kinetic part of the action $e^{-S_{\text{kin}}}$, where $S_{\text{kin}} \sim \int dt \, \dot{x}^2$, in a particular discretization scheme. Taking the continuum limit of this discretization, one can prove that the combination $e^{-S_{\text{kin}}} \, \mathcal{D} x$ converges to the Wiener measure on continuous functions. Functions which are differentiable at even a single point form a set of measure zero with respect to the Wiener measure; the full-measure contribution comes from nowhere-differentiable paths, such as typical sample paths of Brownian motion. From this perspective, it is not so surprising that a reasonable NN-QM system might involve architectures which are analytic nowhere (or differentiable nowhere, which is a stronger condition).

Let us also point out that the inclusion of nowhere-differentiable paths in the standard quantum mechanics path integral is not merely a mathematical curiosity, but has physical implications. For instance, in Appendix \ref{app:commutators} we review the standard argument that non-trivial commutators such as $[ \hat{x} , \hat{p} ] = i \hbar$ are a direct consequence of the inclusion of non-differentiable trajectories in the path integral. In particular, if one studies a NN-QM system $x ( t ) = \phi_\theta ( t )$ whose architecture is a continuously differentiable function of $t$ at each point, then it would necessarily follow that $[ \hat{x} , \hat{\dot{x}} ] = 0$. Assuming that $\dot{x}$ is related to the canonical momentum, this contradicts the fundamental commutation relations that we expect for a quantum system. Therefore, if we would like our NN-QM to reproduce basic features of quantum mechanics such as commutation relations and the associated uncertainty principles, nowhere-differentiable architectures are essential.

\subsection{Markov Processes and Reflection Positivity}\label{sec:markov}

We now turn to a Markov process description of the quantum system, where a Markov process $y_t$ could arise as the ``output-space" description of a NN with $y_t=x_t$, or in the parameter space perspective if $y_t = \theta_t$. The latter case achieves Markov $x_{\theta_t}$ if the architecture is injective, but we emphasize the former output space viewpoint in this section for simplicity. In this perspective, $x_t$ defines joint probability distributions (\ref{stochastic_joint_pde}) over outputs, rather than the ``parameter-space'' description in terms of a NN architecture and probability density over parameters $\theta$. This has the advantage of being ``representation-independent'' in the sense that one need not choose a particular parameterization involving specific $\theta$. As a result, we can present a mechanism for reflection-positivity, somewhat analogous to parameter splitting, which is phrased directly as a sufficient condition on the process $x_t$ and without referring to its parameterization. Furthermore, we will be able to easily implement both RP and the OS axiom \ref{euclidean_invariance} of symmetry and stationarity at once.

To motivation this condition, suppose that we would like to restrict attention to stochastic processes with nowhere-differentiable sample paths, motivated by the importance of such paths for commutation relations which we mentioned at the end of the last subsection. Famous examples of such processes include Brownian motion, Ornstein-Uhlenbeck processes, and any diffusion process obeying a stochastic differential equation
\begin{align}\label{sde}
    d x_t = \mu ( x_t ) \, dt + \sigma ( x_t ) \, d W_t \, ,
\end{align}
where $W_t$ is a Wiener path. If $\sigma ( x_t ) \neq 0$, any solution to an equation of the form (\ref{sde}) is nowhere-differentiable almost surely. All of these processes share the Markov property, which is the statement that ``the future is independent of the past, given the present.'' This is clear given an expression (\ref{sde}) for the infinitesimal increments $d x_t$, which manifestly depend only on the present value of $x_t$ but not on its values at earlier times.

More precisely, a stochastic process $x_t$ is said to be \emph{Markov} if, given times $t_1 < t_2 < \ldots < t_n < t_{n+1}$, one has
\begin{align}\label{markov}
    \mathbb{P} \left( x ( t_{n+1} ) \mid x ( t_1 ) = x_1 \, , \, x ( t_2 ) = x_2 \, , \, \ldots \, , x ( t_n ) = x_n \right) = \mathbb{P} \left( x ( t_{n+1} ) \mid x ( t_n ) = x_n \right) \, ,
\end{align}
where a vertical bar indicates a conditional probability distribution. Said differently, conditioning on a collection of observations $x ( t_i ) = x_i$ of the stochastic process at several past points gives no additional information about the future evolution of the stochastic process than conditioning on the single, latest observation $x ( t_n ) = x_n$.

Intuitively, the Markov property plays a similar role as the parameter splitting mechanism in NN-QM. If a NN-QM $\phi_\theta ( t )$ enjoys parameter splitting with respect to some time $t_n$, this means that there are parameters $\theta^-$ which do not affect the outputs of the network for times $t > t_n$, although they are relevant for the model's behavior at earlier times $t < t_n$. These parameters $\theta^-$ are similar to the outputs $x ( t_1 )$, $\ldots$, $x ( t_{n-1} )$ in the Markov condition (\ref{markov}). Given the value of $x ( t_n )$, which is roughly analogous to the parameters $\theta^0$ in the splitting mechanism, future outputs like $x ( t_{n+1} )$ are independent of the behavior of the stochastic process for $t < t_n$, just as the future behavior of a NN with parameter splitting is independent of the parameters $\theta^-$.

Because parameter splitting implies reflection positivity, and the Markov property is conceptually similar to parameter splitting, one might ask whether the Markov property is related to RP. The answer is affirmative, and this connection has been known for some time. In fact, the condition of reflection positivity was originally introduced in \cite{Osterwalder:1973dx} as a weakening of the Markov condition used by Nelson \cite{NELSON197397}. A proof that the Markov property implies reflection positivity, in the general $d$-dimensional case, can be found in the chapter ``Probability Theory and Euclidean Field Theory'' by Nelson in the lecture notes \cite{Velo:1973tpu}, and a similar two-line argument (using slightly different language) appears in Proposition 1.5 of \cite{KLEIN1978277}. Hence it is natural to think of RP as a generalization of the Markov property \cite{generalization_of_markov}.

Although this result is well-understood, even for $d$-dimensional Euclidean field theories, we find it useful to give an explicit proof in the $d = 1$ case of quantum mechanics which we focus on in this work. By virtue of restricting to a one-dimensional spacetime, we are able to give an argument which is more elementary than the one for general $d$, and which makes the resemblance to the parameter splitting mechanism clear. For a proof of the analogous result for \emph{discrete} time stochastic processes, see Proposition 2.2.5 of \cite{ARVESON1986173}.

\begin{thm}\label{markov_thm}
    Every symmetric Markov process is reflection positive.
\end{thm}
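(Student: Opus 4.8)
The plan is to exploit the single distinguished point $t = 0$ about which reflection positivity (\ref{rp_condition}) is defined, and to use the Markov property to decouple the positive-time insertions from the negative-time insertions once we condition on the value $x(0)$ at that point. Write $X_+ = (x(t_1), \ldots, x(t_n))$ for the collection of positive-time variables, $X_- = (x(-t_1), \ldots, x(-t_n))$ for their reflections, and abbreviate $x_0 = x(0)$. The target quantity is then $\langle F(X_+)\, (F(X_-))^\ast \rangle$, and the goal is to rewrite it as the integral of a manifestly non-negative quantity, in close analogy with the perfect-square structure (\ref{param_splitting_final}) that emerged from parameter splitting.

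First I would establish the key structural fact that, conditioned on $x_0$, the past and the future are independent. This is the standard reformulation of the Markov property (\ref{markov}): for any times $s_1 < \cdots < s_k < 0 < u_1 < \cdots < u_m$, conditioning on $\{x(s_i)\}$ together with $x_0$ gives the same distribution for $\{x(u_j)\}$ as conditioning on $x_0$ alone, so the negative-time block $X_-$ and the positive-time block $X_+$ are conditionally independent given $x_0$. Denoting by $p(x_0)$ the one-point density of $x(0)$ and by $\langle\,\cdot\mid x_0\rangle$ a conditional expectation, this conditional independence lets me factorize
\begin{align}\label{markov_factorize}
    \langle F(X_+)\, (F(X_-))^\ast \rangle = \int dx_0\, p(x_0)\, \big\langle F(X_+) \mid x_0 \big\rangle \cdot \big\langle (F(X_-))^\ast \mid x_0 \big\rangle \, .
\end{align}

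Next I would identify the two conditional expectations with one another. Since the underlying measure is real, complex conjugation passes through the expectation, so $\langle (F(X_-))^\ast \mid x_0\rangle = \left(\langle F(X_-) \mid x_0\rangle\right)^\ast$. The symmetry hypothesis states that $x_t$ and $x_{-t}$ are equivalent processes, hence the joint law of $(x(-t_1), \ldots, x(-t_n), x(0))$ coincides with that of $(x(t_1), \ldots, x(t_n), x(0))$, the time $t = 0$ being fixed by the reflection. Consequently the conditional law of $X_-$ given $x_0$ equals that of $X_+$ given $x_0$, so $\langle F(X_-) \mid x_0\rangle = \langle F(X_+)\mid x_0\rangle =: G(x_0)$. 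Substituting into (\ref{markov_factorize}) yields $\langle F(X_+)\,(F(X_-))^\ast\rangle = \int dx_0\, p(x_0)\, |G(x_0)|^2 \geq 0$, since $p \geq 0$, which is precisely reflection positivity. The resemblance to the splitting argument is manifest: $x_0$ plays the role of $\theta^0$, while the conditional independence plays the role of the factorized density (\ref{factorized_density}).

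The main obstacle I anticipate is the first step: extracting the full block conditional independence of past and future given $x_0$ from the Markov property (\ref{markov}), and doing so cleanly enough to justify the factorization (\ref{markov_factorize}) for arbitrary bounded measurable $F$ rather than merely for products of indicator functions. One must also take slight care in inserting the extra time $t = 0$ into the ordered list of insertion times and in verifying that symmetry matches the two conditional distributions as claimed. These steps are routine, but they are exactly where the measure-theoretic care that is suppressed elsewhere in the paper actually resides.
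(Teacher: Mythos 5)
Your proposal is correct and follows essentially the same route as the paper: insert the time-zero value $x(0)$, use the Markov property to obtain conditional independence of the past and future blocks given $x(0)$ (the paper derives this via two applications of the chain rule, arriving at its equation (\ref{P_decomp})), and then invoke symmetry of the process to identify the two conditional expectations and exhibit $\langle F (TF)^\ast \rangle$ as the integral of $p(x_0)\,|G(x_0)|^2$. The measure-theoretic caveats you flag (block versus single-time conditioning, general bounded measurable $F$) are likewise suppressed in the paper's treatment.
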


\begin{proof}
Let $x_t$ be Markov and symmetric, and fix a collection of positive times $0 < t_1 < \ldots < t_n$. Consider the joint probability distribution
\begin{align}
    \mathbb{P} = \mathbb{P} \left( x ( - t_n ) , \ldots, x ( - t_1 ) , x ( 0 ) , x ( t_1 ) , \ldots , x ( t_n ) \right) \, ,
\end{align}
where we have added the time-zero value $x ( 0 )$ (which is not in the list of $t_i$ and $-t_i$). The quantity $\mathbb{P}$ can be written in conditional form as
\begin{align}
    \mathbb{P} &= \mathbb{P} \left( x ( t_1 ) , \ldots , x ( t_n ) \mid x ( - t_n ) , \ldots, x ( - t_1 ) , x ( 0 ) \right) \cdot \mathbb{P} \left( x ( - t_n ) , \ldots, x ( - t_1 ) , x ( 0 ) \right) \, \nonumber \\
    &= \mathbb{P} \left( x ( t_1 ) , \ldots , x ( t_n ) \mid x ( 0 ) \right) \cdot \mathbb{P} \left( x ( - t_n ) , \ldots, x ( - t_1 ) , x ( 0 ) \right) \, ,
\end{align}
where in the second line we have used the Markov property. Similarly, one can write
\begin{align}
    \mathbb{P} \left( x ( - t_n ) , \ldots, x ( - t_1 ) , x ( 0 ) \right) = \mathbb{P} \left( x ( - t_n ) , \ldots, x ( - t_1 ) \mid x ( 0 ) \right) \cdot \mathbb{P} ( x ( 0 ) ) \, ,
\end{align}
so that
\begin{align}\label{P_decomp}
    \mathbb{P} = \mathbb{P} \left( x ( t_1 ) , \ldots , x ( t_n ) \mid x ( 0 ) \right) \cdot \mathbb{P} \left( x ( - t_n ) , \ldots, x ( - t_1 ) \mid x ( 0 ) \right) \cdot \mathbb{P} ( x ( 0 ) ) \, .
\end{align}
Now given any appropriate function $F \left( x ( t_1 ) , \ldots , x ( t_n ) \right)$ that depends on the values of $x_t$ at positive times, one has
\begin{align}
    \langle F ( T F )^\ast \rangle &= \int \mathbb{P} \left( x_{-n} , \ldots, x_{-1} , x_0 , x_1 , \ldots , x_n \right) \, d x_{-n} \ldots d x_{-1} \, d x_0 \, dx_{1} \ldots d x_n \nonumber \\
    &\qquad \qquad \cdot F \left( x_1 , \ldots , x_n \right) \left( F \left( x_{-1} , \ldots , x_{-n} \right) \right)^\ast \, ,
\end{align}
where $T$ is the time-reversal operator and where we use the shorthand notation $x_i = x ( t_i )$, $x_{-i} = x ( - t_{i} )$, $d x_{-i} = d x ( - t_i )$, $d x_i = d x ( t_i )$. Using the decomposition (\ref{P_decomp}), this is
\begin{align}\label{rp_intermediate}
    \langle F ( T F )^\ast \rangle &= \int \mathbb{P} \left( x_1 , \ldots , x_n \mid x_0 \right) \cdot \mathbb{P} \left( x_{-n} , \ldots, x_{-1} \mid x_0 \right) \cdot \mathbb{P} ( x_0 ) \nonumber \\
    &\qquad \cdot \, d x_{-n} \ldots d x_{-1} \, d x_0 \, dx_{1} \ldots d x_n \cdot F \left( x_1 , \ldots , x_n \right) \left( F \left( x_{-1} , \ldots , x_{-n} \right) \right)^\ast \, \nonumber \\
    &= \int d x_0 \, \mathbb{P} ( x_0 ) \, \left( \int dx_1 \ldots d x_n \, \mathbb{P} \left( x_1 , \ldots , x_n \mid x_0 \right) F \left( x_1 , \ldots , x_n \right) \right) \nonumber \\
    &\qquad \cdot \left( \int dx_{-n} \ldots dx_{-1} \, \mathbb{P} \left( x_{-n} , \ldots, x_{-1} \mid x_0 \right) F \left( x_{-1} , \ldots , x_{-n} \right) \right)^\ast \, .
\end{align}
The integral in the parentheses of the first line in the final equation of (\ref{rp_intermediate}) computes a particular conditional expectation value of the function $F$ in the stochastic process $x_t$. Similarly, the integral in the final line of (\ref{rp_intermediate}) computes the \emph{same} conditional expectation value in the stochastic process $x_{-t}$. Since $x_t$ is symmetric, the processes $x_t$ and $x_{-t}$ are equivalent and these expectations are equal. We therefore conclude that
\begin{align}
    \langle F ( T F )^\ast \rangle = \int d x_0 \, \mathbb{P} ( x_0 ) \, \left| \int dx_1 \ldots d x_n \, \mathbb{P} \left( x_1 , \ldots , x_n \mid x_0 \right) F \left( x_1 , \ldots , x_n \right) \right|^2 \, ,
\end{align}
which is manifestly non-negative.
\end{proof}

Comparing, for instance, the penultimate lines of equations (\ref{param_splitting_final}) and (\ref{rp_intermediate}) highlights the technical similarities between the parameter splitting and Markov mechanisms for achieving reflection positivity. In both cases, one factorizes the expectation value $\langle F ( T F )^\ast \rangle$ into an integral of a product of two factors, one of which is another expectation value associated with only positive times and the other of which concerns only negative times. Using an appropriate symmetry -- arising from either the parameter absorption trick, in the splitting mechanism, or from the explicit assumption that $x_t$ and $x_{-t}$ are equivalent, in the Markov argument -- one then observes that these two factors are equal, which means that $\langle F ( T F )^\ast \rangle$ is the integral of a perfect square, and is hence positive.

However, a notable difference between the two mechanisms is the ease or difficulty of combining reflection positivity with translation invariance. We saw in section \ref{sec:param_split} that incorporating time translation symmetry into the parameter splitting mechanism seems tricky, and na\"ively appears to require introducing an uncountable infinity of parameters. However, from the ``output-space'' perspective, combining these two features is easy. Indeed, we immediately see that every symmetric, stationary Markov process is positive with respect to reflections about any time $t = a$, and thus satisfies both of the OS axioms \ref{euclidean_invariance} and \ref{reflection_positivity}. Morally, this is because the definition (\ref{markov}) of a Markov process makes no reference to any specific time $t = 0$. Rather, the ``independence of the future from the past, given the present'' occurs universally for any value of the present time.

\section{Deep Neural Network Quantum Mechanics}\label{sec:deep}

It is natural to seek general methods for constructing new quantum mechanical models out of old ones. In Section \ref{sec:markov}, we have seen that any symmetric stationary Markov process $x_t$ (which, if it is also a MQM, must admit a neural network representation $x_t = \phi_\theta ( t )$, by Theorem \ref{KKL_theorem}) defines a quantum system that satisfies the Osterwalder-Schrader axioms \ref{euclidean_invariance} and \ref{reflection_positivity}, which are necessary conditions for the existence of a well-defined real-time extension of the Euclidean theory. In this section, we will investigate the allowed transformations that one can perform on such a process $x_t$ which preserve the possibility of continuing the associated quantum system to real-time.

\subsection{Neural Networks Preserve Reflection Positivity}

First let us discuss the Markov property, which was used to achieve reflection positivity in Theorem \ref{markov_thm}. Unfortunately, the space of Markov processes is rather ``fragile'' in the sense that it is not closed under many operations. For instance, a linear combination of Markov processes need not be Markov. Intuitively, this is because knowing the sum of two numbers provides strictly less information than knowing each of the two summands individually. If $z_t = x_t + y_t$ where $x_t$, $y_t$ are Markov, then knowing the present values of $x_t$ and $y_t$ gives enough information to predict their future behavior, and hence the behavior of their sum $z_t$. But if one only knows the present value of $z_t$, it is impossible to recover the precise current values of $x_t$ and $y_t$ since many pairs of numbers have the same sum, and thus the present value of $z_t$ is not sufficient to predict its future behavior.

Similarly, the space of Markov processes is not closed under the operation of applying deterministic functions. As an illustrative example, if $W_t$ is a Wiener process or Brownian path, \cite{function_of_brownian} classified all functions $f : \mathbb{R} \to \mathbb{R}$ for which $x_t = f ( W_t )$ is Markov, and found only four families of possibilities. Again, this is because a general function is not injective, so knowing the value of $f ( W_t )$ is typically insufficient to recover the value of $W_t$.

However, it was pointed out in \cite{generalization_of_markov} that the space of RP processes behaves more nicely with respect to these operations. In particular, one has the following results.

\begin{prop}\label{lincomb}
    If $x_t$ and $y_t$ are two reflection positive processes, then any linear combination $z_t = a x_t + b y_t$, for $a, b \in \mathbb{R}$, is also reflection positive.
\end{prop}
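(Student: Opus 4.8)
The plan is to reduce reflection positivity of $z_t$ to reflection positivity of the joint two-component process $(x_t, y_t)$, and then to establish the latter using the fact that independence turns the Osterwalder--Schrader form into a tensor product of positive forms. Let me note first that the statement is only well-posed once a joint law for $x_t$ and $y_t$ is specified: reflection positivity of each factor says nothing about the cross-correlations $\langle x(t)\, y(s)\rangle$ that enter $z_t$, so some hypothesis on the joint law is unavoidable. I will therefore take $x_t$ and $y_t$ to be \emph{independent}, which is the natural setting in which one combines elementary processes into new ones.

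For the reduction, fix positive times $0 < t_1 < \dots < t_n$ and a bounded measurable $F$. Since $z(\pm t_i) = a\, x(\pm t_i) + b\, y(\pm t_i)$ is a deterministic function of the values of $(x,y)$, the insertion $F(z(t_1),\dots,z(t_n))$ equals $\widetilde F(x(t_1),y(t_1),\dots,x(t_n),y(t_n))$ for $\widetilde F(u_i,v_i) = F(a u_i + b v_i)$, and likewise for the reflected insertion. Hence $\langle F\,(TF)^\ast\rangle_z = \langle \widetilde F\,(T\widetilde F)^\ast\rangle_{(x,y)}$, and it suffices to prove that the joint process $(x_t,y_t)$ is reflection positive, i.e. that $\langle G\,(TG)^\ast\rangle_{(x,y)} \ge 0$ for every bounded $G$ of the positive-time values of $(x,y)$.

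The core step treats $G$ of product form $G = A(x_+)\,B(y_+)$, where $x_+$ abbreviates the positive-time insertions. Because $T$ acts separately on the $x$- and $y$-times and the two processes are independent, the joint Osterwalder--Schrader form factorizes,
\begin{align}
\langle A(x_+)B(y_+)\, \left( T \left( A'(x_+)B'(y_+) \right) \right)^\ast \rangle = B_x(A,A')\, B_y(B,B') \, ,
\end{align}
where $B_x(A,A') = \langle A\,(TA')^\ast\rangle_x$ is the single-process form, and similarly for $B_y$. For a finite family $G = \sum_i A_i(x_+)B_i(y_+)$ this gives $\langle G\,(TG)^\ast\rangle = \sum_{i,j} M^x_{ij} M^y_{ij}$ with $M^x_{ij} = B_x(A_i,A_j)$ and $M^y_{ij} = B_y(B_i,B_j)$. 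Reflection positivity of $x_t$ and $y_t$ makes $M^x$ and $M^y$ positive semidefinite (and Hermitian, using time-reflection symmetry of the processes), so the right-hand side equals $\mathbf 1^{\dagger}(M^x \circ M^y)\,\mathbf 1 \ge 0$ by the Schur product theorem, with $\circ$ denoting the entrywise product. Equivalently, the joint form is the tensor product $B_x \otimes B_y$ of two positive forms and is therefore positive.

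It remains to pass from product test functions to arbitrary bounded measurable $G$ (in particular to $\widetilde F$). I would approximate $G$ in $L^2$ of the joint law by finite sums $\sum_i A_i(x_+)B_i(y_+)$ (Stone--Weierstrass in the continuous case, then a standard density argument for bounded measurable $G$) and conclude by continuity of the form, which follows from $|\langle F\,(TG)^\ast\rangle| \le \|F\|_{L^2}\|G\|_{L^2}$ since $T$ is $L^2$-isometric on symmetric processes. The main obstacle is exactly this last analytic step: verifying Hermiticity and $L^2$-continuity of the joint Osterwalder--Schrader form so that positivity survives the limit, together with the density of product functions. A virtually identical argument --- phrasing reflection positivity as positive-definiteness of the kernel $\langle e^{-i \vec k\cdot w(-\vec t)}\, e^{i \vec k'\cdot w(\vec s)}\rangle$ and noting that independence makes the kernel of $z$ the Hadamard product of those of $x$ and $y$ --- gives the same conclusion and makes clear that the result holds whenever the joint process is itself reflection positive, independence being the clean sufficient condition.
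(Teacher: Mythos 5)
The paper does not actually prove Proposition \ref{lincomb}; it simply cites \cite{generalization_of_markov} for both this result and Proposition \ref{detfun}, so there is no in-text argument to compare against. Your proof is the standard one from that literature and is essentially correct. Your opening observation is also a genuine and worthwhile clarification: as stated, the proposition is not well-posed without a joint law for $x_t$ and $y_t$, and independence (or, more generally, reflection positivity of the joint pair process) is the hypothesis under which the claim actually holds --- this is consistent with how the paper uses the result, since the input processes $y_t^{(i)}$ in the deep NN-QM construction are independent. The core of your argument is sound: reducing to reflection positivity of the pair $(x_t, y_t)$, factorizing the Osterwalder--Schrader form on product test functions via independence, and invoking the Schur product theorem on the two Gram matrices $M^x$ and $M^y$. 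Two small remarks. First, Hermiticity of $M^x$ does not actually require time-reflection symmetry: since $\sum_{i,j} c_i \bar{c}_j M^x_{ij} \geq 0$ holds for all complex coefficients (RP is stated for complex-valued $F$), the form is real-valued on the diagonal and hence Hermitian by polarization, so that appeal to symmetry can be dropped. Second, the closing density step is routine but should be run in $L^2$ of a measure dominating both the positive-time and reflected marginals (or simply with uniformly bounded approximants and dominated convergence), so that $G_n \to G$ and $TG_n \to TG$ converge simultaneously; you flag this correctly as the only remaining analytic detail. Your alternative phrasing via characteristic-function kernels, where independence makes the kernel of $z_t$ the Hadamard product of those of $x_t$ and $y_t$, is an equally valid and arguably cleaner route to the same conclusion.
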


\begin{prop}\label{detfun}
    If $x_t$ is reflection positive and $f : \mathbb{R} \to \mathbb{R}$ is a bounded measurable deterministic function, then the process $z_t = f ( x_t )$ is reflection positive.
\end{prop}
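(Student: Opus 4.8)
The plan is to reduce reflection positivity of $z_t = f(x_t)$ directly to that of $x_t$ by absorbing the deterministic function $f$ into the test function that appears in the RP condition (\ref{rp_condition}). To establish RP for $z_t$, I must verify that for every collection of positive times $t_1, \ldots, t_n > 0$ and every bounded measurable $G$ of $n$ variables,
\begin{align}
    \langle G(z(t_1), \ldots, z(t_n)) \, (G(z(-t_1), \ldots, z(-t_n)))^\ast \rangle \geq 0 \, .
\end{align}
Since by definition $z(t_i) = f(x(t_i))$, the strategy is to introduce the composite function $\widetilde{G}(u_1, \ldots, u_n) := G(f(u_1), \ldots, f(u_n))$, so that the quantity above is identically equal to $\langle \widetilde{G}(x(t_1), \ldots, x(t_n)) \, (\widetilde{G}(x(-t_1), \ldots, x(-t_n)))^\ast \rangle$. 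The desired inequality then follows \emph{immediately} by applying the reflection positivity of $x_t$ to the single test function $\widetilde{G}$.

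The only real content, then, is checking that $\widetilde{G}$ is an admissible test function for the RP condition of $x_t$, i.e.\ that it is bounded and measurable. First I would note that $\widetilde{G} = G \circ (f \times \cdots \times f)$ is measurable, since $f$ measurable implies the coordinatewise map $(u_1, \ldots, u_n) \mapsto (f(u_1), \ldots, f(u_n))$ is measurable, and composing with the measurable $G$ preserves measurability. Boundedness of $\widetilde{G}$ is inherited directly from boundedness of $G$, irrespective of the range of $f$. One subtlety worth recording explicitly is that the complex conjugation in the RP condition interacts correctly with the substitution: because $f$ is \emph{real}-valued, one has $(\widetilde{G}(x(-t_i)))^\ast = (G(f(x(-t_1)), \ldots))^\ast = (G(z(-t_1), \ldots))^\ast$, so the conjugate factor is exactly the one required, with $f$ passing through the conjugation unaffected.

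I expect no genuine obstacle here; the argument is essentially the observation that the class of admissible test functions in (\ref{rp_condition}) is closed under precomposition with a fixed deterministic map. The role of the \emph{boundedness} hypothesis on $f$ (as opposed to mere measurability, which is all the composition argument strictly needs) is to guarantee that $z_t = f(x_t)$ is itself a legitimate square-integrable process, so that its correlators are finite and the RP condition is well-posed in the sense of \ref{assumption_two}. I would close by remarking that this proof is structurally simpler than, and conceptually parallel to, that of Proposition \ref{lincomb}: in both cases the operation performed on the process (linear combination there, postcomposition with $f$ here) can be pushed inside the expectation and reinterpreted as a new admissible test function, after which the hypothesis that $x_t$ (and $y_t$) are RP does all the work.
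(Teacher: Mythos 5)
Your proof is correct, and it is essentially the same argument the paper uses: the paper states Proposition \ref{detfun} without proof (citing the literature), but its proof of the generalization in Theorem \ref{deep_nnqm_theorem} proceeds exactly as you do, defining the composite test function $\widetilde{F} = F \circ \phi_\theta$ and invoking reflection positivity of $x_t$ for that composed bounded measurable function. Your argument is precisely the deterministic special case of that proof, with the same key observation that the admissible test functions are closed under precomposition with $f$.
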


It is also straightforward to see that the operations of taking linear combinations and/or deterministic functions preserve the properties of symmetry and stationarity. These observations suggests a strategy for generating a large class of Euclidean quantum models that admit real-time continuations: begin with some collection of simple examples, such as symmetric stationary Markov processes, and apply a sequence of such operations to produce new systems which still obey the OS axioms \ref{euclidean_invariance} and \ref{reflection_positivity}.

One can generalize Proposition \ref{detfun} further by considering \emph{random} functions of RP processes, such as those defined by a neural network, which still preserve reflection positivity. This is the main result of this section, and is formalized in the following theorem.

\begin{thm}\label{deep_nnqm_theorem}
    Let $x_t$ be a reflection-positive stochastic process and consider a family of bounded measurable random functions realized by a neural network architecture $\phi_\theta$ with parameter density $P ( \theta )$. Then the stochastic process
    \begin{align}
        y_t = \phi_\theta ( x_t )
    \end{align}
    is also reflection-positive.
\end{thm}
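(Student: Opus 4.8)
The plan is to exploit the fact that the two sources of randomness in $y_t = \phi_\theta(x_t)$ --- the draw of the network parameters $\theta \sim P(\theta)$ and the draw of a sample path from the underlying process $x_t$ --- are independent, with the \emph{same} realization of $\theta$ applied simultaneously at all times. This lets me condition on $\theta$ and reduce the claim to the reflection positivity already assumed for $x_t$. In effect, this is the random-function analogue of Proposition \ref{detfun}: there the deterministic map $f$ preserved RP, and here we average an RP-preserving operation over the parameter density.

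Concretely, fix positive times $0 < t_1 < \ldots < t_n$ and a bounded measurable function $F$ of $n$ variables. First I would write the RP expectation for $y_t$ as an iterated integral,
\[
    \langle F(y(t_1), \ldots, y(t_n)) \, (F(y(-t_1), \ldots, y(-t_n)))^\ast \rangle = \int d\theta \, P(\theta) \, \mathbb{E}_x \big[ \, \cdots \, \big] \, ,
\]
where $\mathbb{E}_x$ denotes the expectation over the process $x_t$ alone, and the bracketed quantity is the product of $F(\phi_\theta(x(t_1)), \ldots, \phi_\theta(x(t_n)))$ with the conjugate of its reflected partner $F(\phi_\theta(x(-t_1)), \ldots, \phi_\theta(x(-t_n)))$. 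The interchange of the $\theta$- and $x$-integrations is justified by Fubini, since $F$ (and $\phi_\theta$) is bounded.

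Next, for each fixed $\theta$ I would introduce the composite test function $G_\theta(z_1, \ldots, z_n) = F(\phi_\theta(z_1), \ldots, \phi_\theta(z_n))$. Because $\phi_\theta$ is measurable and $F$ is bounded and measurable, $G_\theta$ is again a bounded measurable function of $n$ variables, hence an admissible choice in the RP condition (\ref{rp_condition}) for $x_t$. The inner expectation is then exactly $\langle G_\theta(x(t_1), \ldots, x(t_n)) \, (G_\theta(x(-t_1), \ldots, x(-t_n)))^\ast \rangle_x$, which is non-negative by the assumed reflection positivity of $x_t$. Since $P(\theta) \geq 0$, the integrand is non-negative pointwise in $\theta$, and integrating preserves this, yielding $\langle F (TF)^\ast \rangle \geq 0$.

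The step I expect to matter most --- indeed, the conceptual crux --- is the observation that a single draw of $\theta$ governs the network at both positive and negative times; it is precisely this \emph{shared} randomness that allows the $\theta$-conditioned expectation to retain the form of a reflection-positive pairing for $x_t$. Had positive and negative times used independent parameter draws, the inner quantity would no longer be of the form $\langle G_\theta (T G_\theta)^\ast \rangle_x$ and the argument would collapse. The remaining technical points are routine: the measurability of $G_\theta$ in $(z_1, \ldots, z_n)$ for each $\theta$ (together with joint measurability in $(\theta, z)$ to license Fubini) and the integrability needed to form the expectations, all of which follow immediately from the hypotheses that $\phi_\theta$ is a bounded measurable random function and that $F$ is bounded and measurable.
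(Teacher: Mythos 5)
Your proposal is correct and follows essentially the same route as the paper's proof: both condition on a fixed draw of $\theta$, define the composite test function $\widetilde{F} = F \circ \phi_\theta$ (your $G_\theta$), invoke the reflection positivity of $x_t$ for this bounded measurable function, and then integrate the resulting non-negative quantity against the non-negative density $P(\theta)$. Your added emphasis on the shared $\theta$-draw across positive and negative times and on the Fubini interchange matches the implicit structure of the paper's argument.
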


Note that $y_t$ has two sources of randomness: to generate a sample path of $y_t$, we first generate an entire realization of the process $x_{t}$, \emph{and} we perform a random draw of the parameters $\theta$ to obtain an instance of the function $\phi_\theta$, then apply $\phi_\theta$ to $x_t$ pointwise.

\begin{proof}

Fix a collection of positive times $0 < t_1 < \ldots < t_n$. Following the notation in the proof of Theorem \ref{markov_thm}, we write $y_i = y ( t_i )$, $y_{-i} = y ( - t_i )$, and so on. Let $F ( y_1 , \ldots, y_n )$ be a bounded measurable function which depends only on the values of $y_i = y ( t_i )$ at these positive times. Then consider the expectation value
\begin{align}
    \langle F ( T F )^\ast \rangle_y &= \int d \theta \, P ( \theta ) \int d x_{-n} \ldots d x_{-1} \, dx_1 \, \ldots \, d x_n \, \mathbb{P} ( x_{-n} , \ldots, x_n ) \nonumber \\
    &\qquad \qquad \qquad \qquad F \left( \phi_\theta ( x_1 ) , \ldots, \phi_\theta ( x_n ) \right) \left( F \left( \phi_\theta ( x_{-1} ) , \ldots, \phi_\theta ( x_{-n} ) \right) \right)^\ast \, .
\end{align}
We use the notation $\langle \, \cdot \, \rangle_y$ to emphasize that this is an expectation value taken with respect to the process $y_t = \phi_\theta ( x_t )$. Now define the function $\widetilde{F} = F \circ \phi_\theta$ of $n$ variables by
\begin{align}
    \widetilde{F} ( x_1 , \ldots , x_n ) = F \left( \phi_\theta ( x_1 ) , \ldots, \phi_\theta ( x_n ) \right) \, ,
\end{align}
so that
\begin{align}
    \langle F ( T F )^\ast \rangle_y &= \int d \theta \, P ( \theta ) \int d x_{-n} \ldots d x_{-1} \, dx_1 \, \ldots \, d x_n \, \mathbb{P} ( x_{-n} , \ldots, x_n ) \nonumber \\
    &\qquad \qquad \qquad \qquad \widetilde{F} \left( x_1 , \ldots , x_n \right) \left( \widetilde{F} \left( x_{-1} , \ldots , x_{-n} \right) \right)^\ast \nonumber \\
    &= \int d \theta P ( \theta ) \, \langle \widetilde{F} ( T \widetilde{F} )^\ast \rangle_x \, .
\end{align}
We have introduced the symbol $\langle \, \cdot \, \rangle_x$ to refer to the expectation value taken with respect to the stochastic process $x_t$. For each fixed choice of the parameters $\theta$, the function $\widetilde{F}$ is the composition of two bounded measurable functions of $n$ variables, and thus satisfies the technical condition on the function appearing in the statement (\ref{rp_condition}) of reflection positivity. Since we have assumed the stochastic process $x_t$ is reflection positive, one has
\begin{align}
    \langle \widetilde{F} ( T \widetilde{F} )^\ast \rangle_x \geq 0 \, .
\end{align}
We therefore see that $\langle F ( T F )^\ast \rangle_y$ is the integral of a non-negative definite quantity, multiplied by a non-negative parameter density $P ( \theta )$, so we conclude that
\begin{align}
    \langle F ( T F )^\ast \rangle_y \geq 0 \, ,
\end{align}
and hence $y_t$ is reflection positive.
\end{proof}
Theorem \ref{deep_nnqm_theorem} motivates the following construction. Let $y_t^{(1)}$, $\ldots$, $y_t^{(n)}$ be a finite set of reflection-positive processes (for instance, symmetric Markov processes). If the $y_t^{(i)}$ are MQMs, we may realize each of them as neural networks by presenting them using the KKL decomposition. We then view these processes as inputs to a new neural network. Choose a width $D_1$ for the next layer and define a collection of $D_1$ processes $z^{(1, i)}_t$ by
\begin{align}
    z^{(1, i)}_t = \phi_\theta^{(1, i)} \left( \sum_{j=1}^{n} c_{j}^{(1, i)} y_t^{(j)} \right) \, , \qquad i = 1 , \ldots, D_1 \, ,
\end{align}
where $\phi_\theta^{(1, i)}$ are neural networks (or general random functions) and $c_{j}^{(1, i)}$ are real coefficients, which can be chosen either to be constants or random variables with some probability distribution. We then continue: choosing another width $D_2$, we define
\begin{align}
    z^{(2, i)} = \phi_\theta^{(2, i)} \left( \sum_{j=1}^{D_1} c_{j}^{(2, i)} z_t^{(j)} \right) \, , \qquad i = 1 , \ldots, D_2 \, ,
\end{align}
and so on, up to a collection of processes $z^{(N, i)}$, $i = 1 , \ldots, D_N$, for a positive integer $N$. Finally, we construct the output layer
\begin{align}
    x_t =  \phi_\theta^{(x)} \left( \sum_{j=1}^{D_N} c_{j}^{(N)} z_t^{(N, j)} \right) \, .
\end{align}
We refer to any quantum system $x_t$ obtained by a sequence of steps of this form as a \emph{deep neural network quantum mechanics} (deep NN-QM). Our theorem guarantees that reflection positivity is preserved at each layer of this network, so the resulting stochastic process $x_t$ is reflection positive. If all of the input processes $y_t^{(i)}$ are symmetric and stationary, so that they satisfy the OS axiom \ref{euclidean_invariance}, then $x_t$ enjoys this property as well.

Although it is not our primary focus here, let us remark that certain choices of deep NN-QM architectures also preserve the mixing property, in the sense that if each of the $y_t^{(i)}$ obeys the cluster decomposition axiom \ref{cluster_decomposition}, then $x_t$ also inherits this property. This is not, however, true in general: a counter-example is a network which simply shifts the input process by a random offset, $y_t = x_t + \theta$ with $\theta$ drawn randomly and separately for each sample path $x_t$. As is known in ergodic theory, such an ensemble of mixing processes with random shifts may fail to be mixing (this is in contrast with the application of a \emph{deterministic} measurable function, which always preserves the mixing property).

Therefore, given any collection of input process $y_t^{(i)}$ which satisfy all of the OS axioms, the deep NN-QM construction generically produces a new quantum system that also satisfies \ref{temperedness}, \ref{euclidean_invariance}, \ref{symmetric}, and \ref{reflection_positivity}, and for certain choices also \ref{cluster_decomposition}; in the latter case the deep NN-QM hence gives rise to a well-defined real-time quantum mechanical theory.

The quantum systems produced by this deep NN-QM procedure are non-trivial, even if the input processes $y_t^{(i)}$ are quite simple. For instance, suppose that all of the $y_t^{(i)}$ are Ornstein-Uhlenbeck processes, which are associated with quantum harmonic oscillators with some mass parameter. Then in addition to being mixing, symmetric, stationary, and Markov -- and consequently satisfying all of the OS axioms -- the systems $y_t^{(i)}$ are Gaussian, and are in some sense trivial since they are analogous to free quantum field theories. However, the output $x_t$ of a generic deep NN-QM built from these Markov, Gaussian input processes will be neither Markov nor Gaussian, although such an $x_t$ is symmetric, stationary, and RP. The non-Gaussianity of $x_t$ is clear since a non-linear transformation of a Gaussian process is no longer Gaussian, and the neural networks $\phi_\theta^{(i, j)}$ used in this construction are nonlinear in general. In particular, the higher-order cumulants of a process obtained from applying a generic deep NN-QM to Gaussian inputs will be non-vanishing, which signals that they are interacting quantum mechanical theories.

\subsection{Examples and Numerics}\label{sec:examples}

In this section, we will present some numerical results for quantum observables obtained by simulating stochastic processes on a computer. The simplest model which can be studied in this way is the quantum harmonic oscillator, which is associated with the statistics of the Ornstein-Uhlenbeck process and is discussed in Section \ref{sec:OU}. We then use OU processes as inputs for a deep NN-QM construction in Section \ref{sec:NN_of_OU}. As we have mentioned, since the Ornstein-Uhlenbeck process is symmetric, stationary, and Markov, any such deep NN-QM constructed from OU building blocks is symmetric, stationary, and reflection positive. We will choose examples of deep NN-QM architectures for which the output process is also mixing, which gives rise to families of Euclidean quantum models obeying all of the OS axioms and which can thus be continued to real time.

\subsubsection{Ornstein-Uhlenbeck Process}\label{sec:OU}

The Ornstein-Uhlenbeck process \cite{PhysRev.36.823} is a famous and well-studied stochastic process, which enjoys many desirable features that make it amenable to theoretical analysis, while still exhibiting sufficient richness as to be interesting. One can think of the OU process as a mean-reverting random walk, in the sense that the trajectory $x_t$ of such a process exhibits Brownian fluctuations in addition to a ``restoring force'' which acts to return the position to its equilibrium value $x_t = 0$. This two effects are modeled mathematically via the two terms on the right side of the stochastic differential equation
\begin{align}\label{OU_SDE}
    dx_t = - \theta x_t \, dt + \sigma \, d W_t \, ,
\end{align}
where $W_t$ is a Wiener path. Here $\theta$ measures the strength of the deterministic ``drift'' part of the SDE, which leads to the mean-reverting behavior, and the ``noise'' parameter $\sigma$ describes the size of the random fluctuations. This is an example of the general form (\ref{sde}) of the SDE which defines an It\^o diffusion. As we have remarked, the solution of any such SDE is a Markov process which, assuming symmetry, is RP.

An OU process with a fixed initial condition, say $x ( 0 ) = 0$, is not stationary. However, the Ornstein-Uhlenbeck process admits a stationary distribution, in the sense that if the initial condition $x_0 = x ( 0 )$ is drawn from a particular probability distribution $\mathbb{P}_{s} ( x_0 )$, then the resulting stochastic process is stationary. This distribution is given by
\begin{align}\label{ou_stationary}
    \mathbb{P}_{s} ( x_0 ) = \sqrt{ \frac{\theta}{\pi \sigma^2} } \exp \left( - \frac{ \theta x_0^2}{\sigma^2} \right) \, ,
\end{align}
which is a Gaussian distribution with mean zero and variance $\frac{\sigma^2}{2 \theta}$. When we speak of ``the'' Ornstein-Uhlenbeck process, we always mean the version of this process where the initial position of the process is drawn from the stationary distribution (\ref{ou_stationary}). In this case, the OU process is stationary, symmetric, and Markov, and it also turns out to be Gaussian.\footnote{By a theorem of Doob, the OU process is the \emph{unique} stationary Gaussian Markov process \cite{DoobJ.L.1942TBMa}.}

The statistics of the Ornstein-Uhlenbeck process precisely correspond to the Euclidean-time quantum harmonic oscillator, with a Hamiltonian that has been shifted by a constant so that the ground state energy is $E_0 = 0$ rather than $E_0 = \frac{1}{2} \hbar \omega$. As we will review in more detail around equation (\ref{transformed_hamiltonian}), this can be seen by performing a similarity transformation on the Fokker-Planck operator of the Ornstein-Uhlenbeck process \cite{Pavliotis2014}, bringing it to the form of an effective Hamiltonian
\begin{align}\label{ou_ham}
    H_{\text{OU}} = - \frac{\sigma^2}{2} \frac{d^2}{dx^2} + \frac{\theta^2}{2 \sigma^2} x^2 - \frac{\theta}{2} \, .
\end{align}
Comparing this to the Euclidean-time Hamiltonian of the quantum harmonic oscillator,
\begin{align}
    H_{\text{SHO}} = - \frac{\hbar^2}{2m} \frac{d^2}{dx^2} + \frac{1}{2} m \omega^2 x^2 \, ,
\end{align}
we see that the dictionary between the parameters is
\begin{align}\label{parameter_map}
    \theta = \omega \, , \qquad \sigma^2 = \frac{\hbar}{m} \, ,
\end{align}
and that the two Hamiltonians are related by
\begin{align}\label{ham_relation}
    H_{\text{OU}} = \frac{1}{\hbar} \left( H_{\text{SHO}} - \frac{1}{2} \hbar \omega \right) \, .
\end{align}
The overall factor of $\frac{1}{\hbar}$ is due to the fact that the quantum mechanical time evolution operator is $e^{- H t / \hbar}$, whereas the generator of time translations in the stochastic process has no factor of $\hbar$. We therefore see that the two Hamiltonians $H_{\text{OU}}$ and $H_{\text{SHO}}$ agree, up to this overall scaling and the shift by $\frac{\theta}{2} = \frac{\hbar \omega}{2}$ which we mentioned.

\subsubsection*{\ul{\it Two-point function}}

One of the simplest quantities that one can compute numerically for the OU process is the two-point function $G^{(2)} ( t, s )$. The well-known closed-form expression for this correlator is
\begin{align}\label{OU_two_point}
    G^{(2)} ( t, s ) = \langle x ( t ) x ( s ) \rangle = \frac{\sigma^2}{2 \theta} \exp \left( - \theta | t - s | \right) \, .
\end{align}
Because the Ornstein-Uhlenbeck process is Gaussian, all of the higher correlation functions $G^{(n)}$, $n \geq 3$, are completely fixed by the two-point function. The fact that the two-point function decomposes into a product of one-point functions at large separation,
\begin{align}
    \lim_{| t - s | \to \infty} \langle x ( t ) x ( s ) \rangle = \langle x ( t ) \rangle \langle x ( s ) \rangle = 0 \, ,
\end{align}
is an implication of the mixing property of the OU process, which is physically interpreted as the statement that the quantum harmonic oscillator enjoys cluster decomposition.

As a first test of our numerics, we compare the simulated two-point function for the Ornstein-Uhlenbeck process to the theoretical expression (\ref{OU_two_point}) in Figure \ref{fig:ou_correlation}, finding agreement between the numerical estimate and the theoretical prediction.

\begin{figure}[htbp]
    \includegraphics[width=\linewidth]{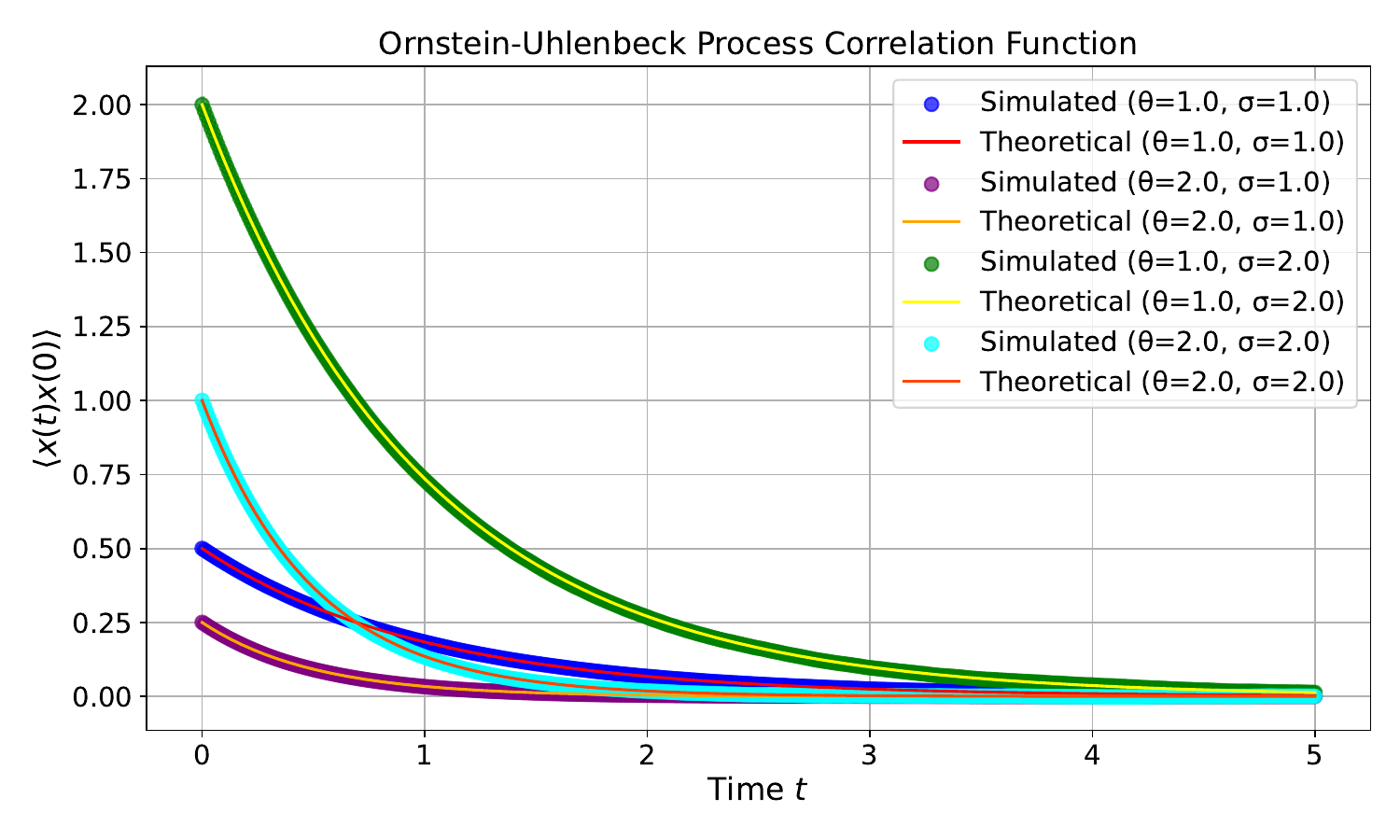}
    \caption{The correlation function $G^{(2)} ( t , 0 )$ obtained by simulating Ornstein-Uhlenbeck processes with $\theta, \sigma \in \{ 1, 2 \}$ for a total time $T = 5$ and step size $\Delta t = 0.01$. The initial position is drawn from the stationary distribution (\ref{ou_stationary}) and updates are performed using the Euler-Maruyama method. We carry out 200 epochs of simulations with 10,000 sample paths per epoch, compute the average correlator $G^{(2)} ( t , 0 )$ for each time $t$, and use the standard deviation across epochs for the error bars. The experimental results are shown as scatter plots with cool colors (green, cyan, purple, blue), while the corresponding analytical curves (\ref{OU_two_point}) for the two-point functions are drawn above the data points in warm colors (red, orange, yellow, and orange-red) and lie within all of the error bars. The marker size for the simulated results has been enlarged to enhance visibility; the error bars are plotted but are not visible because they are smaller than the corresponding markers.}
    \label{fig:ou_correlation}
\end{figure}

In view of the spectral representation (\ref{spectral_two_point}) for the two-point function and the fact that the matrix element $\langle 0 \mid x \mid n \rangle$ vanishes for the harmonic oscillator unless $n = 1$, we can immediately read off the gap between the ground state energy $E_0^{\text{OU}}$ and the first excited state energy $E_1^{\text{OU}}$ from the rate of the exponential decay of this two-point function. In this case, we find $E_1^{\text{OU}} - E_0^{\text{OU}} = \theta$. Using the dictionary (\ref{parameter_map}) and the relation $E_n^{\text{OU}} = \frac{1}{\hbar} E_n^{\text{SHO}}$, which follows from (\ref{ham_relation}), this gives $E_1^{\text{SHO}} - E_0^{\text{SHO}} = \hbar \omega$, which agrees with the gap to the first excited state in the harmonic oscillator. We will review a different method for extracting spectral data about the quantum oscillator from the OU process below.

\subsubsection*{\ul{\it Commutator}}

We have mentioned, in Section \ref{sec:param_split} and Appendix \ref{app:commutators}, that the existence of nowhere-differentiable paths is crucial for non-trivial commutation relations. We would now like to see how this works for the Ornstein-Uhlenbeck realization of the harmonic oscillator.

Since the Ornstein-Uhlenbeck process satisfies a stochastic differential equation with non-zero noise term, in the continuum limit the sample paths of this process will be differentiable nowhere almost surely. This means that, in a numerical simulation of OU paths using the Euler-Maruyama method with step size $\Delta t$, the estimated derivative
\begin{align}
    \frac{\Delta x}{\Delta t} = \frac{x_{t + \Delta t} - x_t}{\Delta t}
\end{align}
diverges as the discretization becomes finer and finer ($\Delta t \to 0$). However, the expression
\begin{align}\label{Ct_defn}
    C \left( t \right) = x_t \frac{x_t - x_{t-\Delta t}}{\Delta t} - x_t \frac{x_{t + \Delta t} - x_t}{\Delta t} \, ,
\end{align}
which estimates the commutator $[ \hat{x} ( t ) , \hat{p} ( t ) ]$, should remain \emph{finite} as the step size is taken small (as we have argued in Appendix \ref{app:commutators} for the case of the free quantum particle). Since $[\hat{x} ( t ) , \hat{p} ( t ) ] = \hbar$ for quantum mechanics in Euclidean signature, we expect $C(t)$ to be constant in time and proportional to $\hbar$, which scales as $\sigma^2$ according to the dictionary between OU and SHO parameters given in equation (\ref{parameter_map}). In our numerical simulations we find that this expectation is indeed borne out; the results are presented in Figure \ref{fig:ou_commutator}.

\begin{figure}[htbp]
    \includegraphics[width=\linewidth]{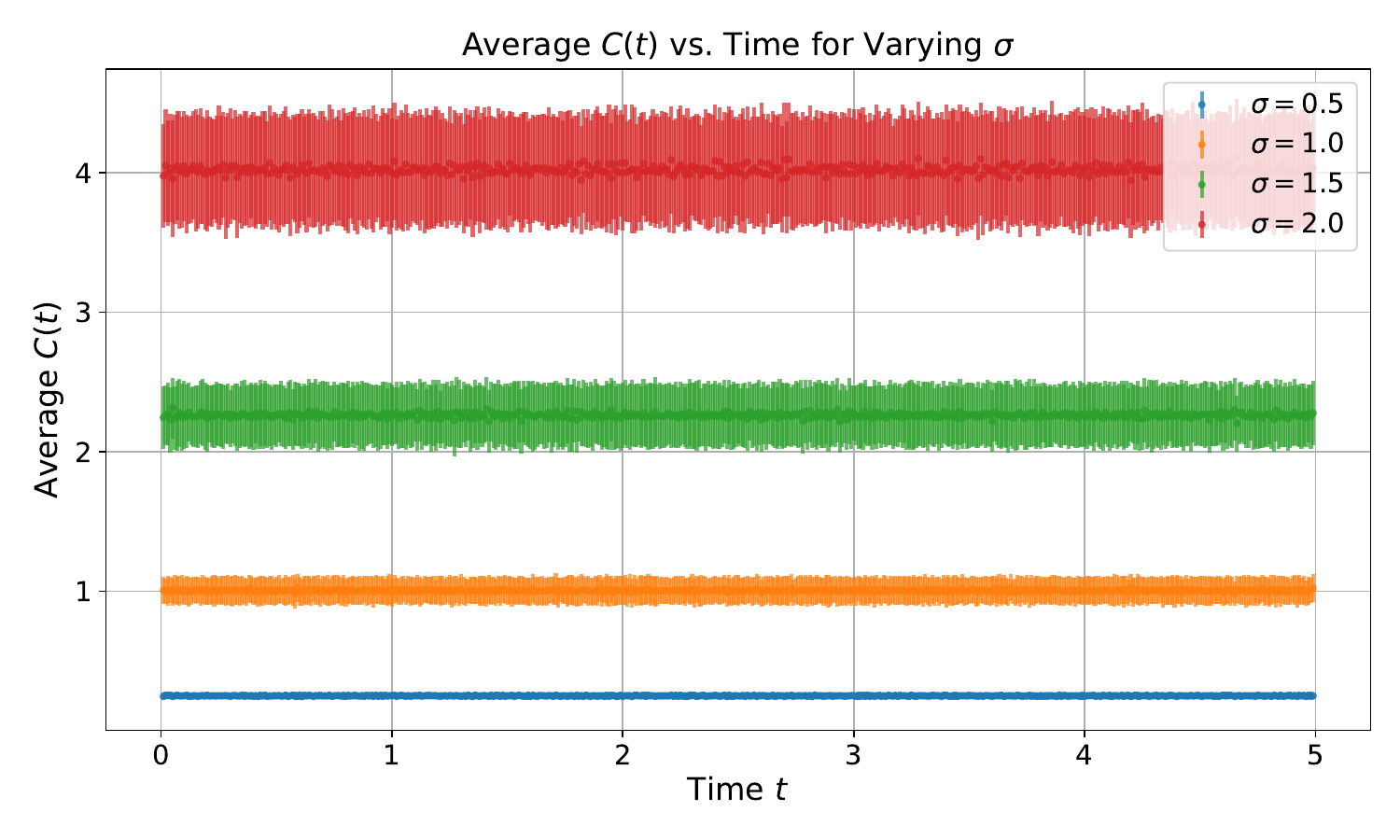}
    \caption{The values of the commutator $C(t)$, defined in (\ref{Ct_defn}), for Ornstein-Uhlenbeck processes with $\theta = 1$ and varying values of $\sigma$. We generate $10,000$ sample paths per epoch for 200 epochs, using the Euler-Maruyama method with step size $\Delta t = 0.01$ and total time $T = 5$, and plot the average value $\langle C ( t ) \rangle$ with error bars given by the standard deviation across epochs. The value of the commutator is constant in time and scales as the square of $\sigma$, as expected from the relation $\hbar = \sigma^2 m$ of equation (\ref{parameter_map}). This reproduces the commutator relation $[ \hat{x} , \hat{p} ] = \hbar$ for a quantum model in Euclidean signature. Although the displayed error bars appear bigger for the commutators with larger values of $\sigma$, the ratio of the standard deviation to the mean is approximately constant across experiments.}
    \label{fig:ou_commutator}
\end{figure}

\subsubsection*{\ul{\it Uncertainty relations}}

One of the most basic features of quantum mechanics is the Heisenberg uncertainty principle (HUP). In terms of the uncertainty $\Delta A$ associated with an observable $\hat{A}$,
\begin{align}
    \left( \Delta A \right)^2 = \langle \hat{A}^2 \rangle - \langle \hat{A} \rangle^2 \, ,
\end{align}
the HUP bounds the product of $\Delta A$ and the uncertainty $\Delta B$ in a second observable $\hat{B}$:
\begin{align}
    \Delta A \, \Delta B \geq \frac{1}{2} \left| \big\langle \big[ \hat{A} , \hat{B} \big] \big\rangle \right| \, . 
\end{align}
As we have reviewed, in Euclidean signature the position and momentum operators obey $[ \hat{x} , \hat{p} ] = \hbar$. Therefore, one obtains the general lower bound
\begin{align}
    \Delta x \, \Delta p \geq \frac{\hbar}{2} \, .
\end{align}
In the ground state of the harmonic oscillator, which corresponds to the stationary distribution of the OU process, this inequality is saturated and one finds $\Delta x \, \Delta p = \frac{\hbar}{2}$. 

We would like to verify the saturation of this uncertainty relation numerically using simulations of the Ornstein-Uhlenbeck process. However, to do this, we must first identify a suitable definition of expectation values of powers of the momentum operator $\hat{p}$. When $m = 1$, we expect that $p = \dot{x}$, and we have already pointed out that the continuum Ornstein-Uhlenbeck process has sample paths which are differentiable nowhere. We therefore cannot use the na\"ive definition of the derivative $\dot{x}$ for computing momentum expectation values, as these derivatives will diverge as the step size $\Delta t$ is taken to zero.

There are two ways to sidestep this issue, which turn out to be equivalent for the harmonic oscillator. One way is to represent the momentum operator in terms of the derivative $\frac{\partial}{\partial x}$ acting on wavefunctions, as we do in ordinary quantum mechanics, and compute expectation values of this derivative using the stationary distribution of the Ornstein-Uhlenbeck process. The second way, which we follow here, is motivated by the observation that the OU two-point function (\ref{OU_two_point}) is non-differentiable at the point $t = s$, but admits a one-sided derivative. That is, if we restrict to $t > s$, one has
\begin{align}
    \langle x ( t ) x ( s ) \rangle = \frac{\sigma^2}{2 \theta} \exp \left( - \theta ( t - s ) \right) \, .
\end{align}
One may then take derivatives of this expectation value to define momentum correlators. Following this strategy, we define the second moment of the momentum operator as
\begin{align}\label{psq_defn}
    \langle p ( t )^2 \rangle &= \lim_{\substack{h \to 0^+ \\ k \to 0^+}} \left( \frac{d}{d h} \frac{d}{d k} \langle x ( t + h ) x ( t - k ) \rangle \right) \, .
\end{align}
The notation $h \to 0^+$, $k \to 0^+$ mean that the variables $h, k$ are taken to zero from above (i.e. through strictly positive values). This definition ensures that the ordering
\begin{align}
    t - k < t < t + h
\end{align}
is maintained throughout the limiting procedure, so that we avoid the non-differentiability of the two-point function at coincident points. In numerical simulations, we use a finite-difference approximation to the quantity (\ref{psq_defn}):
\begin{align}
    \langle p ( t )^2 \rangle = \left\langle \frac{ x ( t + \Delta t ) x ( t - \Delta t ) - x ( t + \Delta t ) x ( t ) - x ( t ) x ( t - \Delta t ) + x(t)^2}{(\Delta t )^2} \right\rangle \, .
\end{align}
\begin{table}[htbp]
   \noindent\makebox[\linewidth][l]{\hspace*{-1.2cm}
  \begin{tabular}{ccccccccc}
    \toprule
    $\theta$ & $\sigma$ & $dt$ & \shortstack{$(\Delta x)^2$ \\ (sim)} & \shortstack{$(\Delta x)^2$ \\ (th)} & \shortstack{$(\Delta p)^2$ \\ (sim)} & \shortstack{$(\Delta p)^2$ \\ (th)} & \shortstack{$(\Delta x)^2\cdot(\Delta p)^2$ \\ (sim)} & \shortstack{$(\Delta x)^2\cdot(\Delta p)^2$ \\ (th)} \\
    \midrule
    $\frac{1}{2}$ & $\frac{1}{2}$ & 0.100 & $0.2564 \pm 0.0005$ & 0.2500 & $0.0641 \pm 0.0008$ & 0.0625 & $0.0164 \pm 0.0002$ & 0.0156 \\
    $\frac{1}{2}$ & $\frac{1}{2}$ & 0.010 & $0.2509 \pm 0.0015$ & 0.2500 & $0.0633 \pm 0.0077$ & 0.0625 & $0.0159 \pm 0.0019$ & 0.0156 \\
    $\frac{1}{2}$ & $1$         & 0.100 & $1.0256 \pm 0.0018$ & 1.0000 & $0.2560 \pm 0.0034$ & 0.2500 & $0.2626 \pm 0.0034$ & 0.2500 \\
    $\frac{1}{2}$ & $1$         & 0.010 & $1.0023 \pm 0.0064$ & 1.0000 & $0.2503 \pm 0.0316$ & 0.2500 & $0.2509 \pm 0.0316$ & 0.2500 \\
    $\frac{1}{2}$ & $2$         & 0.100 & $4.1004 \pm 0.0077$ & 4.0000 & $1.0270 \pm 0.0125$ & 1.0000 & $4.2109 \pm 0.0513$ & 4.0000 \\
    $\frac{1}{2}$ & $2$         & 0.010 & $4.0080 \pm 0.0261$ & 4.0000 & $1.0099 \pm 0.1334$ & 1.0000 & $4.0475 \pm 0.5330$ & 4.0000 \\
    $1$         & $\frac{1}{2}$ & 0.100 & $0.1316 \pm 0.0002$ & 0.1250 & $0.1315 \pm 0.0008$ & 0.1250 & $0.0173 \pm 0.0001$ & 0.0156 \\
    $1$         & $\frac{1}{2}$ & 0.010 & $0.1257 \pm 0.0005$ & 0.1250 & $0.1252 \pm 0.0075$ & 0.1250 & $0.0157 \pm 0.0009$ & 0.0156 \\
    $1$         & $1$         & 0.100 & $0.5262 \pm 0.0007$ & 0.5000 & $0.5263 \pm 0.0033$ & 0.5000 & $0.2770 \pm 0.0017$ & 0.2500 \\
    $1$         & $1$         & 0.010 & $0.5023 \pm 0.0023$ & 0.5000 & $0.5073 \pm 0.0307$ & 0.5000 & $0.2548 \pm 0.0155$ & 0.2500 \\
    $1$         & $2$         & 0.100 & $2.1044 \pm 0.0029$ & 2.0000 & $2.1072 \pm 0.0137$ & 2.0000 & $4.4343 \pm 0.0277$ & 4.0000 \\
    $1$         & $2$         & 0.010 & $2.0097 \pm 0.0090$ & 2.0000 & $2.0036 \pm 0.1307$ & 2.0000 & $4.0269 \pm 0.2646$ & 4.0000 \\
    $2$         & $\frac{1}{2}$ & 0.100 & $0.0694 \pm 0.0001$ & 0.0625 & $0.2778 \pm 0.0008$ & 0.2500 & $0.0193 \pm 0.0001$ & 0.0156 \\
    $2$         & $\frac{1}{2}$ & 0.010 & $0.0631 \pm 0.0002$ & 0.0625 & $0.2525 \pm 0.0080$ & 0.2500 & $0.0159 \pm 0.0005$ & 0.0156 \\
    $2$         & $1$         & 0.100 & $0.2777 \pm 0.0003$ & 0.2500 & $1.1114 \pm 0.0032$ & 1.0000 & $0.3087 \pm 0.0009$ & 0.2500 \\
    $2$         & $1$         & 0.010 & $0.2524 \pm 0.0009$ & 0.2500 & $1.0154 \pm 0.0317$ & 1.0000 & $0.2563 \pm 0.0079$ & 0.2500 \\
    $2$         & $2$         & 0.100 & $1.1109 \pm 0.0011$ & 1.0000 & $4.4458 \pm 0.0143$ & 4.0000 & $4.9390 \pm 0.0155$ & 4.0000 \\
    $2$         & $2$         & 0.010 & $1.0098 \pm 0.0032$ & 1.0000 & $4.0375 \pm 0.1261$ & 4.0000 & $4.0772 \pm 0.1263$ & 4.0000 \\
    \bottomrule
  \end{tabular}
  }
    \caption{We compare the simulated (``sim'') values of $\left( \Delta x \right)^2$, $\left( \Delta p \right)^2$ (defined using (\ref{psq_defn}) and (\ref{pdefn})), and their product, to the predicted theoretical (``th'') values $\left( \Delta x \right)^2 = \frac{\sigma^2}{2 \theta}$,  $\left( \Delta p \right)^2 = \frac{\sigma^2 \theta}{2}$, and $(\Delta x)^2\cdot(\Delta p)^2 = \frac{\sigma^4}{4}$. Simulations are performed using $200$ epochs with $10,000$ OU sample path per epoch, where each path consists of $1,000$ time steps. We average across time steps and epochs, and the errors indicated by $\pm$ are the standard deviation across epochs. We contrast two different sizes of the time step used in the Euler-Maruyama method, $dt = 0.1$ and $dt = 0.01$, and we see that the simulation results become closer to the theoretical values for the smaller choice of $dt$ due to reduced discretization error. The results confirm the relation $\left( \Delta x \right)^2 \left( \Delta p \right)^2 = \frac{\hbar^2}{4}$ for the ground state of the harmonic oscillator, where the effective $\hbar$ of the process is determined by $\sigma^2$ as in (\ref{parameter_map}).}
    \label{tab:hup}
\end{table}
Similarly, we define
\begin{align}\label{pdefn}
    \langle p ( t ) \rangle = \frac{d}{dt} \langle x ( t ) \rangle \, ,
\end{align}
and as $\langle x ( t ) \rangle = 0$ this quantity vanishes for the true continuum OU process.

The results of the numerical computation of $\left( \Delta x \right)^2$ and $\left( \Delta p \right)^2$, for various values of $\theta$, $\sigma$, and the discretization time $\Delta t$, are presented in Table \ref{tab:hup}, which confirms that the uncertainty relation is saturated for the harmonic oscillator ground state within the numerical error induced by our simulation scheme.

\subsubsection*{\ul{\it Spectrum and eigenfunctions}}

Finally, we turn to the numerical computation of the energy eigenvalues and eigenfunctions of the harmonic oscillator. Because the spectrum and energy eigenstates are properties of the Hamiltonian operator, which is the generator of infinitesimal time translations, one might expect that this data can be extracted from the update rule (\ref{OU_SDE}) that describes how OU sample paths evolve over a small time step. Indeed this is the case -- by a standard argument, which we will briefly review, the generator of a stochastic process which admits a stationary distribution can be transformed into a Schr\"odinger-type operator via a certain similarity transformation, which can then be diagonalized to obtain the energies. See, for instance, section 4.9 of \cite{Pavliotis2014} for a more thorough discussion of this point.

We first mention that, rather than describing the time evolution of a specific sample path $x_t$ via the SDE (\ref{OU_SDE}), we can equivalently specify a differential equation that models the dynamics of the probability density $\mathbb{P} ( x, t )$ of an ensemble of OU paths at a time $t$. This equation takes the form
\begin{align}\label{fp_pde}
    \frac{\partial \mathbb{P}}{\partial t} &= \theta \frac{\partial}{\partial x} \left( x \mathbb{P} \right) + \frac{\sigma^2}{2} \frac{\partial^2 \mathbb{P}}{\partial x^2} \nonumber \\
    &= \mathcal{L} P \, ,
\end{align}
which is called the Fokker-Planck equation, and
\begin{align}\label{fokker_planck_defn}
    \mathcal{L} f = \frac{\sigma^2}{2} \frac{\partial^2 f}{\partial x^2} + \theta \frac{\partial}{\partial x} \left( x f \right)
\end{align}
is called the generator of the process, or the Fokker-Planck operator. This operator $\mathcal{L}$ generates infinitesimal time translations of the \emph{probability density} $\mathbb{P} ( x, t )$, but this is not the same as the operator that time-evolves the \emph{wavefunction} $\psi ( x, t )$ in conventional quantum mechanics, since $\mathbb{P} ( x, t ) = \left| \psi ( x, t ) \right|^2$. One way to see this is to note that the Fokker-Planck operator is not self-adjoint with respect to the inner product on functions:
\begin{align}\label{not_self_adjoint}
    \langle f , \mathcal{L} g \rangle &= \int dx \, f ( x ) \left( \frac{\sigma^2}{2} \frac{\partial^2 g}{\partial x^2} + \theta g + \theta x \frac{\partial g}{\partial x} \right) \nonumber \\
    &\neq \int dx \, \left( \frac{\sigma^2}{2} \frac{\partial^2 f}{\partial x^2} + \theta f + \theta x \frac{\partial f}{\partial x} \right) g(x) \nonumber \\
    &= \langle \mathcal{L} f , g \rangle \, .
\end{align}
On the other hand, we expect that the Hamiltonian operator should be self-adjoint when acting on wavefunctions. The failure of self-adjointness is due to the final term $\theta x \frac{\partial g}{\partial x}$ in the first line of (\ref{not_self_adjoint}); performing an integration by parts on this term does not yield the expression on the second line. The standard technique for removing such a first derivative term is through the introduction of an integrating factor. Given a differential operator
\begin{align}
    D F = a F''(x) + p ( x ) F' ( x ) + q ( x ) F ( x ) \, ,
\end{align}
for a constant $a$ and functions $p(x)$ and $q(x)$, one can change variables to a new function
\begin{align}
    G ( x ) = \exp \left( \frac{1}{2 a} \int_0^{x} p ( y ) \, dy \right) F ( x ) \, ,
\end{align}
in terms of which the differential operator takes the form
\begin{align}
    D F = \exp \left( - \frac{1}{2 a} \int_0^{x} p ( y ) \, dy \right)\left( a G''(x) + \left( q ( x ) - \frac{1}{2} p'(x) - \frac{1}{4a } p ( x )^2 \right) G ( x ) \right) \, ,
\end{align}
which has no term proportional to $G'$.

In the case of the operator $D = \mathcal{L}$ defined in equation (\ref{fokker_planck_defn}), taking $a = \frac{\sigma^2}{2}$ and $p(x) = \theta x$, this suggests the change of variables
\begin{align}\label{P_to_psi}
    \psi ( x, t ) =  \exp \left( \frac{\theta}{2 \sigma^2} x^2 \right) \mathbb{P} ( x, t )  \, ,
\end{align}
so that the Fokker-Planck operator $\mathcal{L}$ acts as
\begin{align}\label{L_psi_rewritten}
    \mathcal{L} \mathbb{P} = \exp \left( - \frac{\theta}{2 \sigma^2} x^2 \right) \left( \frac{\sigma^2}{2} \frac{\partial^2}{\partial x^2} - \frac{\theta^2}{2 \sigma^2} x^2 + \frac{\theta}{2}  \right) \psi ( x, t) \, ,
\end{align}
and in terms of which the differential equation (\ref{fp_pde}) can be written as
\begin{align}
    \frac{\partial \psi}{\partial t} = \left( \frac{\sigma^2}{2} \frac{\partial^2}{\partial x^2} - \frac{\theta^2}{2 \sigma^2} x^2 + \frac{\theta}{2} \right) \psi \, ,
\end{align}
which we identify as the Euclidean Schr\"odinger equation $\partial_t \psi = - H_{\text{OU}} \psi$ where
\begin{align}\label{transformed_hamiltonian}
    H_{\text{OU}} = - \frac{\sigma^2}{2} \frac{d^2}{dx^2} + \frac{\theta^2}{2 \sigma^2} x^2 - \frac{\theta}{2} \, ,
\end{align}
in agreement with the result for $H_{\text{OU}}$ we quoted earlier in (\ref{ou_ham}). The operator (\ref{transformed_hamiltonian}) is now self-adjoint with respect to the inner product on twice-differentiable square-normalizable functions, as is appropriate for a Hamiltonian operator in quantum mechanics.

Combining equations (\ref{P_to_psi}), (\ref{L_psi_rewritten}), and (\ref{transformed_hamiltonian}) gives the relation
\begin{align}\label{H_to_L}
    \mathcal{L} = - \exp \left( - \frac{\theta}{2 \sigma^2} x^2 \right) H_{\text{OU}} \exp \left( \frac{\theta}{2 \sigma^2} x^2 \right) \, .
\end{align}
In terms of the stationary distribution (\ref{ou_stationary}) for the Ornstein-Uhlenbeck process, equation (\ref{H_to_L}) can be expressed as
\begin{align}\label{L_to_H}
    H_{\text{OU}} = - \mathbb{P}_s ( x )^{-1/2} \, \mathcal{L} \, \mathbb{P}^{1/2}_s \,,
\end{align}
Thus, by performing the similarity transformation (\ref{L_to_H}), we can convert from the (non-self-adjoint) Fokker-Planck operator $\mathcal{L}$ that generates time evolution of the probability distribution $\mathbb{P}(x, t)$ to the self-adjoint operator $H_{\text{OU}}$ which generates time evolution of the wavefunction $\psi(x, t)$. Although we have reviewed the steps of this procedure for the Ornstein-Uhlenbeck process, the same argument applies to more general stochastic processes which admit a stationary distribution, $H = - \mathbb{P}_s ( x )^{-1/2} \, \mathcal{L} \, \mathbb{P}^{1/2}_s$.

These observations suggest a numerical algorithm that one can perform to compute the energies and eigenstates associated with any stochastic process that admits a stationary distribution. First, perform many draws from this stationary distribution to obtain an ensemble of initial positions $x_0$. Then use the update rule of the stochastic process to time-evolve each initial position by a small time step $\Delta t$, which gives a collection of one-step sample paths $(x_0, x_{\Delta t})$. Bin these $(x_0, x_{\Delta t})$ pairs to create a histogram, which represents a discrete approximation to the joint probability distribution $\mathbb{P} ( ( x_{\Delta t}, \Delta t ) , ( x_0, 0 ) )$. We then normalize this histogram so that the sum of the entries in each row is unity; this yields a discretized estimate of the \emph{conditional} probability distribution
\begin{align}
    \mathbb{P} ( x_{\Delta t} , \Delta t \mid x_0 , 0 ) = \frac{\mathbb{P} ( ( x_{\Delta t}, \Delta t ) , ( x_0, 0 ) )}{\mathbb{P} ( x_0, 0 ) } \, ,
\end{align}
where the denominator $\mathbb{P} ( x_0, 0 )$ is the marginal distribution for the position $x_0$ at time $0$. On the one hand, this conditional distribution obeys
\begin{align}\label{conditional_integral}
    \mathbb{P} ( x_{\Delta t} , \Delta t )  = \int d x_0 \, \mathbb{P} ( x_{\Delta t} , \Delta t \mid x_0 , 0 ) \cdot \mathbb{P} ( x_0 , 0 ) \, ,
\end{align}
but on the other hand, as $\mathcal{L}$ generates infinitesimal time evolution of $\mathbb{P} ( x , t ) $, we have
\begin{align}
    \mathbb{P} ( x_{\Delta t} , \Delta t ) \approx \mathbb{P} ( x_0 , 0 ) + \Delta t \, \mathcal{L} \, \mathbb{P} ( x_0 , 0 ) \, .
\end{align}
We now compare these two expressions to obtain an estimate for $\mathcal{L}$. In our discrete approximation, the integral in equation (\ref{conditional_integral}) becomes a matrix-vector product; the $(i, j)$ entry of the matrix $\mathbb{P} ( x_{\Delta  t} , \Delta t \mid x_0 , 0 )$ represents the conditional probability that $x_{\Delta t}$ lies in bin $i$ given that $x_0$ lies in bin $j$. Therefore, in matrix notation, we have
\begin{align}
    \mathbb{P} ( x_{\Delta t} , \Delta t )_i = \sum_j \mathbb{P} ( x_{\Delta t} , \Delta t \mid x_0 , 0 )_{ij} \mathbb{P} ( x_0 , 0 )_j \approx \mathbb{P} ( x_0 , 0 )_i + \Delta t \, \sum_j \mathcal{L}_{ij} \, \mathbb{P} ( x_0 , 0 )_j \, .
\end{align}
Solving for the matrix discretization of the Fokker-Planck operator $\mathcal{L}$ yields
\begin{align}
    \mathcal{L}_{ij} \approx \frac{\mathbb{P} ( x_{\Delta t} , \Delta t \mid x_0 , 0 )_{ij} - \delta_{ij}}{\Delta t} \, ,
\end{align}
where the Kronecker delta $\delta_{ij}$ gives the entries in the identity matrix. This formula allows us to numerically compute the approximate Fokker-Planck operator. Next one performs the similarity transformation (\ref{L_to_H}) using the stationary distribution; if $\mathbb{P}_s$ is not known in closed form, it can be estimated by simulating many sample paths with arbitrarily chosen initial positions, evolving each for a long time, and recording the steady-state distribution to which the ensemble settles down in the long run. Finally, one numerically diagonalizes the matrix $H$ to obtain approximate energy eigenvalues and eigenfunctions.

The energy eigenvalues computed using this procedure are displayed in Figure \ref{fig:ou_energy_eigenvalues}. Because the true spectrum of $H_{\text{OU}}$ is $E_n = \theta n$, which is independent of $\sigma$, we only show results for different values of $\theta$ and fix $\sigma = 1$ across all experiments. The numerically estimated wavefunctions of the first three energy eigenstates are shown in Figure \ref{fig:ou_energy_eigenstates} and compared to the true eigenstates.

\begin{figure}[htbp]
    \includegraphics[width=\linewidth]{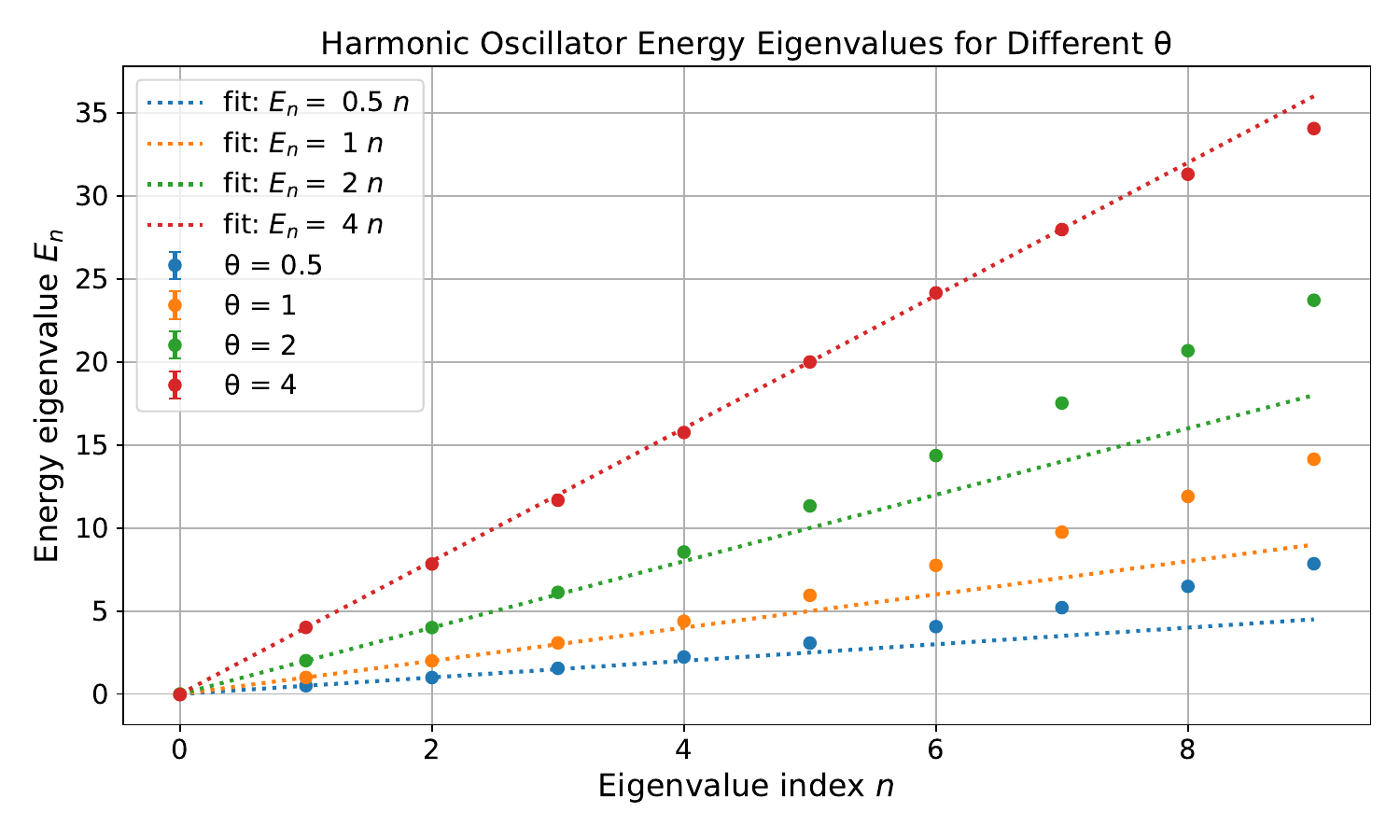}
    \caption{We display the results of computing the first $10$ energy eigenvalues of the harmonic oscillator for various values of $\theta$, using the algorithm described above. For each choice of $\theta$, we carry out $20$ separate experiments, each of which generates $5,000,000$ sample paths to estimate the Hamiltonian. Error bars are included, computed using the standard deviation across these $20$ experiments, but are too small to be seen. Dotted lines indicate the true energy eigenvalues, $E_n = \theta n$, where we recall that the ground state energy has been shifted to zero. The experimental values are more accurate for low-lying energies at small $n$ and then accrue larger errors for higher $E_n$, although the magnitude and sign of the errors differs between the different choices of $\theta$.}
    \label{fig:ou_energy_eigenvalues}
\end{figure}

\begin{figure}[htbp]
    \includegraphics[width=\linewidth]{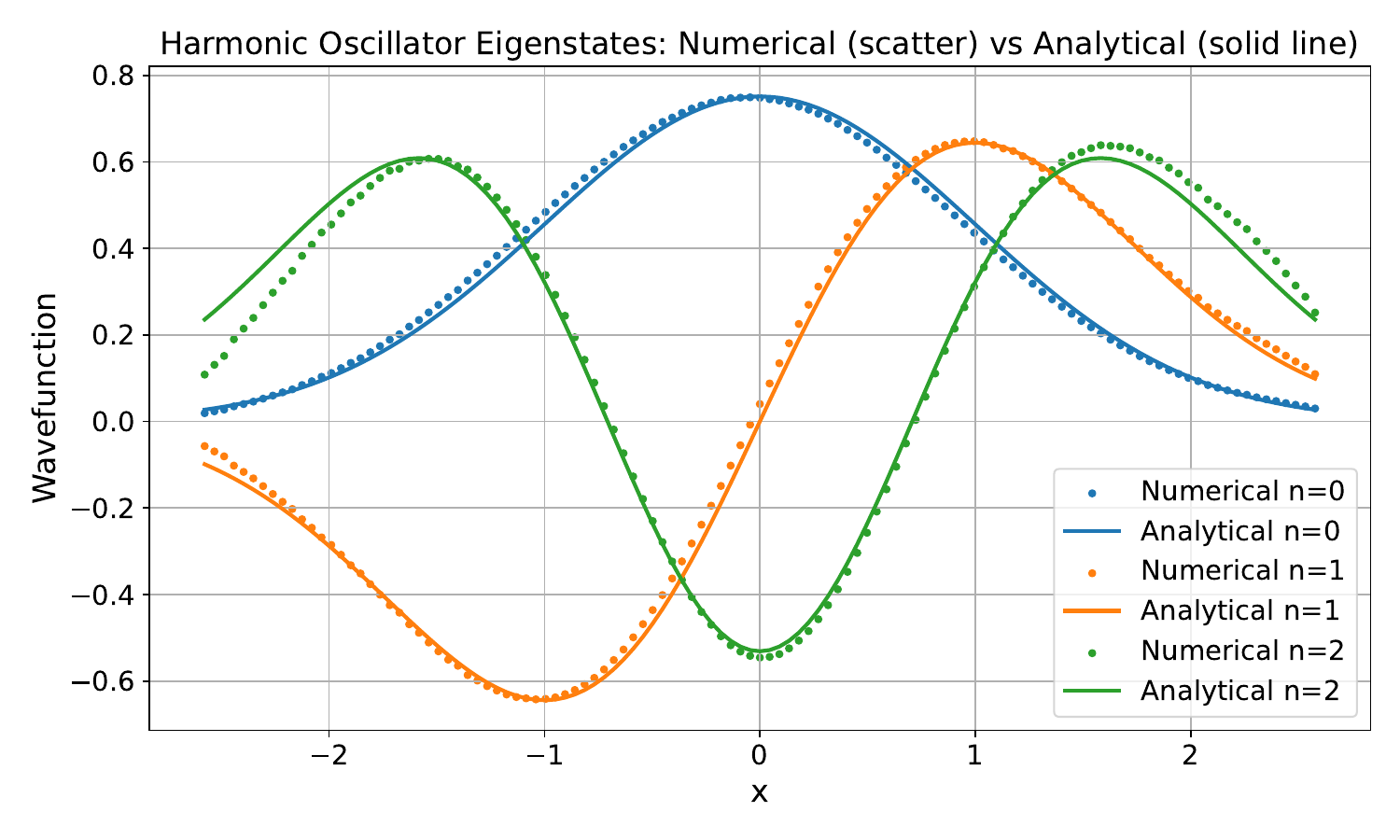}
    \caption{We compare the first three numerical energy eigenstates (scatter), obtained by diagonalizing the estimated Hamiltonian, to the true harmonic oscillator eigenstates (solid line). For concreteness, we take $\theta = \sigma = 1$. The Hamiltonian is computed by generating $50,000,000$ one-step sample paths and applying a similarity transformation to the estimated Fokker-Planck operator following the procedure described in the main text.}
    \label{fig:ou_energy_eigenstates}
\end{figure}

\subsubsection{Deep NN-QM with Ornstein-Uhlenbeck Inputs}\label{sec:NN_of_OU}

We now turn to an investigation of deep neural network quantum mechanics, where a stochastic process $x_t$ is obtained via a neural network whose inputs are themselves stochastic processes satisfying all of the OS axioms.
Since the Ornstein-Uhlenbeck process is the \emph{unique} stationary Gaussian process that is also Markov, this is a natural place to begin. We will comment on this further in the conclusion.

We focus on the architecture
\begin{align}\label{nn_ou_arch}
    x_t = \frac{1}{\sqrt{N}} \sum_{i=1}^{N} \sum_{j=1}^{d} w_i^{(1)} \sigma \left( w_{ij}^{(0)} y_t^{(j)} \right) \, ,
\end{align}
where $y_t^{(j)}$, $j = 1 , \ldots , d$, are a collection of input processes which are stationary, symmetric, reflection positive, and mixing. In the numerical simulations below, for concreteness we take $d = 3$, $N = 10$, we choose the $y_t^{(j)}$ to be Ornstein-Uhlenbeck processes with parameters $\theta^{(j)}$ and $\sigma^{(j)}$. The noise parameters $\sigma^{(j)}$ are not to be confused with the activation function $\sigma$ in equation (\ref{nn_ou_arch}), which we take to be
\begin{align}\label{tanh}
    \sigma ( z ) = \tanh ( z ) \, .
\end{align}
One can either choose the weights $w_i^{(1)}$ and $w_{ij}^{(0)}$ to be identical across all sample paths $x_t$ -- in which case $y_t$ is a deterministic function of a linear combination of RP processes, which is RP by Propositions \ref{lincomb} and \ref{detfun} -- or we can take the weights to be random variables which are drawn independently for each sample path, which is RP by the generalized result given in Theorem \ref{deep_nnqm_theorem}. Here we take the latter approach, since this illustrates the novel construction proposed in this work. Specifically, all of the weight variables are taken to be independent and identically drawn from Gaussians with mean $0$ and variance $1$,
\begin{align}\label{wdist}
    w \sim \mathcal{N} ( 0 , 1 ) \, ,
\end{align}
where $w$ schematically represents all $w_i^{(1)}$ and $w_{ij}^{(1)}$. 

Let us briefly comment on the bias variables, which we have set to zero in this construction. The reason is related to the mixing property, or cluster decomposition. As we alluded to above, if $y_t$ is a mixing process and $f$ is a measurable \emph{deterministic} function, then $x_t = f ( y_t )$ is also mixing. However, if $f$ is a \emph{random} function, then mixing may not be preserved. A simple example is adding a random shift $\theta \sim P ( \theta )$ which is drawn separately for each sample path. If $y_t$ is mixing and
\begin{align}\label{shift_mixing}
    \phi_\theta ( y ) = y + \theta \, ,
\end{align}
then the process $x_t = \phi_\theta ( y_t )$ is not mixing in general. This is because the statistics of the process $x_t$, roughly speaking, involve an ensemble averaging over various shifts $\theta$ which differ between sample paths, and this averaging of shifts can introduce correlations which do not decay at large separation. Such a random shift (\ref{shift_mixing}) is essentially what randomly chosen bias variables in the neural network implement, and indeed we see that the mixing property (or cluster decomposition) fails if we include these random biases in our network. Therefore, we turn off the bias parameters, which leads to an output process $y_t$ whose empirical two-point function obeys cluster decomposition, as we investigate below.

\subsubsection*{\ul{\it Two-point function}}

As in our discussion of the Ornstein-Uhlenbeck process, the first natural observable to study for our deep NN-QM model is the two-point function. Because we take $d = 3$ input OU processes $y_t^{(i)}$, each of which is determined by a drift parameter $\theta^{(i)}$ and a noise amplitude $\sigma^{(i)}$, there are six available parameters to tune in the definition of $x_t$. The experimental values of $G^{(2)} ( t, 0 )$ for several choices of these parameters are displayed in Figure \ref{fig:NN_ou_two_point}. All other parameters in the neural network -- including the width $N = 10$ of the hidden layers, the distribution (\ref{wdist}) of weights, and the choice (\ref{tanh}) of tanh activation function -- are fixed across all of the experiments.

The results of these simulations confirm several basic properties that one might expect of our deep NN-QM model. First, the correlation function $G^{(2)} ( t, 0 )$ decays to zero at large $t$, which gives evidence that mixing or cluster decomposition holds in this theory (as we mentioned above, this is a consequence of our choice to set the biases to zero, and mixing fails if we choose random non-zero biases). Second, we see that the behavior of the two-point function changes as the parameters $\theta^{(i)}$ and $\sigma^{(i)}$ of the input OU processes are varied, as it should. Third, although we have no theoretical prediction for an analytical expression for the two-point function in this model to which the experimental results could be compared, by equation (\ref{spectral_two_point}) we expect the leading exponential fall-off of $G^{(2)}$ to be determined by the gap $E_1$ from the ground state to the first excited state (since $E_0 = 0$ by construction in the stationary state of a stochastic process). By performing exponential fits with the functional form $G^{(2)} = A e^{- \lambda t}$, we therefore obtain approximations of the first excited state energies in these deep NN-QM models, which are also recorded in Figure \ref{fig:NN_ou_two_point}.

\begin{figure}[htbp]
    \includegraphics[width=\linewidth]{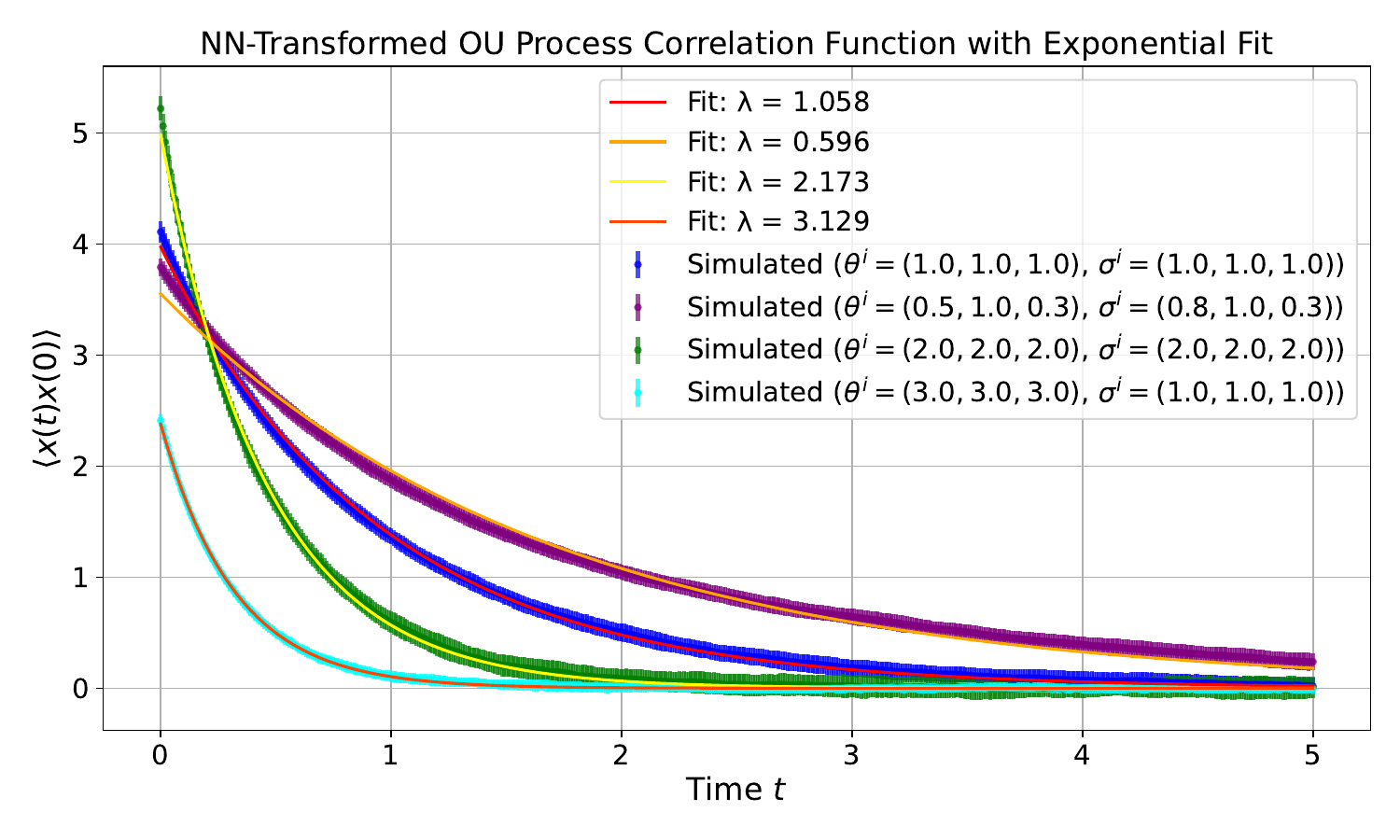}
    \caption{The two-point function $G^{(2)} ( t, 0 ) = \langle x ( t ) x ( 0 ) \rangle$ for the deep NN-QM described in the main text, as the parameters $\theta^i$ and $\sigma^i$ of the three input Ornstein-Uhlenbeck processes $y_t^{(i)}$ are varied. In all cases, the correlation function decays to zero at large time separations, consistent with cluster decomposition. For each set of parameters, we perform $100$ experiments with $5,000$ paths per experiment; the average values of $G^{(2)}$ are shown in cool colors (blue, purple, green, cyan) with error bars indicating the standard deviation across experiments. Although no theoretical result is available for comparison, we perform a fit of each set of data to a decaying exponential $G^{(2)} ( t, 0 ) = A e^{- \lambda t}$ and show the best-fit curves in warm colors. The best-fit values of $\lambda$ yield estimates of the first excited state energy $E_1$ in each deep NN-QM.}
    \label{fig:NN_ou_two_point}
\end{figure}

\subsubsection*{\ul{\it Commutator}}

For the Ornstein-Uhlenbeck process, we found that the discrete approximation $C(t)$ of the commutator $[ \hat{x} ( t ) , \hat{p} ( t ) ]$, defined in equation (\ref{Ct_defn}), was constant in time with a magnitude set by the effective $\hbar$ of the stochastic process. This should, of course, hold for any conventional quantum system with a quadratic kinetic term. However, since the deep NN-QM that we have defined here is obtained from a fairly involved transformation of several OU processes, it is not clear \emph{a priori} that it corresponds to a familiar quantum system for which $p(t) \sim \dot{x} ( t )$ so that $C(t)$ is constant.

Nonetheless, repeating the numerical computation of $C(t)$ via simulations as described above, we find that this is indeed the case. These results are presented in Figure \ref{fig:NN_ou_commutator}, which shows that $C(t)$ is also constant for deep NN-QM, with an effective $\hbar$ that is a function of the noise parameters $\sigma^{(i)}$ defining the input OU processes.

\begin{figure}[htbp]
    \includegraphics[width=\linewidth]{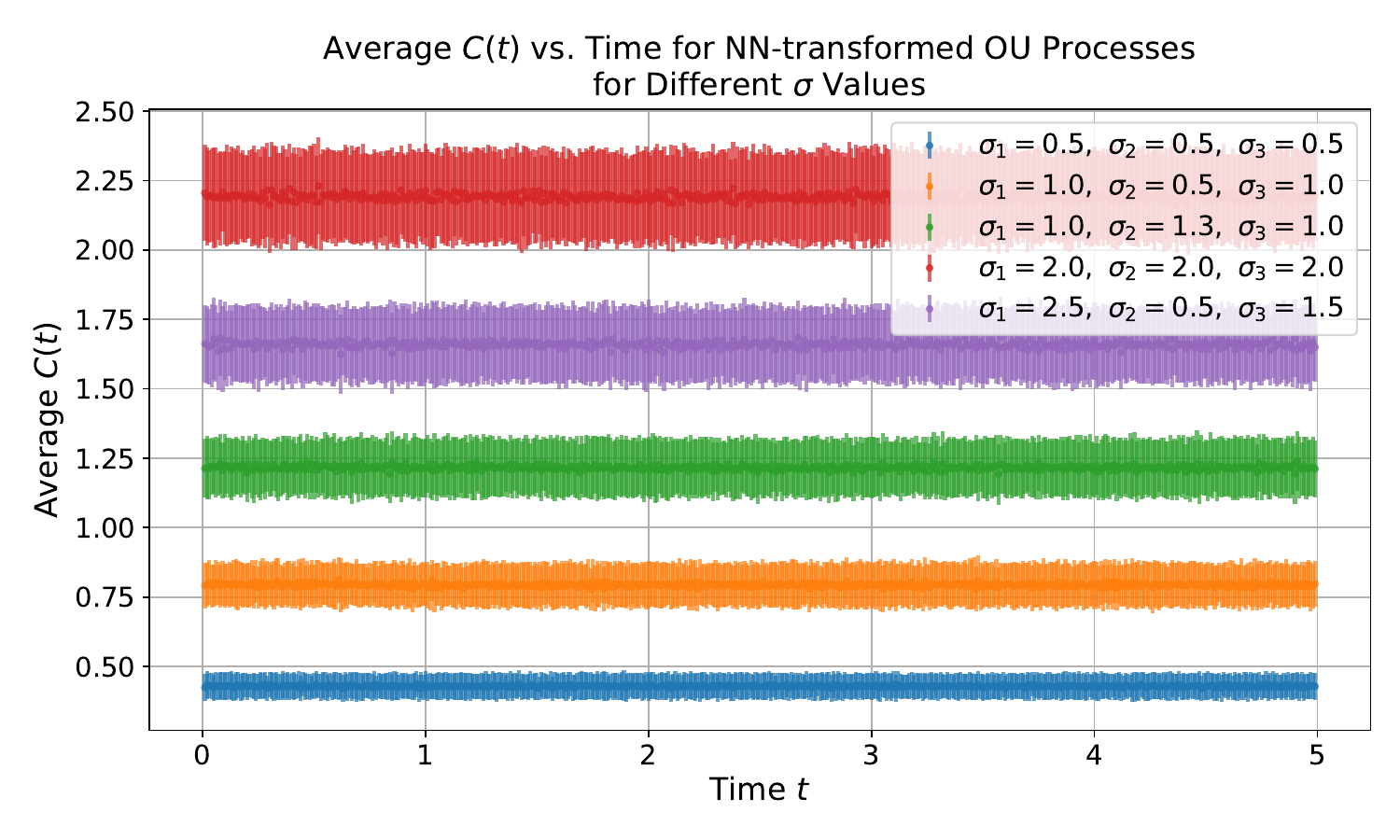}
    \caption{We show the average value of the discretized commutator $C(t)$ over time, as in Figure \ref{fig:ou_commutator} for the OU process, but here for the deep NN-QM $x_t$ obtained from three Ornstein-Uhlenbeck inputs $y_t^{(i)}$ and a single hidden layer of $N = 10$ neurons, as described in the main text. For each tuple of parameters $(\sigma_1, \sigma_2, \sigma_3)$ associated with the input Ornstein-Uhlenbeck processes, we perform $200$ experiments with $10,000$ sample paths per experiment, all with step size $\Delta t = 0.01$ and total simulation time $T = 5$. The displayed error bars represent the standard deviation across experiments. The resulting deep NN-QM still exhibits a conventional commutation relation $[\hat{x} , \hat{p}] = \hbar_{\text{eff}}$ with an effective Planck constant that depends on the noise parameters $\sigma_i$ of the three input processes.}
    \label{fig:NN_ou_commutator}
\end{figure}

\subsubsection*{\ul{\it Uncertainty relations}}

It is also interesting to consider the behavior of the uncertainties $\Delta x$ and $\Delta p$ in the deep NN-QM model. These uncertainties must, of course, obey the Heisenberg uncertainty relation, but we no longer expect the product $\Delta x \, \Delta p$ to saturate the bound, since this is a special property of minimum uncertainty states such as Gaussian wavepackets.

\begin{table}[htbp]
   \noindent\makebox[\linewidth][l]{\hspace*{-1.75cm}
  \begin{tabular}{ccccccccccc}
    \toprule
    \(\theta_1\) & \(\theta_2\) & \(\theta_3\) & \(\sigma_1\) & \(\sigma_2\) & \(\sigma_3\) & \(dt\) & \( \left( \Delta x \right)^2 \) & \( \left( \Delta p \right)^2 \) & \( \left( \Delta x \right)^2 \cdot \left( \Delta p \right)^2 \) & \( \frac{\hbar_{\text{eff}}^2}{4} \)\\
    \midrule
    \(\frac{1}{2}\) & \(\frac{1}{2}\) & \(\frac{1}{2}\) & \(\frac{1}{2}\) & \(\frac{1}{2}\) & \(\frac{1}{2}\) & 0.100 & 0.3031 \(\pm\) 0.0029 & 0.1360 \(\pm\) 0.0018 & 0.0412 \(\pm\) 0.0008 & 0.0307 \(\pm\)  0.0005 \\
    \(\frac{1}{2}\) & \(\frac{1}{2}\) & \(\frac{1}{2}\) & \(\frac{1}{2}\) & \(\frac{1}{2}\) & \(\frac{1}{2}\) & 0.010 & 0.2997 \(\pm\) 0.0033 & 0.1480 \(\pm\) 0.0178 & 0.0444 \(\pm\) 0.0053 & 0.0307 \(\pm\)  0.0006 \\
    \(1\)         & \(1\)         & \(1\)         & \(1\)         & \(1\)         & \(1\)         & 0.100 & 0.4175 \(\pm\) 0.0037 & 0.9227 \(\pm\) 0.0086 & 0.3852 \(\pm\) 0.0066 & 0.2687 \(\pm\)  0.0044\\
    \(1\)         & \(1\)         & \(1\)         & \(1\)         & \(1\)         & \(1\)         & 0.010 & 0.4104 \(\pm\) 0.0040 & 1.2741 \(\pm\) 0.0553 & 0.5229 \(\pm\) 0.0232 & 0.2850 \(\pm\)  0.0052 \\
    \(2\)         & \(2\)         & \(2\)         & \(2\)         & \(2\)         & \(2\)         & 0.100 & 0.5393 \(\pm\) 0.0044 & 4.8038 \(\pm\) 0.0352 & 2.5908 \(\pm\) 0.0387 & 1.9196 \(\pm\)  0.0292 \\
    \(2\)         & \(2\)         & \(2\)         & \(2\)         & \(2\)         & \(2\)         & 0.010 & 0.5236 \(\pm\) 0.0047 & 10.4247 \(\pm\) 0.1741 & 5.4586 \(\pm\) 0.1127 & 2.4223 \(\pm\)  0.0360 \\
    \(\frac{1}{2}\) & \(1\)         & \(2\)         & \(\frac{1}{2}\) & \(1\)         & \(2\)         & 0.100 & 0.4365 \(\pm\) 0.0042 & 2.1579 \(\pm\) 0.0235 & 0.9420 \(\pm\) 0.0183 & 0.6104 \(\pm\)  0.0121 \\
    \(\frac{1}{2}\) & \(1\)         & \(2\)         & \(\frac{1}{2}\) & \(1\)         & \(2\)         & 0.010 & 0.4263 \(\pm\) 0.0039 & 3.7006 \(\pm\) 0.1075 & 1.5777 \(\pm\) 0.0515 & 0.6834 \(\pm\) 0.0150 \\
    \(2\)         & \(1\)         & \(\frac{1}{2}\) & \(2\)         & \(1\)         & \(\frac{1}{2}\) & 0.100 & 0.4371 \(\pm\) 0.0046 & 2.1610 \(\pm\) 0.0255 & 0.9446 \(\pm\) 0.0204 & 0.6100 \(\pm\) 0.0120 \\
    \(2\)         & \(1\)         & \(\frac{1}{2}\) & \(2\)         & \(1\)         & \(\frac{1}{2}\) & 0.010 & 0.4255 \(\pm\) 0.0039 & 3.6972 \(\pm\) 0.1051 & 1.5731 \(\pm\) 0.0492 & 0.6819 \(\pm\)  0.0138 \\
    \(\frac{1}{2}\) & \(2\)         & \(1\)         & \(2\)         & \(\frac{1}{2}\) & \(1\)         & 0.100 & 0.5329 \(\pm\) 0.0059 & 1.0278 \(\pm\) 0.0113 & 0.5477 \(\pm\) 0.0108 & 0.2674 \(\pm\) 0.0054 \\
    \(\frac{1}{2}\) & \(2\)         & \(1\)         & \(2\)         & \(\frac{1}{2}\) & \(1\)         & 0.010 & 0.5287 \(\pm\) 0.0061 & 1.8866 \(\pm\) 0.0762 & 0.9974 \(\pm\) 0.0415 & 0.3101 \(\pm\)  0.0066 \\
    \(1\)         & \(\frac{1}{2}\) & \(2\)         & \(1\)         & \(2\)         & \(\frac{1}{2}\) & 0.100 & 0.5330 \(\pm\) 0.0056 & 1.0282 \(\pm\) 0.0107 & 0.5481 \(\pm\) 0.0105 & 0.2671 \(\pm\)  0.0053 \\
    \(1\)         & \(\frac{1}{2}\) & \(2\)         & \(1\)         & \(2\)         & \(\frac{1}{2}\) & 0.010 & 0.5286 \(\pm\) 0.0057 & 1.8828 \(\pm\) 0.0840 & 0.9952 \(\pm\) 0.0451 & 0.3098 \(\pm\) 0.0057 \\
    \(2\)         & \(\frac{1}{2}\) & \(2\)         & \(\frac{1}{2}\) & \(1\)         & \(1\)         & 0.100 & 0.3678 \(\pm\) 0.0038 & 0.9056 \(\pm\) 0.0099 & 0.3331 \(\pm\) 0.0063 & 0.1943 \(\pm\) 0.0035 \\
    \(2\)         & \(\frac{1}{2}\) & \(2\)         & \(\frac{1}{2}\) & \(1\)         & \(1\)         & 0.010 & 0.3603 \(\pm\) 0.0042 & 1.1025 \(\pm\) 0.0492 & 0.3973 \(\pm\) 0.0188 & 0.1934 \(\pm\) 0.0037 \\
    \bottomrule
  \end{tabular}
    }
  \caption{We compute the uncertainties in position and momentum for the deep NN-QM process with three OU inputs. The definitions (\ref{pdefn}) and (\ref{psq_defn}) are used to compute the momentum uncertainties. We simulate $200$ epochs with $10,000$ paths per epoch, each with $1,000$ update steps, for each combination of parameters, and the standard deviation across epochs appears following the symbol $\pm$ for each quantity. We compare the simulated uncertainties for several values of the parameters $\theta^{(i)}$ and $\sigma^{(i)}$ defining the input OU processes, and for two values of the time step $dt$. We also include the value $\frac{\hbar_{\text{eff}}^2}{4}$, which gives a lower bound for the product $\left( \Delta x \right)^2 \cdot \left( \Delta p \right)^2$ by the Heisenberg uncertainty principle, where $\hbar_{\text{eff}}$ is estimated from the commutator $C(t)$ which approximates $[\hat{x} , \hat{p} ]$. We find that $\left( \Delta x \right)^2 \cdot \left( \Delta p \right)^2 > \frac{\hbar_{\text{eff}}^2}{4}$ in all cases, so the uncertainty bound is not saturated.}
  \label{tab:uncertainties_NN_OU}
\end{table}

The simulated uncertaintes $\left( \Delta x \right)^2$ and $\left( \Delta p \right)^2$ for several choices of parameters are displayed in Table \ref{tab:uncertainties_NN_OU}. Again, we have no theoretical prediction for the product of uncertainties, but the lower bound implied by the Heisenberg uncertainty relation is displayed in the final column of Table \ref{tab:uncertainties_NN_OU}. This lower bound is not uniform across all rows of this table since it is set by the effective value of $\hbar$ in the model. We estimate $\hbar_{\text{eff}}$ by computing the commutator $C(t)$, and as we have already seen in Figure \ref{fig:NN_ou_commutator}, this quantity depends on the parameters. We find that the Heisenberg uncertainty relation is not saturated for any choices of parameters that we consider in this deep NN-QM theory.

\subsubsection*{\ul{\it Spectrum and eigenfunctions}}

Finally, we repeat the calculation of the energy eigenvalues and eigenfunctions for the deep NN-QM model. These results are shown in Figure \ref{fig:NN_ou_energy_eigenvalues}. We see that the energy eigenvalues are approximately linear in $n$ for small $n$, as in the harmonic oscillator, but upon investigating higher energy eigenvalues we find that the curve is concave down. The rate at which the energies ``level off'' seems to be set by the $\theta$ parameters of the input Ornstein-Uhlenbeck processes, with larger $\theta$ values giving models that deviate from the linear SHO-like spectrum more quickly. We also see that, unlike the OU example, the spectrum appears to depend on the noise parameters $\sigma^{(i)}$ in addition to the $\theta^{(i)}$.

We do not display the energy eigenfunctions since their shape is very similar to that of the OU eigenfunctions in Figure \ref{fig:ou_energy_eigenstates}. This suggests that -- at least for low-lying states -- the effect of the deep NN transformation primarily modifies the energy eigen\emph{values}, while leaving the energy eigen\emph{functions} essentially unchanged. This is similar to the behavior of deformations of quantum systems where the seed Hamiltonian $H_0$ is mapped to a function $H = f ( H_0 )$ of the undeformed Hamiltonian \cite{Gross:2019ach,Gross:2019uxi,Ferko:2023ozb,Ferko:2023iha}. However, at high energies, we see that the deep NN-QM spectrum levels off, so the effect of the transformation on the eigenfunctions for high-energy states may be more dramatic.

\begin{figure}[htbp]
    \includegraphics[width=\linewidth]{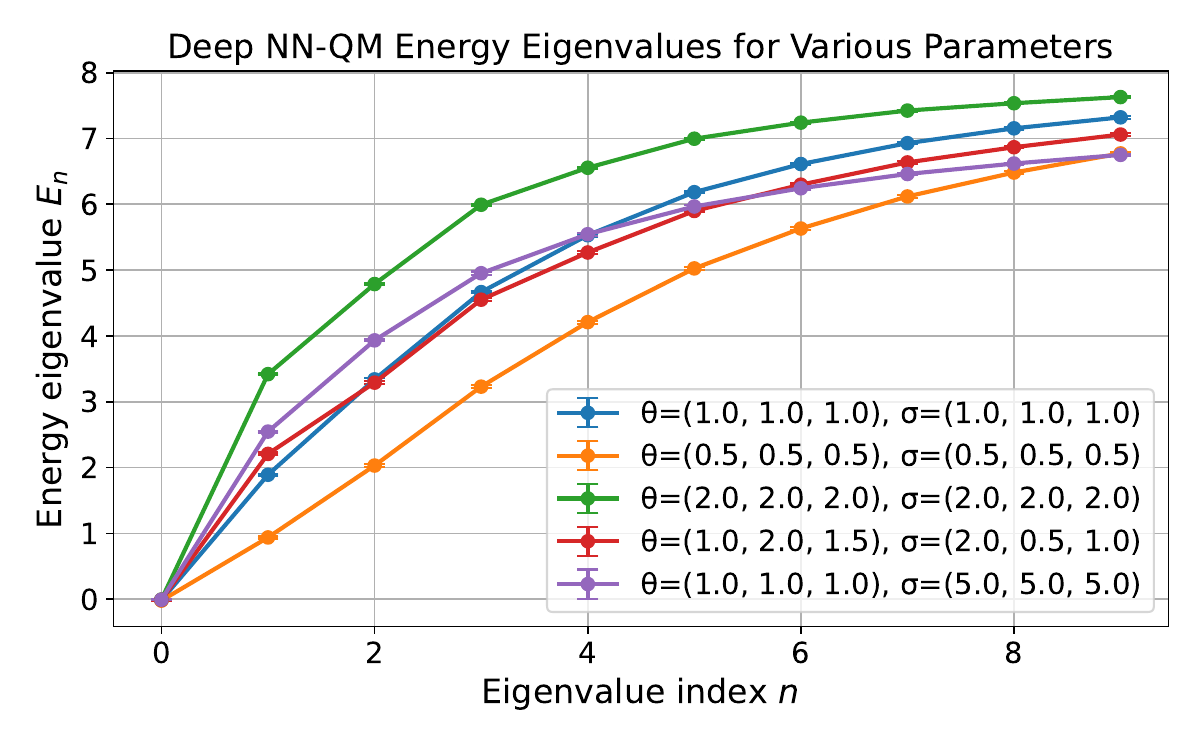}
    \caption{The first $10$ energy eigenvalues for our deep NN-QM model. Energies are computed by numerically estimating the Hamiltonian operator using the transition matrix and a similarity transformation involving the empirical stationary distribution. We perform $20$ trials and display the mean energies with error bars (visible upon zooming in) set by the standard deviation across trials. Unlike the harmonic oscillator energy levels, the deep NN-QM energies do not grow linearly, but rather level off at large $n$.}
    \label{fig:NN_ou_energy_eigenvalues}
\end{figure}

\section{Conclusion}\label{sec:conclusion}

In this paper we developed a neural network (NN) approach to quantum mechanics (QM), focusing on universality of the description and the appearance of unitarity via reflection positivity. The defining data of the construction is a neural network and associated parameter density, $(\phi_\theta(t), P(\theta))$, where $\theta$ are the parameters and the functional form of $\phi_\theta(t)$ is the architecture. This is restriction of a scalar NN-FT to the case $d=1$, yielding a stochastic process (SP) that, in general, is not quantum mechanical; it is a NN-SP.

We were led to a number of questions: $1)$ when is a SP actually a Euclidean QM theory? $2)$ is every Euclidean QM theory representable by a NN? The first question depends in part on which axioms one wants to demand of a QM theory --- and, of course, many are canonical --- but we began by adopting minimal requirements, in which mean-square continuity of $x(t)$ and the K\"all\'en-Lehmann spectral representation ensure the applicability of the Kosambi-Karhunen-Lo\`{e}ve theorem \cite{kosambi1943statistics,karhunen1947,Loeve1948} from the literature on stochastic processes. The associated decomposition can be interpreted as a neural network with varying neurons, answering the second question in the affirmative: any theory satisfying the minimal requirements admits a NN description.

Of course, these requirements are not sufficient, as unitarity is central in conventional quantum mechanics. In constructive approaches to QFT, unitarity is encoded in a property called reflection positivity (RP) that is satisfied by Euclidean correlators, and the property may also be studied in $d=1$ field theories, i.e., quantum mechanics. We were therefore led to the study of RP in NN descriptions of QM systems, and focused on two constructions. The first was a ``parameter splitting" mechanism that allows the relevant quantity for RP to be expressed as perfect square integrand in an integral over NN parameters $\theta$. This mechanism may be applied as a concrete modification to \emph{any} neural network, but unfortunately simple instantiations break translation invariance. This can be remedied by performing the parameter splitting for all times, at the cost of introducing a continuous infinity of NN parameters and nowhere-analytic paths $x(t)$. The latter is familiar from ordinary QM, as non-differentiability (which implies non-analyticity) is the origin of non-zero commutators in the Feynman path integral. In the second, we studied Markov processes, which are known to be RP, and 
we showed that a NN acting on any Markov or RP process produces an RP process, preserving unitarity.

This provides a mechanism for defining a vast array of QM theories, since a NN acting on an RP process preserves RP. In particular,  NN layers may be successively applied to the original process, defining a different NN-QM theory at each successive layer. We studied this idea in numerical examples, including an Ornstein-Uhlenbeck process (the SP that is the Euclidean realization of the quantum harmonic oscillator) and standard neural network layers acting on it. In each case we studied the appearance of non-zero commutators, uncertainty, and the spectrum.

Our work opens many interesting directions for future work:
\begin{enumerate}
\item \textbf{Hamiltonian engineering in deep NN-QM.} In the ML literature and NN-FT, the recursive application of NN layers allows one to understand how the correlations at one layer relate to and influence the correlations at the next. In the context of our construction, it is natural to ask whether this influence may be utilized to engineer neural network QM theories whose Hamiltonians have desired properties.

One avenue for engineering desired QM theories with deep NN-QM is to learn them. Specifically, let $x_t$ be any Markov or RP process. Then since 
\begin{equation}
    f_\varphi(x_t) = \text{RP}
\end{equation}
for \emph{any} neural network $f$ with its own parameters $\varphi$, we may use traditional machine learning to optimize the parameters $\varphi$ to obtain a desired QM theory, e.g. according to its spectrum or other properties. This idea also applies for translation-invariant (stationary) and/or time reflection invariant (symmetric) theories, as these properties persist in passing from the process $x_t \mapsto f_\varphi(x_t)$. In this case, $f_\varphi$ is one neural network, with some initial draw $\varphi\sim P(\varphi)$ that is subsequently optimized.

Alternatively, instead of using a single (deterministic) neural network to convert $x_t$ to another NN-QM process, one could apply the entire ensemble $f_\varphi$ (for all $\varphi$ in the parameter domain, drawn appropriately) to $x_t$. In such a case this ensemble theory is optimized by using a final auxiliary neural network such as a normalizing flow to optimize the density on $\varphi$ in order to achieve the desired theory.

Interestingly, the OU process provides a natural starting point for such studies. To satisfy the OS QM axioms, we need a stationary symmetric RP process. Specializing to Markov rather than general RP, the OU process is the \emph{unique} \cite{DoobJ.L.1942TBMa} such process that is also Gaussian. Therefore NN(OU) is \emph{the} way to perturb (or, more precisely, modify) a Markov Gaussian process in this context.

\item \textbf{Completeness of deep NN-QM}. A natural future direction is to attempt to prove or disprove the following conjecture.
\begin{conjecture}\label{conjecture}
    Every reflection-positive process admits a representation as a deep NN-QM whose inputs $y_t^{(i)}$ are symmetric Markov processes.
\end{conjecture}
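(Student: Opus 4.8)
The plan is to reduce the conjecture to a \emph{Markov dilation} statement and then invoke the closure results already established. Because every operation used in the deep NN-QM construction (linear combinations, deterministic functions, and random neural-network functions) preserves symmetry and stationarity, one should first note that any deep NN-QM built from symmetric Markov inputs is itself symmetric; hence the conjecture can only hold for \emph{symmetric} reflection-positive processes, and I would begin by restricting to that class (a non-symmetric RP process would furnish an immediate disproof). For a symmetric, stationary RP process $x_t$ I would apply the Osterwalder--Schrader/GNS reconstruction to the reflected inner product of (\ref{rp_condition}): positivity yields a physical Hilbert space $\mathcal{H}$, a positive self-adjoint Hamiltonian $H \geq 0$, a cyclic ground state $\Omega$, and a multiplication operator $\hat{x}$ whose correlators in $\Omega$ reproduce the $G^{(n)}$ of $x_t$.

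The key step is then to turn the reconstructed quantum system back into a Markov process. This is exactly the inverse of the similarity transformation (\ref{L_to_H}): writing $\mathbb{P}_s = |\Omega|^2$ for the ground-state measure, the operator $\mathcal{L} = -\,\mathbb{P}_s^{-1/2}\, H\, \mathbb{P}_s^{1/2}$ is the generator of a symmetric Markov process $M_t$ whose stationary distribution is $\mathbb{P}_s$. Expressing the field as a function of the state, $x_t = g(M_t)$ for a measurable $g$ determined by $\hat{x}$, would realize $x_t$ as a deterministic function of a symmetric Markov process; approximating $g$ to arbitrary accuracy by an architecture $\phi_\theta$ and appealing to Proposition \ref{detfun} and Theorem \ref{deep_nnqm_theorem} (which guarantee the approximants remain RP) would, after a limiting argument, exhibit $x_t$ as a deep NN-QM.

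A complementary handle on the construction comes from the K\"all\'en--Lehmann representation (\ref{KL_QM}): the two-point function of \emph{any} RP quantum system is a superposition $\int_0^\infty dm\,\rho(m)\, e^{-m|t-s|}$ of Ornstein--Uhlenbeck covariances. This strongly suggests using a countable, dense family of OU processes with masses $m$ weighted by $\rho$ as the symmetric-Markov inputs $y_t^{(i)}$, so that the Gaussian covariance data of $x_t$ is matched at the first layer while the deeper nonlinear layers are reserved for engineering the higher connected correlators. Combining this with the KKL decomposition of Theorem \ref{KKL_theorem} would give an explicit, approximation-theoretic route rather than an abstract reconstruction.

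The main obstacle I anticipate lies precisely in the final step of each route: the Markov process $M_t$ produced by reconstruction generically lives on an abstract or higher-dimensional state space with \emph{correlated} dynamics, whereas the inputs permitted in a deep NN-QM are independent scalar symmetric Markov processes combined only pointwise in time. Since the neural-network operations reshape the joint distributions at fixed times but cannot manufacture temporal correlation structure beyond that already carried by the inputs (the time-independent parameter randomness $\theta$ being the only extra ingredient, as in the symmetry-breaking shift example of Section \ref{sec:NN_of_OU}), it is unclear that a correlated reconstructed diffusion can be assembled from product-form scalar building blocks. Proving a structure theorem that every symmetric Markov generator arises as a suitable limit of such functions — or, failing that, exhibiting an RP process whose reconstructed dynamics provably cannot be so captured — is where the difficulty concentrates, and is presumably why the statement remains a conjecture.
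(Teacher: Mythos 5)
The statement you are addressing is Conjecture \ref{conjecture}, which the paper explicitly leaves open: it offers no proof, and frames proving or disproving it as future work. So there is no argument in the paper to compare yours against, and the relevant question is whether your proposal closes the gap. It does not, and to your credit you say so yourself; what follows is where the missing content concentrates. Your preliminary observation that the conjecture can only hold for \emph{symmetric} RP processes (since every operation in the deep NN-QM construction preserves symmetry of the inputs) is a genuinely useful remark about the statement itself. But the two load-bearing steps of your route are each essentially equivalent to the conjecture rather than reductions of it. First, the operator $-\,\mathbb{P}_s^{-1/2} H\, \mathbb{P}_s^{1/2}$ obtained by inverting (\ref{L_to_H}) is not the generator of a Markov process for a general self-adjoint $H \geq 0$ produced by OS reconstruction: a Markov generator must be positivity-preserving and satisfy the positive maximum principle, and nothing in reconstruction guarantees this. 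Indeed, as the paper notes, RP was introduced precisely as a \emph{weakening} of the Markov property, so the passage from RP back to Markov necessarily requires enlarging the state space (a Markov dilation), not merely conjugating the Hamiltonian. Second, once you posit $x_t = g(M_t)$ for a symmetric Markov $M_t$ and measurable $g$, you have already assumed the conclusion: that identity \emph{is} the deep NN-QM representation (modulo approximating $g$ by an architecture), so the existence of such a dilation is the entire content of the conjecture, not a lemma on the way to it.

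Your closing diagnosis is the most valuable part of the proposal and is correct: the neural-network layers act pointwise in time with time-independent parameter randomness, so they can reshape equal-time joint distributions but cannot create temporal correlation structure absent from the inputs; whether independent scalar symmetric Markov inputs suffice to generate all (symmetric, stationary) RP temporal structures is exactly the open question. Two further technical points would need attention even granting a dilation: (i) the limiting argument when $g$ is approximated by architectures $\phi_\theta$ requires knowing that RP is closed under the relevant mode of convergence of finite-dimensional distributions, which is plausible (the defining inequality (\ref{rp_condition}) is closed under pointwise limits of expectations of bounded $F$) but should be stated; and (ii) the K\"all\'en--Lehmann/OU-superposition idea at the first layer matches only the Gaussian part of the covariance, and one would need to argue that the subsequent nonlinear layers can realize arbitrary admissible higher connected correlators, which is again a completeness statement of the same difficulty as the conjecture itself.
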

The results of this work establish that every stochastic process $x_t$ that admits a deep NN-QM representation with symmetric Markov inputs is reflection positive. The content of the conjecture is that this set of processes is \emph{complete}: every RP process can be obtained from such a deep NN-QM. Said differently, the space of reflection positive processes is the closure of the space of symmetric Markov processes under the operations of taking linear combinations and applying random functions.  If true, this would provide another example of the universality of neural networks (besides Theorem \ref{KKL_theorem}), and give an alternative characterization of reflection positive stochastic processes. Indeed, rather than defining reflection positivity by (\ref{rp_condition}), one could equivalently define an RP process as one obtained from applying deep neural networks to collections of symmetric Markov processes.

\item \textbf{Higher-dimensional field theories}. The standard construction of a free scalar field in $d$-dimensional Euclidean quantum field theory, also called the ``Gaussian free field'' in more mathematical literature, proceeds by taking
\begin{align}\label{gff}
    \varphi ( x ) = \sum_{k=1}^{\infty} \xi_k \psi_k ( x ) \, ,
\end{align}
where the $\psi_k$ are an orthonormal basis for the Sobolev space $H^1 ( \Omega )$ on some domain $\Omega \subset \mathbb{R}^d$ and the $\xi_k$ are i.i.d. random variables drawn from a normal distribution with mean zero and variance $1$.

The behavior of this sum is very different in $d = 1$ compared to $d \geq 2$. For $d = 1$, (\ref{gff}) can be viewed as the KKL decomposition of a stochastic process, which almost surely converges to a continuous nowhere-differentiable function. However, for $d \geq 2$, the sum diverges almost surely for any value of the input $x$. In this case, $\varphi$ does not define an ordinary function, but rather a generalized function or distribution. This is a signal of the familiar fact that, for quantum field theories in $d \geq 2$ dimensions, the path integral receives contributions only from distributions; ordinary functions are a set of measure zero with respect to the path integral measure.

An important extension of the analysis in this work is to construct neural network quantum field theories in $d \geq 2$ dimensions which satisfy all of the OS axioms. This will require a detailed investigation of neural networks which define generalized functions, or distributions, rather than ordinary functions. 
\end{enumerate}
We hope to return to some of these interesting and important directions in future work.

\section*{Acknowledgements}

We thank Hamza Ahmed, Ning Bao, and Jonathan Weitsman for helpful discussions. This work is supported by the National Science Foundation under Cooperative Agreement PHY-2019786 (the NSF AI Institute for Artificial Intelligence and Fundamental Interactions). J.H. is supported by NSF CAREER grant PHY-1848089.

\appendix

\section{Commutators and Non-Differentiability}\label{app:commutators}

In this appendix, we will review one argument that the quantum-mechanical path integral must have contributions from paths that are differentiable nowhere. In particular, the inclusion of non-differentiable paths is required in order to reproduce non-trivial commutation relations such as $[\hat{x} , \hat{p}] \neq 0$. We emphasize that this is a standard textbook result which we include only to make the present work self-contained.\footnote{Here we follow Section 3.2.1 of \cite{Skinner2018}. For other discussions, see Section 7.3 of \cite{FeynmanHibbsStyer2010} or the note \cite{Ong2012}.}

For simplicity, we consider the path integral formulation for a free quantum particle with position $x(t)$ and restrict to a finite time interval $t \in [ a, b ]$. On this interval, the dynamics is described by the action
\begin{align}
    S [ x ] = \int_a^b dt \, \left( \frac{1}{2} m \dot{x}^2 \right) \, ,
\end{align}
and we will choose units where $m = 1$. Using this normalization, the momentum canonically conjugate to the position $x$ is $p = \dot{x}$.

Our aim is to compute the commutator $[\hat{x} , \hat{p}]$ using the path integral. Of course, in the path integral formulation, the variables $x$ and $\dot{x}$ are ordinary commuting functions rather than operators. Therefore, in order to study the commutator, we should use time-ordering and compare the results of (i) inserting $x ( t_+ ) \dot{x} ( t_- )$, where $t_+ > t_-$, in the path integral, and (ii) using the opposite ordering $x ( t_- ) \dot{x} ( t_+ )$ for the insertion. In the limit as $t_+$ and $t_-$ become coincident, the difference between these quantities (i) and (ii) should measure the commutator $[\hat{x} , \hat{p}]$ at the common value $t = t_+ = t_-$. More precisely, we wish to compute
\begin{align}\label{commutator_defn}
    C ( t_- , t , t_+ ) = \int \mathcal{D} x \, e^{- S [ x ] / \hbar } \left( x ( t ) \dot{x} ( t_- ) - x ( t ) \dot{x} ( t_+ ) \right) \, ,
\end{align}
and then take the limit of this quantity as $t_+ \to t$ from above and $t_- \to t$ from below.

We begin by discretizing the path integral, fixing a partition of the interval $[a, b]$ into a sequence of increasing times $t_k$ separated by intervals $\Delta t$, so that
\begin{align}
    a = t_0 < t_1 < \cdots < t_N = b \, ,
\end{align}
where $t_1 = a + \Delta t$, $t_2 = a + 2 \Delta t$, and so on, up to $t_N = a + N \Delta t = b$. For concreteness, we will also fix boundary conditions for the path integral so that $x ( t_a ) = x_a$ and $x ( t_b ) = x_b$. We indicate the variables representing the positions $x ( t_i )$ with subscripts, such as
\begin{align}
    x_0 = x ( t_0 ) = x_a \, , \quad x_1 = x ( t_1 ) = x ( t_0 + \Delta t ) \, , \quad \cdots \, , \quad x_k = x ( t_k ) = x ( t_0 + k \Delta t ) \, , \quad \cdots \, .
\end{align}
Using this discretization scheme, let us compute the regularized commutator (\ref{commutator_defn}) at the $j$-th time in our sequence and the two times which immediately precede and follow it, i.e.
\begin{align}
    t = t_j \, , \quad t_- = t_{j-1} = t_j - \Delta t \, , \quad t_+ = t_{j+1} = t_j + \Delta t \, .
\end{align}
In particular, we must replace the velocities $\dot{x} ( t_{\pm} )$ by the finite difference quotients
\begin{align}
    \dot{x} ( t_+ ) = \frac{x_{j+1} - x_j}{\Delta t} \, , \qquad \dot{x} ( t_- ) = \frac{x_j - x_{j-1}}{\Delta t} \, ,
\end{align}
which approach the desired derivatives in the limit $\Delta t \to 0$, assuming this limit exists.

In terms of the free particle propagator,
\begin{align}\label{free_propagator}
    K ( y , t + \Delta t ; x , t ) = \frac{1}{\sqrt{ 2 \pi \hbar \Delta t}} \exp \left( - \frac{ ( x - y )^2}{2 \hbar \Delta t} \right) \, ,
\end{align}
the quantity defined in (\ref{commutator_defn}) is
\begin{align}\label{commutator_intermediate}
    C ( t_{j-1} , t_j , t_{j+1} ) &= \int \left( \prod_{k = 1}^{N - 1} d x_k \right) K ( x_b , t_b ; x_{N-1} , t_{N-1} ) \cdot K ( x_{N-1} , t_{N-1} ; x_{N-2} , t_{N-2} ) \cdot \, \cdots \nonumber \\
    &\qquad \cdot K ( x_{j+1} , t_{j+1} ; x_{j} , t_{j} ) \left( x_j \frac{x_j - x_{j-1}}{\Delta t} - x_j \frac{x_{j+1} - x_j}{\Delta t} \right) K ( x_j , t_j ; x_{j-1} , t_{j-1} ) \nonumber \\
    &\qquad \cdot \, \cdots \, \cdot  K ( x_2, t_2 ; x_1, t_1 )  \cdot K ( x_1, t_1 ; x_a, t_a ) \, .
\end{align}
We note that the derivative of the propagator (\ref{free_propagator}) with respect to its final endpoint is
\begin{align}
    \partial_y K ( y , t + \Delta t ; x , t ) &= \frac{x - y}{\hbar \Delta t} \cdot \frac{1}{\sqrt{ 2 \pi \hbar \Delta t}} \cdot \exp \left( - \frac{ ( x - y )^2}{2 \hbar \Delta t} \right) \nonumber \\
    &= - \frac{(y - x)}{\hbar \Delta t} K ( y , t + \Delta t ; x , t ) \, ,
\end{align}
with a similar formula applying to the derivative $\partial_x$ with respect to the initial endpoint. Therefore, the two expressions appearing in (\ref{commutator_intermediate}) can be rewritten using the relations
\begin{align}
    \left( x_j \frac{x_j - x_{j-1}}{\Delta t}  \right) K ( x_j , t_j ; x_{j-1} , t_{j-1} ) &= - \hbar x_j \partial_{x_j} K ( x_j , t_j ; x_{j-1} , t_{j-1} ) \, , \nonumber \\
    K ( x_{j+1} , t_{j+1} ; x_{j} , t_{j} ) \left( - x_j \frac{x_{j+1} - x_j}{\Delta t} \right) &= \hbar x_j \partial_{x_j} K ( x_{j+1} , t_{j+1} ; x_{j} , t_{j} ) \, ,
\end{align}
respectively. Collecting these terms and rewriting them using the product rule, we have
\begin{align}\label{commutator_intermediate_two}
    C ( t_{j-1} , t_j , t_{j+1} ) &= - \hbar \int \left( \prod_{k = 1}^{N - 1} d x_k \right) K ( x_b , t_b ; x_{N-1} , t_{N-1} ) \cdot \, \cdots \nonumber \\
    &\qquad \cdot x_j \partial_{x_j} \Big( K ( x_{j+1} , t_{j+1} ; x_{j} , t_{j} )  K ( x_j , t_j ; x_{j-1} , t_{j-1} ) \Big) \cdot \, \cdots \, \cdot K ( x_1, t_1 ; x_a, t_a ) \, .
\end{align}
Since $x_j$ is integrated over, we may integrate by parts and discard the boundary term at the cost of a sign. The only other dependence on $x_j$ in the integrand is in the $x_j$ prefactor multiplying the $\partial_{x_j}$ derivative, which is therefore replaced with $\partial_{x_j} x_j = 1$:
\begin{align}\label{commutator_intermediate_hree}
    C ( t_{j-1} , t_j , t_{j+1} ) &= \hbar \int \left( \prod_{i = 1}^{N - 1} d x_i \right) K ( x_b , t_b ; x_{N-1} , t_{N-1} ) \cdot \, \cdots \nonumber \\
    &\qquad \cdot K ( x_{j+1} , t_{j+1} ; x_{j} , t_{j} )  K ( x_j , t_j ; x_{j-1} , t_{j-1} ) \cdot \, \cdots \, \cdot K ( x_1, t_1 ; x_a, t_a ) \, .
\end{align}
We can then evaluate the remaining integral over $x_j$ using the propagator identity
\begin{align}
    \int d x_j \, K ( x_{j+1} , t_{j+1} ; x_{j} , t_{j} )  K ( x_j , t_j ; x_{j-1} , t_{j-1} ) = K ( {x_{j+1}} , t_{j+1} ; x_{j-1} , t_{j-1} ) \, ,
\end{align}
and using a similar identity for each of the other integration variables $x_k$ collapses the entire expression to a single propagator:
\begin{align}
    C ( t_{j-1} , t_j , t_{j+1} ) = \hbar K ( x_b , t_b ; x_a , t_a ) \, .
\end{align}
The result is independent of $\Delta t$ and the discretization, so we may freely take the limit $\Delta t \to 0$ so that $t_- = t_{j-1}$ approaches $t = t_j$ from below and $t_+ = t_{j+1}$ approaches $t = t_j$ from above. The result is $\hbar$ times the propagator which would have been computed by the path integral (\ref{commutator_defn}) if we had \emph{not} inserted $x ( t ) \dot{x} ( t_- ) - x ( t ) \dot{x} ( t_+ )$. This is exactly what one would expect if we had evaluated the commutator in operator language,
\begin{align}
    \langle x_b \mid [ \hat{x} ( t ) , \hat{p} ( t ) ] \mid x_a \rangle = \hbar \, \langle x_b \mid x_a \rangle \, ,
\end{align}
since $[ \hat{x} ( t ) , \hat{p} ( t ) ] = \hbar$, with no factor of $i$ because we work in Euclidean signature.

We conclude that the path integral formulation can indeed reproduce the desired commutation relations. However, for this argument it was crucial that the path integral receives contributions from paths that are non-differentiable. If all of the trajectories $x ( t )$ in the path integral were instead smoothly differentiable, then one would have
\begin{align}
    \lim_{\Delta t \to 0} \left(  x_j \frac{x_j - x_{j-1}}{\Delta t} - x_j \frac{x_{j+1} - x_j}{\Delta t} \right) = 0 \, ,
\end{align}
and thus $C ( t_- , t , t_+ ) = 0$ in the continuum limit.\footnote{See \cite{Koch:2014cma} for a discussion of other modifications to quantum mechanics that arise from restricting to differentiable paths in the path integral.}

\bibliographystyle{utphys}
\bibliography{master}

\end{document}